\def\Anonymity{0}%
\def\PrintMode{0}%
\newtheorem{theorem}{Theorem}[section]
\newtheorem{lemma}[theorem]{Lemma}
\newtheorem{corollary}[theorem]{Corollary}
\newtheorem{claim}[theorem]{Claim}
\newtheorem{fact}[theorem]{Fact}
\newtheorem{open}[theorem]{Open Problem}
\theoremstyle{definition}
\newtheorem{definition}[theorem]{Definition}
\renewcommand{\MAEXP}{\mathsf{MAEXP}}
\theoremstyle{remark}
\newtheorem{remark}[theorem]{Remark}
\newenvironment{claimproof}[1][\proofname]
{\proof[#1]}
{\endproof}
\newenvironment{reminder}[1]{\medskip
	\noindent {\bf Reminder of #1.}\em}{
	\smallskip}
\def\moverlay{\mathpalette\mov@rlay}
\def\mov@rlay#1#2{\leavevmode\vtop{%
		\baselineskip\z@skip \lineskiplimit-\maxdimen
		\ialign{\hfil$\m@th#1##$\hfil\cr#2\crcr}}}
\newcommand{\charfusion}[3][\mathord]{
	#1{\ifx#1\mathop\vphantom{#2}\fi
		\mathpalette\mov@rlay{#2\cr#3}
	}
	\ifx#1\mathop\expandafter\displaylimits\fi}
\renewcommand{\poly}{\mathrm{poly}}
\renewcommand{\polylog}{\mathrm{polylog}}
\newcommand{\eps}{\varepsilon}
\newcommand{\WT}{\widetilde}
\def\BF{\mathsf{BF}}
\newcommand{\tl}{t^{(\ell)}}
\newcommand{\nell}{n^{(\ell)}}
\newcommand{\Enc}{\mathsf{Enc}}
\newlang{\MCSP}{MCSP}
\newlang{\MFSP}{MFSP}
\newlang{\MKtP}{MKtP}
\newlang{\MKTP}{MKTP}
\newlang{\itrMCSP}{itrMCSP}
\newlang{\itrMKTP}{itrMKTP}
\newlang{\itrMINKT}{itrMINKT}
\newlang{\MINKT}{MINKT}
\newlang{\MINK}{MINK}
\newlang{\MINcKT}{MINcKT}
\newlang{\CMD}{CMD}
\newlang{\DCMD}{DCMD}
\newlang{\CGL}{CGL}
\newlang{\PARITY}{PARITY}
\renewlang{\Gap}{Gap}
\newlang{\Empty}{\textsc{Empty}}
\newlang{\Avoid}{\textsc{Avoid}}
\newlang{\Sparsification}{\textsc{Sparsification}}
\newlang{\HamEst}{\mathsf{HammingEst}}
\newlang{\HamHit}{\mathsf{HammingHit}}
\newlang{\CktEval}{\textsc{Circuit-Eval}}
\newlang{\Hard}{\textsc{Hard}}
\newlang{\cHard}{\textsc{cHard}}
\newlang{\CAPP}{CAPP}
\newlang{\GapUNSAT}{GapUNSAT}
\newlang{\OV}{OV}
\newlang{\PRIMES}{PRIMES}
\renewlang{\PCP}{PCP}
\newlang{\PCPP}{PCPP}
\newclass{\FMA}{FMA}
\newclass{\Avg}{Avg}
\newclass{\ZPEXP}{ZPEXP}
\newclass{\DLOGTIME}{DLOGTIME}
\newclass{\ALOGTIME}{ALOGTIME}
\newclass{\ATIME}{ATIME}%
\newclass{\SZKA}{SZKA}
\newclass{\Laconic}{Laconic\text{-}}
\newclass{\APEPP}{APEPP}
\newclass{\SAPEPP}{SAPEPP}
\newclass{\TFSigma}{TF\Sigma}
\newclass{\NTIMEGUESS}{NTIMEGUESS}
\newclass{\FZPP}{FZPP}
\newlang{\Formula}{Formula}
\newlang{\THR}{THR}
\newlang{\MAJ}{MAJ}
\newlang{\DOR}{DOR}
\newlang{\ETHR}{ETHR}
\newlang{\Midbit}{Midbit}
\newlang{\LCS}{LCS}
\newlang{\TAUT}{TAUT}
\newcommand{\io}{\mathrm{i.o.}\text{-}}
\renewcommand{\K}{\mathrm{K}}
\newcommand{\bitset}{\ensuremath{\{0,1\}}}
\newcommand{\bs}[1]{\ensuremath{\bitset^{#1}}}
\newcommand{\calA}{\mathcal{A}}
\newcommand{\calC}{\mathcal{C}}
\newcommand{\calI}{\mathcal{I}}
\newcommand{\N}{\mathbb{N}}
\newcommand{\F}{\mathbb{F}}
\newcommand{\Range}{\mathrm{Range}}
\newcommand{\GGM}{\mathsf{GGM}}
\newcommand{\GGMEval}{\mathsf{GGMEval}}
\newcommand{\History}{\mathsf{History}}
\newcommand{\istar}{i_{\star}}
\newcommand{\jstar}{j_{\star}}
\newcommand{\Kor}{\mathsf{Korten}}
\newcommand{\In}{\mathsf{Input}}
\newcommand{\Out}{\mathsf{Output}}
\newcommand{\Select}{\mathsf{Select}}
\newcommand{\EncHisKor}{\widetilde{\mathsf{History}}}
\newcommand{\HisKor}{\mathsf{History}}
\newcommand{\bkts}[1]{\mleft[#1\mright]}
\newcommand{\TT}{\mathsf{TT}}
\newcommand{\zeroUpto}[1]{\{0,1,\dotsc,#1\}}
\renewcommand{\tt}{\mathtt{tt}}
\definecolor{color1}{RGB}{46,134,193}
\definecolor{color2}{RGB}{20,60,100}
\definecolor{color3}{RGB}{0,0,255}
\definecolor{color4}{RGB}{255,0,0}
\definecolor{color5}{RGB}{0,255,0}
\begin{document}

\newgeometry{margin=0.9in}
\title{Symmetric Exponential Time Requires Near-Maximum Circuit Size}
\ifnum\Anonymity=0
\author{
	Lijie Chen\\ \small{UC Berkeley} \\ \small{\texttt{\href{mailto:lijiechen@berkeley.edu}{lijiechen@berkeley.edu}}}
	\and
	Shuichi Hirahara \\ \small{National Institute of Informatics} \\ \small{\texttt{\href{mailto:s\_hirahara@nii.ac.jp}{s\_hirahara@nii.ac.jp}}}
	\and
	Hanlin Ren\\ \small{University of Oxford} \\ \small{\texttt{\href{mailto:hanlin.ren@cs.ox.ac.uk}{hanlin.ren@cs.ox.ac.uk}}}
}
\fi

\maketitle

\pagenumbering{gobble}

\vspace{-0.75cm}

\begin{abstract}
	We show that there is a language in $\S_2\E/_1$ (symmetric exponential time with one bit of advice) with circuit complexity at least $2^n/n$. In particular, the above also implies the same near-maximum circuit lower bounds for the classes $\Sigma_2\E$, $(\Sigma_2\E\cap\Pi_2\E)/_1$, and $\mathsf{ZPE}^\NP/_1$. Previously, only ``half-exponential'' circuit lower bounds for these complexity classes were known, and the smallest complexity class known to require exponential circuit complexity was $\Delta_3\E = \E^{\Sigma_2\P}$ (Miltersen, Vinodchandran, and Watanabe COCOON'99).
 
	Our circuit lower bounds are corollaries of an unconditional zero-error pseudodeterministic algorithm with an $\NP$ oracle and one bit of advice ($\FZPP^\NP/_1$) that solves the range avoidance problem infinitely often. This algorithm also implies unconditional infinitely-often pseudodeterministic $\FZPP^\NP/_1$ constructions for Ramsey graphs, rigid matrices, two-source extractors, linear codes, and $\K^\poly$-random strings with nearly optimal parameters. %

	Our proofs relativize. The two main technical ingredients are (1) Korten's $\P^\NP$ reduction from the range avoidance problem to constructing hard truth tables (FOCS'21), which was in turn inspired by a result of Je\v{r}\'{a}bek on provability in Bounded Arithmetic (Ann.~Pure Appl.~Log.~2004); and (2) the recent iterative win-win paradigm of Chen, Lu, Oliveira, Ren, and Santhanam (FOCS'23).
\end{abstract}

\setcounter{tocdepth}{2}
{\small \tableofcontents}

\restoregeometry

\newpage

\pagenumbering{arabic}
\section{Introduction}

Proving lower bounds against non-uniform computation (i.e., circuit lower bounds) is one of the most important challenges in theoretical computer science. From Shannon's counting argument \cite{Shannon49,FrandsenM05}, we know that almost all $n$-bit Boolean functions have \emph{near-maximum} ($2^n/n$) circuit complexity.\footnote{All $n$-input Boolean functions can be computed by a circuit of size $(1+\frac{3\log n}{n} + O(\frac{1}{n}))2^n/n$~\cite{lupanov1958synthesis, FrandsenM05}, while most Boolean functions require circuits of size $(1+\frac{\log n}{n}-O(\frac{1}{n}))2^n/n$~\cite{FrandsenM05}. Hence, in this paper, we say an $n$-bit Boolean function has \emph{near-maximum} circuit complexity if its circuit complexity is at least $2^n/n$.} Therefore, the task of proving circuit lower bounds is simply to \emph{pinpoint} one such hard function. More formally, one fundamental question is:

\begin{quote}
	What is the smallest complexity class that contains a language of exponential ($2^{\Omega(n)}$) circuit complexity?
\end{quote}

Compared with super-polynomial lower bounds, exponential lower bounds are interesting in their own right for the following reasons. First, an exponential lower bound would make Shannon's argument \emph{fully constructive}. Second, exponential lower bounds have more applications than super-polynomial lower bounds: For example, if one can show that $\E$ has no $2^{o(n)}$-size circuits, then we would have $\text{pr}\P = \text{pr}\BPP$~\cite{NisanW94,ImpagliazzoW97}, while super-polynomial lower bounds such as $\EXP \not\subset \P/_\poly$ only imply sub-exponential time derandomization of $\text{pr}\BPP$.\footnote{$\E = \DTIME[2^{O(n)}]$ denotes \emph{single-exponential} time and $\EXP = \DTIME[2^{n^{O(1)}}]$ denotes \emph{exponential} time; classes such as $\E^\NP$ and $\EXP^\NP$ are defined analogously. Exponential time and single-exponential time are basically interchangeable in the context of super-polynomial lower bounds (by a padding argument); the exponential lower bounds proven in this paper will be stated for single-exponential time classes since this makes our results stronger. Below, $\Sigma_3\E$ and $\Pi_3\E$ denote the exponential-time versions of $\Sigma_3\P=\NP^{\NP^{\NP}}$ and $\Pi_3\P = \coNP^{\NP^{\NP}}$, respectively.}

Unfortunately, despite its importance, our knowledge about exponential lower bounds is quite limited.~Kannan \cite{Kannan82} showed that there is a function in $\Sigma_3\E\cap\Pi_3\E$ that requires maximum circuit complexity; the complexity of the hard function was later improved to $\Delta_3\E = \E^{\Sigma_2\P}$ by Miltersen, Vinodchandran, and Watanabe \cite{MiltersenVW99}, via a simple binary search argument. This is \textbf{essentially all we know} regarding exponential circuit lower bounds.\footnote{We also mention that Hirahara, Lu, and Ren \cite{HiraharaLR23} recently proved that for every constant $\varepsilon > 0$, $\BPE^\MCSP/_{2^{\varepsilon n}}$ requires near-maximum circuit complexity, where $\MCSP$ is the Minimum Circuit Size Problem \cite{KabanetsC00}. However, the hard function they constructed requires subexponentially ($2^{\varepsilon n}$) many advice bits to describe.} 

We remark that Kannan \cite[Theorem 4]{Kannan82} claimed that $\Sigma_2\E\cap\Pi_2\E$ requires exponential circuit complexity, but \cite{MiltersenVW99} pointed out a gap in Kannan's proof, and suggested that exponential lower bounds for $\Sigma_2\E\cap\Pi_2\E$ were ``reopened and considered an open problem.'' Recently, Vyas and Williams \cite{VyasW23} emphasized our lack of knowledge regarding the circuit complexity of $\Sigma_2\EXP$, even with respect to \emph{relativizing} proof techniques.
In particular, the following question has been open for at least 20 years (indeed, if we count from~\cite{Kannan82}, it would be at least 40 years):

\begin{open}\label{open:Sigma-2}
	Can we prove that $\Sigma_2\EXP \not\subset \SIZE[2^{\eps n}]$ for some absolute constant $\eps > 0$, or at least show a relativization barrier for proving such a lower bound?
\end{open}

\paragraph{The half-exponential barrier.} There is a richer literature regarding super-polynomial lower bounds than exponential lower bounds. Kannan \cite{Kannan82} proved that the class $\Sigma_2\E\cap\Pi_2\E$ does not have polynomial-size circuits. Subsequent works proved super-polynomial circuit lower bounds for exponential-time complexity classes such as $\ZPEXP^\NP$ \cite{KoblerW98, BshoutyCGKT96}, $\S_2\EXP$ \cite{CaiCHO05, Cai01a}, $\PEXP$ \cite{Vinodchandran05, Aaronson06}, and $\MAEXP$ \cite{BuhrmanFT98, Santhanam09}. 

Unfortunately, all these works fail to prove exponential lower bounds. All of their proofs go through certain \emph{Karp--Lipton} collapses \cite{KarpL80}; such a proof strategy runs into a so-called ``half-exponential barrier'', preventing us from getting exponential lower bounds. See~\autoref{sec:rel-half-exp} for a detailed discussion.

\subsection{Our Results}

\subsubsection{New near-maximum circuit lower bounds} 

In this work, we \emph{overcome} the half-exponential barrier mentioned above and resolve~\autoref{open:Sigma-2} by showing that both $\Sigma_2\E$ and $(\Sigma_2\E \cap \Pi_2\E)/_1$ require near-maximum ($2^n/n$) circuit complexity. Moreover, our proof indeed \emph{relativizes}:
\begin{theorem}\label{thm: main Sigma2}
	$\Sigma_2\E\not\subset\SIZE[2^n/n]$ and $(\Sigma_2\E\cap\Pi_2\E)/_1\not\subset\SIZE[2^n/n]$. Moreover, they hold in every relativized world.
\end{theorem}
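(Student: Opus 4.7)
The plan is to derive both lower bounds from the paper's main technical result (summarized in the abstract), namely, an infinitely-often zero-error pseudodeterministic $\FZPP^\NP/_1$ algorithm for the range avoidance problem. First, I would compose this algorithm with Korten's relativizing $\P^\NP$ reduction from the problem ``construct a truth table of circuit complexity at least $2^n/n$'' to range avoidance. Since Korten's reduction is deterministic with $\NP$ oracle queries, the composition yields a pseudodeterministic $\FZPP^\NP/_1$ procedure $A$ such that, for infinitely many $n$, with the correct advice bit $b(n)\in\set{0,1}$, $A(1^n,b(n))$ canonically outputs (with probability at least $2/3$ over its internal coins) a truth table $T_n\colon\bs{n}\to\bs{}$ whose circuit complexity is at least $2^n/n$.

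For the $(\Sigma_2\E\cap\Pi_2\E)/_1$ bound, I would define the language $L$ by $L(x)=T_n(x)$ for $x\in\bs{n}$, using advice bit $b(n)$. The classical Cai-style containment $\FZPP^\NP\subseteq\S_2^\P$ relativizes and scales: the canonical output of a pseudodeterministic $\FZPP^\NP$ machine can be certified by a symmetric-alternation protocol in which both players propose random tapes together with $\NP$-witnesses, and the zero-error pseudodeterministic guarantee forces them to agree on $T_n(x)$. Since $A$ runs in $\poly(2^n)=2^{O(n)}$ time on input $1^n$, rescaling places $L\in\S_2\E/_1\subseteq(\Sigma_2\E\cap\Pi_2\E)/_1$, and the hardness of $T_n$ for infinitely many $n$ delivers this half of the theorem.

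For the advice-free $\Sigma_2\E$ bound, I would use the existential quantifier of $\Sigma_2$ to eliminate the advice bit. Define $L'$ so that $x\in L'$ iff there exist $b\in\set{0,1}$, a random tape $r$, and $\NP$-oracle answers $w$ such that $A(1^n,b;r)$ with those oracle answers produces a non-$\bot$ truth table $T$ with $T(x)=1$ that additionally passes the $\Pi_1$ hardness test (no circuit of size $\le 2^n/n$ computes $T$). This places $L'\in\Sigma_2\E$. To conclude that $L'$ itself has near-maximum circuit complexity on good input lengths, I would arrange $A$ so that the wrong advice either outputs $\bot$ or still returns $T_n$; then $L'$ coincides with $T_n$ on those $n$. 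Every ingredient---Korten's reduction, the $\FZPP^\NP/_1$ range avoidance algorithm, and the $\FZPP^\NP\subseteq\S_2$ simulation---relativizes, so the lower bounds hold in every relativized world. The main obstacle I anticipate is precisely this advice-elimination step: one must prevent the wrongly advised canonical output from being genuinely hard yet different from $T_n$, since otherwise $L'$ could degenerate into a (possibly trivial) disjunction of two unrelated hard truth tables. Resolving this likely requires a structural property of the pseudodeterministic range-avoidance construction (for instance, designing $A$ so that the incorrect advice leads to a predictable failure output), or a more careful selector argument whose verification nonetheless fits in a single $\exists\forall$ alternation.
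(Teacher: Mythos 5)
Your proposal inverts the logical order of the paper and, in doing so, relies on a containment that is not known. You derive the lower bounds from the infinitely-often pseudodeterministic $\FZPP^\NP/_1$ range-avoidance algorithm and then invoke a ``classical Cai-style containment $\FZPP^\NP\subseteq\mathsf{F}\S_2\P$.'' Cai's theorem goes the other way ($\S_2\P\subseteq\ZPP^\NP$), and the reverse direction is open; indeed, in the paper the $\FZPP^\NP/_1$ algorithm is a \emph{corollary} of the single-valued $\mathsf{F}\S_2\P/_1$ algorithm, not a primitive from which one climbs back up. Your sketched protocol (both players propose random tapes plus $\NP$-witnesses) does not work: a dishonest player can supply a random tape together with false ``no'' answers to oracle queries, and since the two players' computations run on \emph{different} random tapes there is no first point of divergence at which a witness can adjudicate between them\textemdash the zero-error guarantee only constrains executions with \emph{correct} oracle answers. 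Overcoming exactly this obstacle is the content of the paper's Section~5 (Reed--Muller-encoded histories plus a local selector). For the specific theorem at hand the damage is repairable without that machinery, but only by a weaker route: the relativizing containment $\ZPP^\NP\subseteq\Sigma_2\P\cap\Pi_2\P$ (express acceptance as $\exists r\,[\P^\NP\text{ predicate}]$ on both sides of the zero-error machine) places your language $L$ in $(\Sigma_2\E\cap\Pi_2\E)/_1$ directly, which suffices for the second half of the statement.

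For the advice-free $\Sigma_2\E$ half, you have correctly identified the gap but not closed it: existentially quantifying over the advice bit can turn $L'$ into a union of several distinct hard truth tables (the wrongly-advised branch need not be pseudodeterministic at all, so different random tapes and transcripts can yield many different non-$\bot$ outputs, all passing the $\Pi_1$ hardness test), and a union of hard functions need not be hard. The paper does not resolve this from the black-box pseudodeterministic guarantee; it proves the stronger statement (\autoref{theo:Sigma-2-algo-for-range-avoidance}) that range avoidance for $\P$-uniform families admits a \emph{single-valued} $\mathsf{F}\Sigma_2\P$ algorithm in the sense of \autoref{def: single-valued algorithms}, whose verifier has the structural property that on \emph{every} input length and for \emph{every} first proof $\pi_1$, either some $\pi_2$ forces output $\bot$ or all $\pi_2$ force the canonical output. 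This is established by opening up the construction: the $\Pi_1$-verifiable computational history of Korten's reduction (\autoref{lemma:Kor-prop-sigma2}) guarantees that any $\pi_1$ encoding anything other than the true history chain $f_0,f_1,\dotsc$ is killed by some $\pi_2$, which is precisely the ``predictable failure on wrong advice'' you say you would need. So your plan is a genuinely different (and shorter) route for the $(\Sigma_2\E\cap\Pi_2\E)/_1$ bound once the wrong containment is replaced, but the $\Sigma_2\E$ bound as proposed has a real hole whose repair essentially requires reproving the paper's single-valued $\mathsf{F}\Sigma_2\P$ theorem rather than using the $\FZPP^\NP/_1$ algorithm as a black box.
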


Up to one bit of advice, we finally provide a proof of Kannan's original claim in~\cite[Theorem~4]{Kannan82}. Moreover, with some more work, we extend our lower bounds to the smaller complexity class $\S_2\E/_1$ (see~\autoref{def: symmetric time} for a formal definition), again with a relativizing proof:

\begin{theorem}
	$\S_2\E/_1 \not\subset\SIZE[2^n/n]$. Moreover, this holds in every relativized world.
\end{theorem}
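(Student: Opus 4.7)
The plan is to tighten the previous theorem from $(\Sigma_2\E\cap\Pi_2\E)/_1$ to $\S_2\E/_1$ by showing that the hard language constructed there in fact already lies in the smaller class $\S_2\E/_1\subseteq(\Sigma_2\E\cap\Pi_2\E)/_1$. Concretely, let $A$ be the paper's pseudodeterministic $\FZPP^\NP/_1$ range-avoidance algorithm and $T_n\in\{0,1\}^{2^n}$ its canonical output, at infinitely many lengths, on the correct advice bit; combined with Korten's reduction the language $L=\{(1^n,x):T_n[x]=1\}$ inherits the $2^n/n$ circuit lower bound, so it suffices to place $L$ in $\S_2\E/_1$.

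I would design a symmetric proof system exploiting the pseudodeterminism of $A$. On input $(1^n,x)$, each prover submits a pair $(r,W)$: random coins $r$ for $A$ together with NP witnesses for every NP-oracle query in the induced execution $A(x;r)$ whose true answer is \textsc{Yes}. The verifier simulates $A$ on each prover's claimed transcript, checks the supplied witnesses, and cross-checks the two submissions: a prover is \emph{exposed} if the opponent supplies a valid \textsc{Yes}-witness for a query the prover labeled \textsc{No}. The acceptance rule is that the yes-prover's trace must be internally valid, must survive the cross-check, and must simulate to the output bit $1$ (rejection is dual). Pseudodeterminism guarantees that any honestly-labeled run of $A$ outputs the \emph{same} $T_n$ or $\bot$, so when $T_n[x]=1$, every $r$ with $A(x;r)=1$ supplies an honest yes-prover that defeats every no-prover; the symmetric statement handles $T_n[x]=0$.

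The main obstacle is the commit-first, non-interactive structure of $\S_2$: the honest no-prover must pre-commit its attack witnesses without seeing the yes-prover's randomness, hence a priori without knowing which NP queries to attack. I expect the specific recursive structure of the paper's $\FZPP^\NP/_1$ algorithm --- built via Korten's reduction and the Chen--Lu--Oliveira--Ren--Santhanam iterative win-win paradigm --- to be essential at this step: either by bounding the set of NP queries ever issued by $A(x;\cdot)$ so the no-prover can pre-certify all of them with a polynomial-size bundle, or by combining the above with a Russell--Sundaram-style randomness-shifting argument that absorbs the NP-oracle calls into the $\S_2$ provers. The single advice bit is routed to indicate the input lengths at which $A$'s pseudodeterministic guarantee is in force. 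Most of the delicate technical work will lie in the precise specification of the cross-check protocol and the symmetric soundness analysis it requires.
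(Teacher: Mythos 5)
There is a genuine gap, and it sits exactly where you flag it: the step that turns a \emph{randomized} $\NP$-oracle computation into a commit-first $\S_2$ protocol. Your cross-check rule only exposes a prover who labels a true-\textsc{Yes} query as \textsc{No} if the opposing prover has pre-committed a witness for \emph{that} query; but the queries issued by $A(x;r)$ are adaptive in $r$ and in the answers to earlier queries, so a dishonest prover who submits its own coins $r'$ walks a query path the honest prover cannot anticipate, mislabels one true-\textsc{Yes} query on that path as \textsc{No}, and thereafter diverges arbitrarily while remaining internally consistent. Neither of your proposed rescues is available: the set of queries $A(x;\cdot)$ can ever issue is exponential (so it cannot be pre-certified in a polynomial-size bundle), and the Russell--Sundaram argument for $\P^{\NP}\subseteq\S_2\P$ relies precisely on the query path being \emph{deterministic and unique}, which is what randomness destroys. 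Carried out generically, your plan would show that single-valued $\FZPP^{\NP}$ computations lie in $\mathsf{F}\S_2\P$, i.e.\ essentially $\ZPP^{\NP}\subseteq\S_2\P$; only the converse (Cai's theorem) is known, and this containment is the actual content of the theorem, not a routine tightening. There is also a bootstrapping issue: the pseudodeterministic $\FZPP^{\NP}/_1$ algorithm you take as a starting point is itself \emph{derived} in the paper from the $\mathsf{F}\S_2\P$ construction via $\S_2\P\subseteq\ZPP^{\NP}$.

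The paper's route avoids randomized $\NP$-oracle executions altogether. The object the two provers commit to is not a transcript of a randomized run but (a generator seed for) the \emph{deterministic, unique} computational history of Korten's reduction, re-encoded under a Reed--Muller code ($\EncHisKor$). The verifier is a randomized \emph{selector}: using low-degree tests, self-correction, and a polynomial-identity-testing search for the first differing node label between the two candidate encoded histories, it locally identifies which prover holds the true history --- locality is exactly what the error-correcting encoding buys, since the first differing point of two exponentially long plain strings cannot be found in polynomial time. Only the verifier's \emph{own} coins remain, and those are removed by the standard $\S_2\BPP=\S_2\P$ XOR-shifting derandomization, which works here because by that stage no $\NP$ oracle is consulted. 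If you want to salvage your approach, you would need to replace ``coins plus query witnesses'' with a commitment to a canonical deterministic object plus a symmetric local comparison procedure --- which is, in effect, reconstructing the paper's argument.
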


\paragraph*{The symmetric time class $\S_2\E$.} $\S_2\E$ can be seen as a ``randomized'' version of $\E^\NP$ since it is sandwiched between $\E^\NP$ and $\ZPE^\NP$: it is easy to show that $\E^\NP\subseteq\S_2\E$ \cite{RussellS98}, and it is also known that $\S_2\E \subseteq \ZPE^\NP$ \cite{Cai01a}. We also note that under plausible derandomization assumptions (e.g., $\E^\NP$ requires $2^{\Omega(n)}$-size $\SAT$-oracle circuits), all three classes simply collapse to $\E^\NP$~\cite{kvm98}.

Hence, our results also imply a near-maximum circuit lower bound for the class $\ZPE^\NP/_1 \subseteq (\Sigma_2\E\cap \Pi_2\E)/_1$. This vastly improves the previous lower bound for $\Delta_3\E = \E^{\Sigma_2 \P}$. 

\begin{corollary}
	$\ZPE^{\NP}/_1 \not\subset\SIZE[2^n/n]$. Moreover, this holds in every relativized world.
\end{corollary}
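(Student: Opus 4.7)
The plan is to extract this corollary directly from the preceding theorem $\S_2\E/_1 \not\subset \SIZE[2^n/n]$ by invoking the classical inclusion of symmetric alternation inside zero-error probabilistic time with an $\NP$ oracle. Specifically, Cai \cite{Cai01a} proved $\S_2\P \subseteq \ZPP^\NP$, and the identical argument scales to exponential time to give $\S_2\E \subseteq \ZPE^\NP$ (this is even noted earlier in the introduction). Appending the same single bit of advice on both sides preserves the inclusion, so $\S_2\E/_1 \subseteq \ZPE^\NP/_1$, and any language in $\S_2\E/_1$ lacking circuits of size $2^n/n$ is automatically a language in $\ZPE^\NP/_1$ lacking circuits of size $2^n/n$.

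For the ``moreover'' part asserting relativization, I would verify that Cai's simulation is oracle-agnostic: the $\ZPP^\NP$ algorithm samples candidate witnesses for the underlying $\S_2$ predicate and uses the $\NP$ oracle only to test validity of such witnesses, so the same simulation yields $\S_2\E^A \subseteq \ZPE^{\NP^A}$ for every oracle $A$. Combining this with the relativized form of the preceding theorem, namely $\S_2\E^A/_1 \not\subset \SIZE^A[2^n/n]$ for every $A$, we obtain $\ZPE^{\NP^A}/_1 \not\subset \SIZE^A[2^n/n]$ in every relativized world, as required.

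There is no real technical obstacle: the corollary is a chain of syntactic inclusions applied to an already-proven separation. The only point that merits attention is confirming the relativization of $\S_2 \subseteq \ZPP^\NP$, but this is immediate from inspecting Cai's proof, which treats the oracle as a black box throughout. Consequently, the entire argument reduces to citing the earlier theorem and applying a relativizing containment.
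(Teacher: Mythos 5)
Your proposal is correct and matches the paper's own derivation: the corollary is obtained exactly by combining the $\S_2\E/_1$ lower bound with the (relativizing, advice-preserving) containment $\S_2\E \subseteq \ZPE^{\NP}$ from Cai's theorem, which the paper also cites for this purpose. No issues.
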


\subsubsection{New algorithms for the range avoidance problem}\label{sec:intro-new-algo-range-avoid}

\paragraph{Background on $\Avoid$.} Actually, our circuit lower bounds are implied by our new algorithms for solving the range avoidance problem ($\Avoid$)~\cite{KKMP21, Korten21, RenSW22}, which is defined as follows: given a circuit $C:\{0, 1\}^n \to \{0, 1\}^{n+1}$ as input, find a string outside the range of $C$ (we define $\Range(C) \coloneqq \{ C(z) : z \in \bs{n} \}$). That is, output any string $y\in\{0, 1\}^{n+1}$ such that for every $x\in\{0, 1\}^n$, $C(x) \ne y$.

There is a trivial $\FZPP^\NP$ algorithm solving $\Avoid$: randomly generate strings $y\in\{0, 1\}^{n+1}$ and output the first $y$ that is outside the range of $C$ (note that we need an $\NP$ oracle to verify if $y \notin \Range(C)$). The class $\APEPP$ (Abundant Polynomial Empty Pigeonhole Principle) \cite{KKMP21} is the class of total search problems reducible to $\Avoid$. 

As demonstrated by Korten \cite[Section 3]{Korten21}, $\APEPP$ captures the complexity of explicit construction problems whose solutions are guaranteed to exist by the probabilistic method (more precisely, the dual weak pigeonhole principle \cite{Krajicek01, Jerabek04}), in the sense that constructing such objects reduces to the range avoidance problem. This includes many important objects in mathematics and theoretical computer science, including Ramsey graphs \cite{Erdos59}, rigid matrices \cite{Valiant77, GuruswamiLW22, GGNS23}, two-source extractors \cite{CZ19, Li23}, linear codes \cite{GuruswamiLW22}, hard truth tables \cite{Korten21}, and strings with maximum time-bounded Kolmogorov complexity (i.e., $\K^\poly$-random strings) \cite{RenSW22}. Hence, derandomizing the trivial $\FZPP^\NP$ algorithm for $\Avoid$ would imply explicit constructions for all these important objects.

\paragraph*{Our results: new pseudodeterministic algorithms for $\Avoid$.} We show that, \emph{unconditionally}, the trivial $\FZPP^\NP$ algorithm for $\Avoid$ can be made \emph{pseudodeterministic} on infinitely many input lengths. A \emph{pseudodeterministic} algorithm \cite{GatG11} is a randomized algorithm that outputs the same \emph{canonical} answer on most computational paths. In particular, we have:

\begin{theorem}\label{cor:ZPP-P-stretch-1}
	For every constant $d\ge 1$, there is a randomized algorithm $\calA$ with an $\NP$ oracle such that the following holds for infinitely many integers $n$. For every circuit $C \colon \bs{n} \to \bs{n+1}$ of size at most $n^d$, there is a string $y_C \in \bs{n} \setminus \Range(C)$ such that $\calA(C)$ either outputs $y_C$ or $\bot$, and the probability (over the internal randomness of $\calA$) that $\calA(C)$ outputs $y_C$ is at least $2/3$. Moreover, this theorem holds in every relativized world.
\end{theorem}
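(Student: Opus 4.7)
The plan is to derive \autoref{cor:ZPP-P-stretch-1} by combining Korten's $\P^\NP$ reduction from $\Avoid$ to the hard-truth-table problem with a relativizing instantiation of the iterative win-win paradigm of Chen, Lu, Oliveira, Ren, and Santhanam. Korten's theorem supplies a deterministic $\P^\NP$ machine $R$ which, on input a circuit $C\colon\bs{n}\to\bs{n+1}$ of size at most $n^d$ together with a truth table $f\in\bs{N}$ on $m=\log N=O(\log n)$ bits whose circuit complexity exceeds a suitable threshold $s=s(n,d)=\poly(n)$, outputs a string in $\bs{n+1}\setminus\Range(C)$. So it suffices, for every constant $d$, to design a randomized $\NP$-oracle algorithm that on infinitely many input lengths $n$ pseudodeterministically produces such a hard truth table; composing with the deterministic $R$ preserves both pseudodeterminism and relativization.

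To build the pseudodeterministic constructor, I would introduce a sequence of length scales $n^{(1)}<n^{(2)}<\cdots<n^{(L)}$ together with truth-table lengths $N^{(\ell)}$ and hardness thresholds $s^{(\ell)}$, and designate at each level a \emph{canonical} target---for concreteness, the lexicographically smallest truth table on $\log N^{(\ell)}$ bits whose circuit complexity is $\ge s^{(\ell)}$. The algorithm proceeds level by level: it samples candidates, certifies hardness via the $\NP$ oracle, and attempts to isolate the canonical target at level $\ell$; upon success it feeds the output to $R$ and halts.

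The heart of the argument is a level-by-level dichotomy: at each level $\ell$, either the procedure \emph{succeeds} in outputting the canonical target on at least a $2/3$-fraction of random tapes (so the algorithm halts correctly at input length $n$), or it \emph{fails}, and a relativizing transformation converts the failure into a certificate that a neighboring level succeeds at a strictly larger input length $n'$. Chaining the implication, a failure on every input length would yield an infinite strictly-increasing chain of witnesses, contradicting an elementary counting argument (for instance, the abundance of hard truth tables at the top scale given by Shannon's theorem). Hence the algorithm must succeed on infinitely many $n$, yielding the desired pseudodeterministic $\FZPP^\NP$ algorithm for $\Avoid$.

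The main obstacle will be calibrating the length scales $n^{(\ell)}$ and hardness thresholds $s^{(\ell)}$ so that the canonical object at each level is simultaneously (i) pseudodeterministically identifiable via $\NP$ queries when the level succeeds and (ii) recoverable as a useful reconstruction seed when the level fails, all while keeping the entire win-win chain relativizing (which rules out arithmetization-based hardness arguments). Maintaining the pseudodeterminism guarantee through level transitions is particularly delicate, since the canonical hard function at level $\ell$ is naturally defined in terms of outputs of lower levels, and yet must still be reproducible on a $\ge 2/3$-fraction of random tapes.
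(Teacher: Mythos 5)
Your high-level skeleton matches the paper's: reduce $\Avoid$ to constructing hard truth tables via Korten's $\P^\NP$ reduction, then build a pseudodeterministic constructor for hard truth tables by an iterative win-win over a tower of input lengths. But the core of your construction has a gap that the paper's machinery exists precisely to avoid. You take the canonical object at each level to be the \emph{lexicographically smallest} hard truth table and propose to ``isolate'' it by sampling candidates and certifying hardness with the $\NP$ oracle. Certifying that a \emph{given} truth table is hard is a single $\coNP$ query and is fine; but isolating the lex-first hard truth table requires deciding, prefix by prefix, whether \emph{some} hard truth table extends a given prefix, which is a $\Sigma_2$ question. This is exactly the $\P^{\Sigma_2\P}$ construction of Miltersen--Vinodchandran--Watanabe, and there is no known way to carry it out in $\FZPP^\NP$; if you could, you would get the near-maximum $\ZPE^\NP$ lower bound directly with none of the paper's apparatus. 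Sampling random candidates and checking hardness gives different hard strings on different random tapes, so it is not pseudodeterministic, and nothing in your proposal supplies the missing canonicalization.

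The paper's actual mechanism is different in the two places where your sketch is vague. First, the canonical object is not the lex-first hard string but the output of $\Kor(C_{n_k},f_k)$, where $f_0$ is the concatenation of all $2n_0$-bit strings and $f_{i+1}$ is the \emph{computational history} of $\Kor(C_{n_i},f_i)$; the win-or-improve dichotomy is on whether this history admits a small circuit. In the Win case, an easy-witness-style argument plus a \emph{local} verification of the history (a $\Pi_1$ verifier for the $\Sigma_2$ result, and a randomized selector over a Reed--Muller-encoded history for the $\S_2\P$/$\ZPP^\NP$ result) lets a single-valued $\mathsf{F}\S_2\P$ algorithm reproduce the canonical output in polynomial time; in the Improve case the history itself is the hard string fed to the next level. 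Second, the termination argument is not a counting contradiction from an infinite chain of witnesses: the chain has length only $O(\log n_0)$ because $T_{i+1}=\poly(T_i)$ grows more slowly than $n_{i+1}=n_i^{\beta}$, so eventually $T_t\le\poly(n_t)$ and the level-$t$ algorithm is already an $\FP^\NP$ computation. Finally, the $2/3$-probability canonical-output guarantee is obtained by first building a single-valued $\mathsf{F}\S_2\P$ algorithm and then invoking $\S_2\P\subseteq\ZPP^\NP$, rather than by directly randomizing the constructor. Without the history-based win-or-improve step and its local verifiability, your dichotomy has no implementable ``Win'' branch, and the proof does not go through.
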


As a corollary, for every problem in $\APEPP$, we obtain zero-error pseudodeterministic constructions with an $\NP$ oracle and one bit of advice ($\FZPP^\NP/_1$) that works infinitely often\footnote{The one-bit advice encodes whether our algorithm succeeds on a given input length; it is needed since on bad input lengths, our algorithm might not be pseudodeterministic (i.e., there may not be a canonical answer that is outputted with high probability).}:

\begin{corollary}[Informal]
	There are infinitely-often zero-error pseudodeterministic constructions for the following objects with an $\NP$ oracle and one-bit of advice: Ramsey graphs, rigid matrices, two-source extractors, linear codes, hard truth tables, and $\K^\poly$-random strings.
\end{corollary}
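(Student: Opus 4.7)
The plan is to combine Theorem~\ref{cor:ZPP-P-stretch-1} with the known deterministic polynomial-time reductions from each listed explicit-construction problem to $\Avoid$. First I would recall that, by Korten's work and by subsequent results of Ren--Santhanam--Wang, Guruswami--Lyu--Wang, and others, each of the objects mentioned---Ramsey graphs, rigid matrices, two-source extractors, linear codes, hard truth tables, and $\K^\poly$-random strings at parameter $m$---admits a deterministic polynomial-time reduction to $\Avoid$: on input $1^m$, one builds a circuit $C_m\colon\bs{n(m)}\to\bs{n(m)+1}$ together with a polynomial-time decoder $D_m$ such that for every $y\in\bs{n(m)+1}\setminus\Range(C_m)$, the string $D_m(y)$ is a valid object of the desired type at parameter $m$. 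For hard truth tables and $\K^\poly$-random strings the reduction is essentially by definition; for the other objects it is the content of the cited works.

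Next, I would feed the circuit $C_m$ into the pseudodeterministic $\FZPP^\NP$ algorithm $\calA$ of Theorem~\ref{cor:ZPP-P-stretch-1}. By that theorem, for infinitely many input lengths $n$ there is a canonical string $y_{C_m}\in\bs{n+1}\setminus\Range(C_m)$ that $\calA(C_m)$ outputs with probability at least $2/3$, while otherwise outputting $\bot$. Applying the (deterministic) decoder $D_m$ on top of $\calA(C_m)$ yields a canonical valid object at parameter $m$, giving an infinitely-often zero-error pseudodeterministic $\FZPP^\NP$ construction of the object.

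The one bit of advice $a_m\in\bitset$ indicates whether $m$ is a \emph{good} parameter, i.e., whether $n(m)$ is one of the infinitely many input lengths on which $\calA$ is pseudodeterministic on $C_m$. On good lengths we set $a_m=1$ and run the above pipeline $D_m\circ\calA(C_m)$; on bad lengths we set $a_m=0$ and output $\bot$. Since $n(m)$ is polynomially bounded and (without loss of generality, after padding) non-decreasing in $m$, the preimage under $n(\cdot)$ of the infinite set of good $n$'s is itself infinite, so the construction succeeds on infinitely many $m$, as required for the $\FZPP^\NP/_1$ guarantee.

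The main point to watch out for is preservation of pseudodeterminism under the reduction. If the reduction from the construction problem to $\Avoid$ were randomized, independent executions could yield different circuits $C_m$, and hence different canonical avoiders $y_{C_m}$, destroying the pseudodeterministic property of the composed algorithm. Fortunately all the reductions cited above are deterministic, so $C_m$ is a function of $m$ alone and the canonical answer $D_m(y_{C_m})$ is well-defined. The remaining bookkeeping---padding parameters to align with the good input lengths for $\Avoid$, and verifying that $D_m$ indeed outputs bona fide Ramsey graphs, rigid matrices, etc.---is routine and does not affect pseudodeterminism.
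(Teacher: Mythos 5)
Your proposal is correct and matches the paper's (implicit) argument: the corollary is obtained exactly by composing the pseudodeterministic $\FZPP^\NP/_1$ range-avoidance algorithm with the known reductions placing each construction problem in $\APEPP$, and your observation that determinism of the reduction is what preserves pseudodeterminism is the right point to flag. The only refinement is that some of the decoders (e.g., Korten's) are $\FP^\NP$ rather than plain $\FP$, but this is still deterministic given the oracle and is handled by the paper's post-processing lemma (\autoref{theo:S-2-and-PNP}), so nothing breaks.
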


Actually, we obtain single-valued $\mathsf{F}\S_2\P/_1$ algorithms for the explicit construction problems above (see \autoref{def: single-valued algorithms}), and the pseudodeterministic $\FZPP^\NP/_1$ algorithms follow from Cai's theorem that $\S_2\P\subseteq \ZPP^\NP$ \cite{Cai01a}. We stated them as pseudodeterministic $\FZPP^\NP/_1$ algorithms since this notion is better known than the notion of single-valued $\mathsf{F}\S_2\P/_1$ algorithms.

\autoref{cor:ZPP-P-stretch-1} is tantalizingly close to an infinitely-often $\FP^\NP$ algorithm for $\Avoid$ (with the only caveat of being \emph{zero-error} instead of being completely \emph{deterministic}). However, since an $\FP^\NP$ algorithm for range avoidance would imply near-maximum circuit lower bounds for $\E^\NP$, we expect that it would require fundamentally new ideas to completely derandomize our algorithm. Previously, Hirahara, Lu, and Ren \cite[Theorem 36]{HiraharaLR23} presented an infinitely-often pseudodeterministic $\FZPP^\NP$ algorithm for the range avoidance problem using $n^\varepsilon$ bits of advice, for any small constant $\varepsilon > 0$. Our result improves the above in two aspects: first, we reduce the number of advice bits to $1$; second, our techniques relativize but their techniques do not.

\paragraph*{Lower bounds against non-uniform computation with maximum advice length.} Finally, our results also imply lower bounds against non-uniform computation with maximum advice length. We mention this corollary because it is a stronger statement than circuit lower bounds, and similar lower bounds appeared recently in the literature of super-fast derandomization \cite{ChenT21}. %

\begin{corollary}
	For every $\alpha(n) \ge \omega(1)$ and any constant $k \ge 1$, $\S_2\E/_1 \not\subset \TIME[2^{kn}]/_{2^n - \alpha(n)}$. The same holds for $\Sigma_2\E$, $(\Sigma_2\E\cap\Pi_2\E)/_1$, and $\ZPE^{\NP}/_1$ in place of $\S_2\E/_1$. Moreover, this holds in every relativized world.
\end{corollary}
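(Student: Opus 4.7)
The plan is to reduce this advice-length lower bound to the stretch-1 range avoidance problem already solved by \autoref{cor:ZPP-P-stretch-1}. Fix $\alpha(n) = \omega(1)$ and a constant $k \ge 1$, and let $U$ be a fixed universal Turing machine that simulates arbitrary time-$2^{kn}$ machines in time $2^{(k+1)n}$. By prepending the $O(1)$-bit description of the underlying machine to its advice, every $L \in \TIME[2^{kn}]/_{2^n - \alpha(n)}$ is, for all sufficiently large $n$, decided by $U$ using advice of length at most $m(n) \coloneqq 2^n - \alpha(n)/2$. For each $b \in \bs{m(n)}$ let $t_{n,b} \in \bs{2^n}$ denote the truth table of $U(\cdot, b)$ on $n$-bit inputs, and set $S_n \coloneqq \set{t_{n,b} : b \in \bs{m(n)}}$. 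Then $|S_n| \le 2^{m(n)}$, $S_n$ contains the length-$n$ slice of every $L \in \TIME[2^{kn}]/_{2^n - \alpha(n)}$ once $n$ is large enough, and $S_n$ is exactly the range of an explicit circuit $C_n \colon \bs{m(n)} \to \bs{2^n}$ of size $2^{O(n)}$ that, on input $b$, runs $U(\cdot, b)$ on every length-$n$ input in parallel and concatenates the outputs. Crucially, $C_n$ is constructible from $n$ in time $2^{O(n)}$.

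Since $m(n) \le 2^n - 1$, padding $C_n$ on the input side yields a stretch-1 circuit $C_n' \colon \bs{2^n - 1} \to \bs{2^n}$ of size $\poly(2^n)$ with $\Range(C_n') = S_n$. Applying \autoref{cor:ZPP-P-stretch-1}---more precisely, its single-valued $\mathsf{F}\S_2\P/_1$ refinement noted right after Corollary 1.5---to the family $\set{C_n'}_n$ produces a canonical $y_n \in \bs{2^n} \setminus S_n$ for infinitely many $n$, with the single advice bit marking these good lengths. Define the diagonal language
\[
    L^* \coloneqq \set{x \in \bs{*} : y_{|x|}[\bin(x)] = 1}\mathcomma
\]
where $\bin(x)$ interprets $x$ as an index in $\zeroUpto{2^{|x|} - 1}$. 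On input $x \in \bs{n}$ we construct $C_n'$ in time $2^{O(n)}$, invoke the $\S_2\P/_1$ procedure on this size-$\poly(2^n)$ instance in time $2^{O(n)}$, and read off the relevant output bit; this places $L^* \in \S_2\E/_1$.

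For any candidate $L \in \TIME[2^{kn}]/_{2^n - \alpha(n)}$, at every sufficiently large good length $n$ the length-$n$ slice of $L$ lies in $S_n$ while that of $L^*$ equals $y_n \notin S_n$, so $L \ne L^*$; since there are infinitely many good $n$, $L^* \notin \TIME[2^{kn}]/_{2^n - \alpha(n)}$. The extensions to $(\Sigma_2\E \cap \Pi_2\E)/_1$ and $\ZPE^\NP/_1$ are immediate from the inclusions $\S_2\E/_1 \subseteq (\Sigma_2\E \cap \Pi_2\E)/_1 \cap \ZPE^\NP/_1$, while the advice-free $\Sigma_2\E$ bound uses the same advice-elimination argument that promotes the $\S_2\E/_1$ lower bound to $\Sigma_2\E \not\subset \SIZE[2^n/n]$ in \autoref{thm: main Sigma2}. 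Relativization is inherited because $C_n$ carries oracle gates implementing $U^O$ and \autoref{cor:ZPP-P-stretch-1} is proved relative to any oracle. The main point I would watch is synchronizing the ``good lengths'' of the underlying avoidance algorithm with input sizes of the form $2^n - 1$, so that the canonical $y_n$'s exist for infinitely many integer $n$; this is the same issue already handled by the hard-truth-table construction behind Corollary 1.5 and introduces no new difficulty.
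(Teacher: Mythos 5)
Your construction is the same one the paper intends (the paper's entire justification is the phrase ``by letting $C_n$ be a universal Turing machine mapping $n$ bits to $n+1$ bits''): encode (machine, advice) pairs as the domain of a stretch-1 circuit whose range contains every truth table achievable in $\TIME[2^{kn}]$ with $2^n-\alpha(n)$ bits of advice, run the single-valued $\mathsf{F}\S_2\P/_1$ avoidance algorithm, and diagonalize. The diagonalization, the advice accounting, the handling of the four complexity classes, and relativization are all fine.

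The issue is the point you defer at the end: the length synchronization is \emph{not} ``the same issue already handled'' and does not disappear for free. Neither of the paper's black-box guarantees applies to your family $\{C_n'\colon\bs{2^n-1}\to\bs{2^n}\}$. \autoref{cor:S-2-algo-input-power-2-stretch-1} only promises success at input lengths that are powers of two, and $2^n-1$ never is one; extending your family by trivial circuits at the other lengths leaves the good lengths and the meaningful instances disjoint. The arbitrary-circuit version (\autoref{cor:ZPP-P-stretch-1} and its $\mathsf{F}\S_2\P/_1$ form) is parameterized by the circuit \emph{size}: its good sizes are the integers $s$ with $\lceil\log s^3\rceil$ a good length of the underlying hard-truth-table generator, i.e.\ sparse multiplicative intervals of ratio $2^{1/3}$, whereas your natural sizes grow like $2^{\Theta(n)}$ between consecutive $n$ and can miss every such interval; padding $C_n'$ into a good interval would require knowing which intervals are good, which is more than one bit of advice. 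The standard repair is to route the surplus input bits to the output instead of discarding them: define $D_N\colon\bs{N}\to\bs{N+1}$ for $N=2^n$ by $D_N(b,c)=\bigl(\mathrm{tt}(U(\cdot,b)),c_0\bigr)$, where $b$ is the (machine, advice) block and $c$ the remaining $\alpha(n)-O(1)\ge 1$ bits. Then $\Range(D_N)=S_n\times\bs{}$, so any $y\notin\Range(D_N)$ still yields $y_{[0,2^n)}\notin S_n$, the family is $\P$-uniform, and its input length $2^n$ is exactly a power of two, so \autoref{cor:S-2-algo-input-power-2-stretch-1} applies verbatim. With that one-line modification the rest of your argument goes through.
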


\subsection{Intuitions}\label{sec:intuitions}

In the following, we present some high-level intuitions for our new circuit lower bounds. 

\subsubsection{Perspective: single-valued constructions}

A key perspective in this paper is to view circuit lower bounds (for exponential-time classes) as \emph{single-valued} constructions of hard truth tables. This perspective is folklore; it was also emphasized in recent papers on the range avoidance problem \cite{Korten21, RenSW22}.

\def\Pihard{\Pi_{\rm hard}}
Let $\Pi \subseteq \{0, 1\}^\star$ be an \emph{$\eps$-dense} property, i.e., for every integer $N\in\N$, $|\Pi_N| \ge \eps\cdot 2^N$. (In what follows, we use $\Pi_N := \Pi \cap \{0, 1\}^N$ to denote the length-$N$ slice of $\Pi$.) As a concrete example, let $\Pihard$ be the set of hard truth tables, i.e., a string $tt \in \Pihard$ if and only if it is the truth table of a function $f:\{0, 1\}^n \to \{0, 1\}$ whose circuit complexity is at least $2^n/n$, where $n := \log N$. (We assume that $n := \log N$ is an integer.) Shannon's argument \cite{Shannon49, FrandsenM05} shows that $\Pihard$ is a $1/2$-dense property. We are interested in the following question:
\begin{quote}
	What is the complexity of \emph{single-valued} constructions for any string in $\Pihard$?
\end{quote}

Here, informally speaking, a computation is \emph{single-valued} if each of its computational paths either fails or outputs the \emph{same} value. For example, an $\NP$ machine $M$ is a single-valued construction for $\Pi$ if there is a ``canonical'' string $y \in \Pi$ such that (1) $M$ outputs $y$ on every accepting computational path; (2) $M$ has at least one accepting computational path. (That is, it is an $\mathsf{NPSV}$ construction in the sense of \cite{BookLS85, FHOS93, Selman94, HNOS96}.) Similarly, a $\BPP$ machine $M$ is a single-valued construction for $\Pi$ if there is a ``canonical'' string $y\in\Pi$ such that $M$ outputs $y$ on most (say $\ge 2/3$ fraction of) computational paths. (In other words, single-valued $\ZPP$ and $\BPP$ constructions are another name for \emph{pseudodeterministic constructions} \cite{GatG11}.)\footnote{Note that the trivial construction algorithms are not single-valued in general. For example, a trivial $\Sigma_2\P = \NP^\NP$ construction algorithm for $\Pihard$ is to guess a hard truth table $tt$ and use the $\NP$ oracle to verify that $tt$ does not have size-$N/\log N$ circuits; however, different accepting computational paths of this computation would output different hard truth tables. Similarly, a trivial $\BPP$ construction algorithm for every dense property $\Pi$ is to output a random string, but there is no \emph{canonical} answer that is outputted with high probability. In other words, these construction algorithms do not \emph{define} anything; instead, a single-valued construction algorithm should \emph{define} some particular string in $\Pi$.}

Hence, the task of proving circuit lower bounds is equivalent to the task of \emph{defining}, i.e., single-value constructing, a hard function, in the smallest possible complexity class. For example, a single-valued $\BPP$ construction (i.e., pseudodeterministic construction) for $\Pihard$ is equivalent to the circuit lower bound $\BPE \not\subset\io\SIZE[2^n/n]$.\footnote{To see this, note that (1) $\BPE \not\subset\io\SIZE[2^n/n]$ implies a simple single-valued $\BPP$ construction for $\Pihard$: given $N = 2^n$, output the truth table of $L_n$ ($L$ restricted to $n$-bit inputs), where $L \in \BPE$ is the hard language not in $\SIZE[2^n/n]$; and (2) assuming a single-valued $\BPP$ construction $A$ for $\Pihard$, one can define a hard language $L$ such that the truth table of $L_n$ is the output of $A(1^{2^n})$, and observe that $L \in \BPE$.} In this regard, the previous near-maximum circuit lower bound for $\Delta_3\E := \E^{\Sigma_2\P}$ \cite{MiltersenVW99} can be summarized in one sentence: The lexicographically first string in $\Pihard$ can be constructed in $\Delta_3\P := \P^{\Sigma_2\P}$ (which is necessarily single-valued).

\paragraph*{Reduction to $\Avoid$.} It was observed in \cite{KKMP21, Korten21} that explicit construction of elements from $\Pi_{\rm hard}$ is a special case of range avoidance: Let $\TT \colon \bs{N-1} \to \bs{N}$ (here $N = 2^n$) be a circuit that maps the description of a $2^n/n$-size circuit into its $2^n$-length truth table (by~\cite{FrandsenM05}, this circuit can be encoded by $N-1$ bits). Hence, a single-valued algorithm solving $\Avoid$ for $\TT$ is equivalent to a single-valued construction for $\Pi_{\rm hard}$. This explains how our new range avoidance algorithms imply our new circuit lower bounds (as mentioned in~\autoref{sec:intro-new-algo-range-avoid}).

In the rest of~\autoref{sec:intuitions}, we will only consider the special case of $\Avoid$ where the input circuit for range avoidance is a $\P$-uniform circuit family. Specifically, let $\{C_n \colon \{0, 1\}^n \to \{0, 1\}^{2n}\}_{n \in \N}$ be a $\P$-uniform family of circuits, where $|C_n| \le \poly(n)$.\footnote{We assume that $C_n$ stretches $n$ bits to $2n$ bits instead of $n+1$ bits for simplicity; Korten \cite{Korten21} showed that there is a $\P^\NP$ reduction from the range avoidance problem with stretch $n+1$ to the range avoidance problem with stretch $2n$.} Our goal is to find an algorithm $A$ such that for infinitely many $n$, $A(1^n) \in \bs{2n} \setminus \Range(C_n)$; see~\autoref{sec:S2E:S2E-lb} and~\autoref{sec:S2E:range-avoid-algo} for how to turn this into an algorithm that works for arbitrary input circuit with a single bit of stretch. Also, since from now on we will not talk about truth tables anymore, we will use $n$ instead of $N$ to denote the input length of $\Avoid$ instances.

\subsubsection{The iterative win-win paradigm of~\texorpdfstring{\cite{ChenLORS23}}{CLORS23}}\label{sec:intro-iter-win-win}

In a recent work,  Chen, Lu, Oliveira, Ren, and Santhanam \cite{ChenLORS23} introduced the \emph{iterative win-win} paradigm for explicit constructions, and used that to obtain a polynomial-time pseudodeterministic construction of primes that works infinitely often. Since our construction algorithm closely follows their paradigm, it is instructive to take a detour and give a high-level overview of how the construction from~\cite{ChenLORS23} works.\footnote{Indeed, for every $1/\poly(n)$-dense property $\Pi \in \P$, they obtained a polynomial-time algorithm $A$ such that for infinitely many $n \in \N$, there exists $y_n \in \Pi_n$ such that $A(1^n)$ outputs $y_n$ with probability at least $2/3$. By~\cite{Agrawal02primesis} and the prime number theorem, the set of $n$-bit primes is such a property.}

In this paradigm, for a (starting) input length $n_0$ and some $t = O(\log n_0)$, we will consider an increasing sequence of input lengths $n_0,n_1,\dotsc,n_t$ (jumping ahead, we will set $n_{i+1} = n_i^\beta$ for a large constant $\beta$), and show that our construction algorithm succeeds on at least one of the input lengths. By varying $n_0$, we can construct infinitely many such sequences of input lengths that are pairwise disjoint, and therefore our algorithm succeeds on infinitely many input lengths. 

\newcommand{\ALG}{\mathsf{ALG}}

In more detail, fixing a sequence of input lengths $n_0,n_1,\dotsc,n_t$ and letting $\Pi$ be an $\eps$-dense property, for each $i \in \zeroUpto{t}$, we specify a (deterministic)  algorithm $\ALG_i$ that takes $1^{n_i}$ as input and aims to construct an explicit element from $\Pi_{n_i}$. We let $\ALG_0$ be the simple brute-force algorithm that enumerates all length-$n_0$ strings and finds the lexicographically first string in $\Pi_{n_0}$; it is easy to see that $\ALG_0$ runs in $T_0 := 2^{O(n_0)}$ time.

\paragraph*{The win-or-improve mechanism.} The core of~\cite{ChenLORS23} is a novel \emph{win-or-improve mechanism}, which is described by a (randomized) algorithm $R$. Roughly speaking, for input lengths $n_i$ and $n_{i+1}$, $R(1^{n_i})$ attempts to \emph{simulate $\ALG_i$ faster by using the oracle $\Pi_{n_{i+1}}$} (hence it runs in $\poly(n_{i+1})$ time). The crucial property is the following win-win argument:

\begin{enumerate}[(\textbf{Win})]
    \item Either $R(1^{n_i})$ outputs  $\ALG_{i}(1^{n_i})$ with probability at least $2/3$ over its internal randomness,\label{quote: win-win in CLORS}
    \item[(\textbf{Improve})] or, from the failure of $R(1^{n_i})$, we can construct an algorithm $\ALG_{i+1}$ that outputs an explicit element from $\Pi_{n_{i+1}}$ and runs in $T_{i+1} = \poly(T_i)$ time.
\end{enumerate}

\def\RefArg{\hyperref[quote: win-win in CLORS]{(Win-or-Improve)}\xspace}

We call the above \RefArg, since either we have a pseudodeterministic algorithm $R(1^{n_i})$ that constructs an explicit element from $\Pi_{n_i}$ in $\poly(n_{i+1}) \le \poly(n_i)$ time (since it simulates $\ALG_i$), or we have an \emph{improved} algorithm $\ALG_{i+1}$ at the input length $n_{i+1}$ (for example, on input length $n_1$, the running time of $\ALG_1$ is $2^{O\mleft(n_{1}^{1/\beta}\mright)} \ll 2^{O(n_1)}$). The \RefArg part in~\cite{ChenLORS23} is implemented via the Chen--Tell targeted hitting set generator \cite{ChenT21b} (we omit the details here). Jumping ahead, in this paper, we will implement a similar mechanism using Korten's $\P^\NP$ reduction from the range avoidance problem to constructing hard truth tables \cite{Korten21}.

\paragraph*{Getting polynomial time.} Now we briefly explain why \RefArg implies a \emph{polynomial-time} construction algorithm. Let $\alpha$ be an absolute constant such that we always have $T_{i+1} \le T_i^\alpha$; we now set $\beta := 2 \alpha$. Recall that $n_i = n_{i-1}^\beta$ for every $i$. The crucial observation is the following:
\begin{quote}
	Although $T_0$ is much larger than $n_0$, the sequence $\{T_i\}$ grows slower than $\{n_i\}$.
\end{quote}
Indeed, a simple calculation shows that when $t = O(\log n_0)$, we will have $T_t \le \poly(n_t)$; see \cite[Section 1.3.1]{ChenLORS23}.

For each $0\le i < t$, if $R(1^{n_i})$ successfully simulates $\ALG_i$, then we obtain an algorithm for input length $n_{i}$ running in $\poly(n_{i+1}) \le \poly(n_i)$ time. Otherwise, we have an algorithm $\ALG_{i+1}$ running in $T_{i+1}$ time on input length $n_{i+1}$. Eventually, we will hit $t$ such that $T_{t} \le \poly(n_{t})$, in which case $\ALG_t$ itself gives a polynomial-time construction on input length $n_t$. Therefore, we obtain a polynomial-time algorithm on at least one of the input lengths $n_0,n_1,\dotsc,n_t$.%

\subsubsection{Algorithms for range-avoidance via Korten's reduction}\label{sec:int:algo-range-avoid}

Now we are ready to describe our new algorithms for $\Avoid$. Roughly speaking, our new algorithm makes use of the iterative win-win argument introduced above, together with an easy-witness style argument~\cite{ImpagliazzoKW02} and Korten's reduction~\cite{Korten21}.\footnote{Korten's result was inspired by~\cite{Jerabek04}, which proved that the dual weak pigeonhole principle is equivalent to the statement asserting the existence of Boolean functions with exponential circuit complexity in a certain fragment of Bounded Arithmetic.} In the following, we introduce the latter two ingredients and show how to chain them together via the iterative win-win argument.

\def\hBF{h_{\sf BF}}

\paragraph*{An easy-witness style argument.} Let $\BF$ be the $2^{O(n)}$-time brute-force algorithm outputting the lexicographically first non-output of $C_{n}$. Our first idea is to consider its \emph{computational history}, a unique $2^{O(n)}$-length string $\hBF$ (that can be computed in $2^{O(n)}$ time), and \emph{branch on whether $\hBF$ has a small circuit or not}. Suppose $\hBF$ admits a, say, $n^{\alpha}$-size circuit for some large $\alpha$, then we apply an \emph{easy-witness-style} argument~\cite{ImpagliazzoKW02} to simulate $\BF$ by a single-valued $\mathsf{F}\Sigma_2\P$ algorithm running in $\poly(n^\alpha) = \poly(n)$ time (see \autoref{sec:int:comp-history}). Hence, we obtained the desired algorithm when $\hBF$ is easy.

However, it is less clear how to deal with the other case (when $\hBF$ is hard) directly. The crucial observation is that we have gained the following ability: we can generate a string $\hBF \in \bs{2^{O(n)}}$ that has circuit complexity at least $n^{\alpha}$, in only $2^{O(n)}$ time.

\paragraph*{Korten's reduction.} We will apply Korten's recent work~\cite{Korten21} to make use of the ``gain'' above. So it is worth taking a detour to review the main result of~\cite{Korten21}.
Roughly speaking,~\cite{Korten21} gives \textbf{an algorithm that uses a hard truth table $f$ to solve a derandomization task: finding a non-output of the given circuit (that has more output bits than input bits).}\footnote{This is very similar to the classical hardness-vs-randomness connection~\cite{NisanW94,ImpagliazzoW97}, which can be understood as an algorithm that uses a hard truth table $f$ (i.e., a truth table without small circuits) to solve another derandomization task: estimating the acceptance probability of the given circuit. This explains why one may want to use Korten's algorithm to replace the Chen--Tell targeted generator construction~\cite{ChenT21b} from~\cite{ChenLORS23}, as they are both hardness-vs-randomness connections.}

Formally,~\cite{Korten21} gives a $\P^\NP$-computable algorithm $\Kor(C, f)$ that takes as inputs a circuit $C \colon \bs{n} \to \bs{2n}$ and a string $f \in \bs{T}$ (think of $n \ll T$), and outputs a string $y \in \bs{2n}$. The guarantee is that if the circuit complexity of $f$ is sufficiently larger than the size of $C$, then the output $y$ is not in the range of $C$.

This fits perfectly with our ``gain'' above: for $\beta \ll \alpha$ and $m = n^{\beta}$, $\Kor(C_{m},\hBF)$ solves $\Avoid$ for $C_m$ since the circuit complexity of $\hBF$, $n^\alpha$, is sufficiently larger than the size of $C_m$. Moreover, $\Kor(C_{m},\hBF)$ runs in only $2^{O(n)}$ time, which is much less than the brute-force running time $2^{O(m)}$. Therefore, we obtain an improved algorithm for $\Avoid$ on input length $m$.

\paragraph*{The iterative win-win argument.} What we described above is essentially the first stage of an \emph{win-or-improve mechanism} similar to that from~\autoref{sec:intro-iter-win-win}. Therefore, we only need to iterate the argument above to obtain a polynomial-time algorithm.%

For this purpose, we need to consider the computational history of not only $\BF$, but also algorithms of the form $\Kor(C, f)$.\footnote{Actually, we need to consider all algorithms $\ALG_i$ defined below and prove the properties of computational history for these algorithms. It turns out that all of $\ALG_i$ are of the form $\Kor(C, f)$ (including $\ALG_0$), so in what follows we only consider the computational history of $\Kor(C, f)$.} For any circuit $C$ and ``hard'' truth table $f$, there is a \emph{unique} ``computational history'' $h$ of $\Kor(C, f)$, and the length of $h$ is upper bounded by $\poly(|f|)$. We are able to prove the following statement akin to the \emph{easy witness lemma} \cite{ImpagliazzoKW02}: if $h$ admits a size-$s$ circuit (think of $s \ll T$), then $\Kor(C, f)$ can be simulated by a single-valued $\mathsf{F}\Sigma_2\P$ algorithm in time $\poly(s)$; see~\autoref{sec:int:comp-history} for details on this argument.\footnote{With an ``encoded'' version of history and more effort, we are able to simulate $\Kor(C, f)$ by a single-valued $\mathsf{F}\S_2\P$ algorithm in time $\poly(s)$, and that is how our $\S_2\E$ lower bound is proved; see~\autoref{sec:int:S2P-algo-range-avoidance} for details.}

Now, following the iterative win-win paradigm of~\cite{ChenLORS23}, for a (starting) input length $n_0$ and some $t = O(\log n_0)$, we consider an increasing sequence of input lengths $n_0,n_1,\dotsc,n_t$, and show that our algorithm $A$ succeeds on at least one of the input lengths (i.e., $A(1^{n_i}) \in \bs{2 n_i} \setminus \Range(C_{n_i})$ for some $i \in \zeroUpto{t}$). For each $i \in \zeroUpto{t}$, we specify an algorithm $\ALG_i$ of the form $\Kor(C_{n_i}, -)$ that aims to solve $\Avoid$ for $C_{n_i}$; in other words, we specify a string $f_i \in \{0, 1\}^{T_i}$ for some $T_i$ and let $\ALG_i := \Kor(C_{n_i}, f_i)$. %

The algorithm $\ALG_0$ is simply the brute force algorithm $\BF$ at input length $n_0$. (A convenient observation is that we can specify an exponentially long string $f_0 \in \{0, 1\}^{2^{O(n_0)}}$ so that $\Kor(C_{n_0}, f_0)$ is equivalent to $\BF = \ALG_0$; see \autoref{fact: brute force for sigma2}.) For each $0\le i < t$, to specify $\ALG_{i+1}$, let $f_{i+1}$ denote the history of the algorithm $\ALG_i$, and consider the following win-or-improve mechanism.

\begin{enumerate}
	\item[(\textbf{Win})] If $f_{i+1}$ admits an $n_{i}^\alpha$-size circuit (for some large constant $\alpha$), by our easy-witness argument, we can simulate $\ALG_{i}$ by a $\poly(n_i)$-time single-valued $\mathsf{F}\Sigma_2\P$ algorithm. 
	
	\item[(\textbf{Improve})] Otherwise $f_{i+1}$ has circuit complexity at least $n_{i}^\alpha$, we plug it into Korten's reduction to solve $\Avoid$ for $C_{n_{i+1}}$. That is, we take $\ALG_{i+1} \coloneqq \Kor(C_{n_{i+1}},f_{i+1})$ as our new algorithm on input length $n_{i+1}$. %
\end{enumerate}

Let $T_i = |f_i|$, then $T_{i+1} \le \poly(T_i)$. By setting $n_{i+1} = n_i^\beta$ for a sufficiently large $\beta$, a similar analysis as~\cite{ChenLORS23} shows that for some $t = O(\log n_0)$ we would have $T_t \le \poly(n_t)$, meaning that $\ALG_{t}$ would be a $\poly(n_t)$-time $\FP^\NP$ algorithm (thus also a single-valued $\mathsf{F}\Sigma_2\P$ algorithm) solving $\Avoid$ for $C_{n_t}$. Putting everything together, we obtain a polynomial-time single-valued $\mathsf{F}\Sigma_2\P$ algorithm that solves $\Avoid$ for at least one of the $C_{n_i}$.

\paragraph*{The hardness condenser perspective.} Below we present another perspective on the construction above which may help the reader understand it better. In the following, we fix $C_n \colon \bs{n} \to \bs{2n}$ to be the truth table generator $\TT_{n,2n}$ that maps an $n$-bit description of a $\log(2n)$-input circuit into its length-$2n$ truth table. Hence, instead of solving $\Avoid$ in general, our goal here is simply \emph{constructing hard truth tables} (or equivalently, proving circuit lower bounds). 

We note that $\Kor(\TT_{n,2n},f)$ can then be interpreted as a \emph{hardness condenser}~\cite{Buresh-OppenheimS06}:\footnote{A hardness condenser takes a long truth table $f$ with certain hardness and outputs a shorter truth table with similar hardness.} Given a truth table $f \in \bs{T}$ whose circuit complexity is sufficiently larger than $n$, it outputs a length-$2n$ truth table that is maximally hard (i.e., without $n/\log n$-size circuits). The win-or-improve mechanism can be interpreted as an iterative application of this hardness condenser. 

At the stage $i$, we consider the algorithm $\ALG_i \coloneqq \Kor(\TT_{n_i,2n_i},f_i)$, which runs in $T_i \approx |f_i|$ time and creates (roughly) $n_i$ bits of hardness. (That is, the circuit complexity of the output of $\ALG_i$ is roughly $n_i$.) In the (\textbf{Win}) case above, $\ALG_i$ admits an $n_i^\alpha$-size history $f_{i+1}$ (with length approximately $|f_{i}|$) and can therefore be simulated in $\mathsf{F}\Sigma_2\P$. The magic is that in the (\textbf{Improve}) case, we actually have access to \emph{much more hardness than $n_i$}: the history string $f_{i+1}$ has $n_i^\alpha \gg n_i$ bits of hardness. So we can \emph{distill} these hardness by applying the condenser to $f_{i+1}$ to obtain a maximally hard truth tables of length $2n_{i+1} = 2n_i^\beta$, establish the next algorithm $\ALG_{i+1} \coloneqq \Kor(\TT_{n_{i+1},2n_{i+1}},f_{i+1})$, and keep iterating.

Observe that the string $f_{i+1}$ above has $n_{i}^\alpha > n_{i}^\beta = n_{i+1}$ bits of hardness. Since $|f_{i+1}| \approx |f_{i}|$ and $n_{i+1} = n_i^\beta$, the process above creates \emph{harder and harder} strings, until $|f_{i+1}| \le n_{i+1} \le n_i^\alpha$, so the (\textbf{Win}) case must happen at some point.

\subsection{Proof Overview}

In this section, we elaborate on the computational history of $\Kor$ and how the easy-witness-style argument gives us $\mathsf{F}\Sigma_2\P$ and $\mathsf{F}\S_2\P$ algorithms.

\subsubsection{Korten's reduction}

We first review the key concepts and results from~\cite{Korten21} that are needed for us. Given a circuit $C\colon\{0, 1\}^n \to \{0, 1\}^{2n}$ and a parameter $T \ge 2n$, Korten builds another circuit $\GGM_T[C]$ stretching $n$ bits to $T$ bits as follows:\footnote{We use the name $\GGM$ because the construction is similar to the pseudorandom function generator of Goldreich, Goldwasser, and Micali \cite{GoldreichGM86}.}
\begin{itemize}
	\item On input $x \in \bs{n}$, we set $v_{0,0} = x$. For simplicity, we assume that $T/n = 2^k$ for some $k \in \N$. We build a full binary tree with $k+1$ layers; see~\autoref{fig:GGM-tree} for an example with $k = 3$.
	
	\item For every $i \in \zeroUpto{k-1}$ and $j \in \zeroUpto{2^i - 1}$, we set $v_{i+1,2j}$ and $v_{i+1,2j+1}$ to be the first $n$ bits and the last $n$ bits of $C(v_{i,j})$, respectively.
	
	\item The output of $\GGM_T[C](x)$ is defined to be the concatenation of $v_{k,0},v_{k,1},\dotsc,v_{k,2^k-1}$.
\end{itemize}

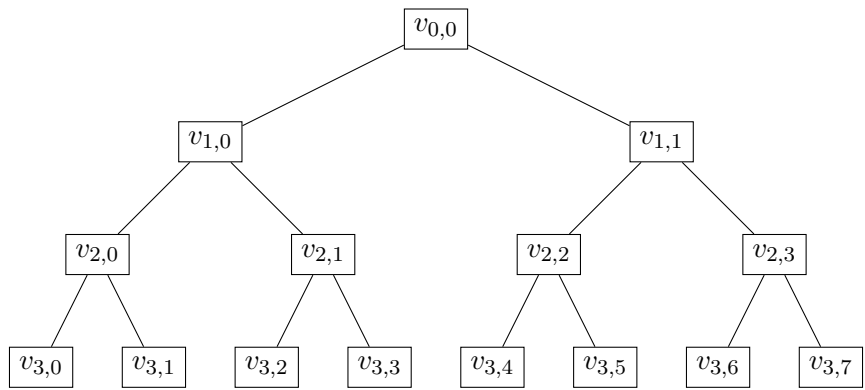
\begin{figure}[h]
	\begin{center}
			\begin{tikzpicture}[
					level distance=1.5cm,
					level 1/.style={sibling distance=6cm},
					level 2/.style={sibling distance=3cm},
					level 3/.style={sibling distance=1.5cm},
					every node/.style={rectangle,draw}
					]
					
					\node {$v_{0,0}$}
					child {node {$v_{1,0}$}
							child {node {$v_{2,0}$}
									child {node {$v_{3,0}$}}
									child {node {$v_{3,1}$}}
								}
							child {node {$v_{2,1}$}
									child {node {$v_{3,2}$}}
									child {node {$v_{3,3}$}}
								}
						}
					child {node {$v_{1,1}$}
							child {node {$v_{2,2}$}
									child {node {$v_{3,4}$}}
									child {node {$v_{3,5}$}}
								}
							child {node {$v_{2,3}$}
									child {node {$v_{3,6}$}}
									child {node {$v_{3,7}$}}
								}
						};
					
				\end{tikzpicture}
		\end{center}
	\caption{An illustration of the GGM Tree, in which, for instance, it holds that $(v_{3,4},v_{3,5}) = C(v_{2,2})$.}
	\label{fig:GGM-tree}
\end{figure}

The following two properties of $\GGM_T[C]$ are established in~\cite{Korten21}, which will be useful for us:
\begin{enumerate}
	\item Given $i \in [T], C$ and $x \in \bs{n}$, by traversing the tree from the root towards the leaf with the $i$-th bit, one can compute the $i$-th bit of $\GGM_T[C](x)$ in $\poly(\SIZE(C), \log T)$ time. Consequently, for every $x$, $\GGM_T[C](x)$ has circuit complexity at most $\poly(\SIZE(C), \log T)$.
	
	\item There is a $\P^\NP$ algorithm $\Kor(C,f)$ that takes an input $f \in \bs{T} \setminus \Range(\GGM_{T}[C])$ and outputs a string $u \in \bs{2n} \setminus \Range(C)$. Note that this is a reduction from solving $\Avoid$ for $C$ to solving $\Avoid$ for $\GGM_{T}[C]$.
\end{enumerate}

In particular, letting $f$ be a truth table whose circuit complexity is sufficiently larger than $\SIZE(C)$, by the first property above, it is not in $\Range(\GGM_{T}[C])$, and therefore $\Kor(C,f)$ solves $\Avoid$ for $C$. This confirms our description of $\Kor$ in~\autoref{sec:intro-new-algo-range-avoid}.

\subsubsection{Computational history of \texorpdfstring{$\Kor$}{Korten} and an easy-witness argument for \texorpdfstring{$\mathsf{F}\Sigma_2\P$}{FS2P} algorithms} \label{sec:int:comp-history}

The algorithm $\Kor(C,f)$ works as follows: we first view $f$ as the labels of the last layer of the binary tree, and try to reconstruct the whole binary tree, layer by layer (start from the bottom layer to the top layer, within each layer, start from the rightmost node to the leftmost one), by filling the labels of the intermediate nodes. To fill $v_{i,j}$, we use an $\NP$ oracle to find the lexicographically first string $u \in \bs{n}$ such that $C(u) = v_{i+1,2j} \circ v_{i+1,2j+1}$, and set $v_{i,j} = u$. If no such $u$ exists, the algorithm stops and report $v_{i+1,2j} \circ v_{i+1,2j+1}$ as the solution to $\Avoid$ for $C$. Observe that this reconstruction procedure must stop somewhere, since if it successfully reproduces all the labels in the binary tree, we would have $f = \GGM_T[C](v_{0,0}) \in \Range(\GGM_{T}[C])$, contradicting the assumption. See~\autoref{lemma: Korten's reduction} for details.

\paragraph*{The computational history of $\Kor$.} The algorithm described above induces a natural description of the computational history of $\Kor$, denoted as $\HisKor(C,f)$, as follows: the index $(\istar,\jstar)$ when the algorithm stops (i.e., the algorithm fails to fill in $v_{\istar,\jstar}$) concatenated with the labels of all the nodes generated by $\Kor(C,f)$ (for the intermediate nodes with no label assigned, we set their labels to a special symbol $\bot$); see~\autoref{fig:GGM-tree-history} for an illustration. This history has length at most $5T$, and for convenience, we pad additional zeros at the end of it so that its length is exactly $5T$.

\begin{figure}[h]
	\begin{center}
		\begin{tikzpicture}[
			level distance=1.5cm,
			level 1/.style={sibling distance=6cm},
			level 2/.style={sibling distance=3cm},
			level 3/.style={sibling distance=1.5cm},
			every node/.style={rectangle,draw}
			]
			
			\node {$\bot$}
			child {node {$\bot$}
				child {node {$\bot$}
					child {node {$v_{3,0}$}}
					child {node {$v_{3,1}$}}
				}
				child {node[line width=0.5mm] {$\bot$}
					child {node {$v_{3,2}$}}
					child {node {$v_{3,3}$}}
				}
			}
			child {node {$\bot$}
				child {node {$v_{2,2}$}
					child {node {$v_{3,4}$}}
					child {node {$v_{3,5}$}}
				}
				child {node {$v_{2,3}$}
					child {node {$v_{3,6}$}}
					child {node {$v_{3,7}$}}
				}
			};
			
			\node at (3.5,0) {$(\istar,\jstar) = (2,1)$}; %
			
		\end{tikzpicture}
	\end{center}
	\caption{An illustration of the history of $\Kor(C,f)$. Here we have $\HisKor(C,f) = (2,1) \circ \bot\bot\bot\bot\bot \circ v_{2,2} \circ v_{2,3} \circ v_{3,0} \circ \dotsc \circ v_{3,7}$ and $\Kor(C,f) = v_{3,2} \circ v_{3,3}$.}
	\label{fig:GGM-tree-history}
\end{figure}

\paragraph*{A local characterization of $\HisKor(C,f)$.} The crucial observation we make on $\HisKor(C,f)$ is that it admits a local characterization in the following sense: there is a family of local constraints $\{\psi_{x}\}_{x \in \bs{\poly(n)}}$, where each $\psi_{x} \colon \bs{5T} \times \bs{T} \to \bs{}$ reads only $\poly(n)$ many bits of its input (we think about it as a local constraint since usually $n \ll T$), such that for fixed $f$, $\HisKor(C,f) \circ f$ is the unique string making all the $\psi_x$ outputting $1$. 

The constraints are follows: (1) for every leaf node $v_{k,i}$, its content is consistent with the corresponding block in $f$; (2) all labels at or before node $(\istar,\jstar)$ are $\bot$;\footnote{We say that $(i,j)$ is before (after) $(\istar,\jstar)$ if the pair $(i,j)$ is lexicographically smaller (greater) than $(\istar,\jstar)$.} (3) for every $z \in \bs{n}$, $C(z) \ne v_{\istar+1,2\jstar}\circ v_{\istar+1,2\jstar+1}$ (meaning the algorithm fails at $v_{\istar,\jstar}$); (4) for every $(i,j)$ after $(\istar,\jstar)$, $C(v_{i,j}) = v_{i+1,2j} \circ v_{i+1,2j+1}$ ($v_{i,j}$ is the correct label); (5) for every $(i,j)$ after $(\istar,\jstar)$ and for every $v' < v_{i,j}$, $C(v') \ne v_{i+1,2j} \circ v_{i+1,2j+1}$ ($v_{i,j}$ is the lexicographically first correct label). It is clear that each of these constraints above only reads $\poly(n)$ many bits from the input and a careful examination shows they precisely \textbf{define} the string $\HisKor(C,f)$.

A more intuitive way to look at these local constraints is to treat them as a $\poly(n)$-time oracle algorithm $V_{\HisKor}$ that takes a string $x \in \poly(n)$ as input and two strings $h \in \bs{5 T}$ and $f \in \bs{T}$ as oracles, and we simply let $V_{\HisKor}^{h,f}(x) = \psi_{x}(h \circ f)$. Since the constraints above are all very simple and only read $\poly(n)$ bits of $h \circ f$, $V_{\HisKor}$ runs in $\poly(n)$ time. In some sense, $V_{\HisKor}$ is a local $\Pi_1$ verifier: it is local in the sense that it only queries $\poly(n)$ bits from its oracles, and it is $\Pi_1$ since it needs a universal quantifier over $x \in \bs{\poly(n)}$ to perform all the checks.

\paragraph*{$\mathsf{F}\Sigma_2 \P$ algorithms.} Before we proceed, we give a formal definition of a single-valued $\mathsf{F}\Sigma_2 \P$ algorithm $A$. Here $A$ is implemented by an algorithm $V_A$ taking an input $x$ and two $\poly(|x|)$-length witnesses $\pi_1$ and $\pi_2$. We say $A(x)$ outputs a string $z \in \bs{\ell}$ (we assume $\ell = \ell(x)$ can be computed in polynomial time from $x$) if $z$ is the \emph{unique} length-$\ell$ string such that the following hold:
\begin{itemize}
	\item there exists $\pi_1$ such that for every $\pi_2$, $V_{\HisKor}(x,\pi_1,\pi_2,z) = 1$.\footnote{Note that our definition here is different from the formal definition we used in~\autoref{def: single-valued algorithms}. But from this definition, it is easier to see why $\mathsf{F}\Sigma_2\P$ algorithms for constructing hard truth tables imply circuit lower bounds for $\Sigma_2\E$.}
\end{itemize}

We can view $V_{\HisKor}$ as a verifier that checks whether $z$ is the desired output using another universal quantifier: given a proof $\pi_1$ and a string $z \in \bs{\ell}$. $A$ accepts $z$ if and only if \emph{for every} $\pi_2$, $V_{\HisKor}(x,\pi_1,\pi_2,z) = 1$. That is, $A$ can perform exponentially many checks on $\pi_1$ and $z$, each taking $\poly(|x|)$ time.

\paragraph*{The easy-witness argument.} Now we are ready to elaborate on the easy-witness argument mentioned in~\autoref{sec:intro-new-algo-range-avoid}. Recall that at stage $i$, we have $\ALG_i = \Kor(C_{n_i},f_i)$ and $f_{i+1} = \HisKor(C_{n_i},f_i)$ (the history of $\ALG_i$). Assuming that $f_{i+1}$ admits a $\poly(n_{i})$-size circuit, we want to show that $\Kor(C_{n_i},f_i)$ can be simulated by a $\poly(n_i)$-time single-valued $\mathsf{F}\Sigma_2 \P$ algorithm.

Observe that for every $t \in [i+1]$, $f_{t-1}$ is simply a substring of $f_t$ since $f_t = \HisKor(C_{n_{t-1}},f_{t-1})$. Therefore, $f_{i+1}$ admitting a $\poly(n_{i})$-size circuit implies that all $f_{t}$ admit $\poly(n_{i})$-size circuits for $t \in [i]$. We can then implement $A$ as follows: the proof $\pi_1$ is a $\poly(n_i)$-size circuit $C_{i+1}$ supposed to compute $f_{i+1}$, from which one can obtain in polynomial time a sequence of circuits $C_{1},\dotsc,C_{i}$ that are supposed to compute $f_1,\dotsc,f_{i}$, respectively. (Also, from \autoref{fact: brute force for sigma2}, one can easily construct a $\poly(n_0)$-size circuit $C_0$ computing $f_0$.) Next, for every $t \in \zeroUpto{i}$, $A$ checks whether $\tt(C_{t+1}) \circ \tt(C_t)$ satisfies all the local constraints $\psi_x$'s from the characterization of $\HisKor(C_{n_t},f_t)$. In other words, $A$ checks whether $V_{\HisKor}^{C_{t+1},C_{t}}(x) = 1$ for all $x \in \bs{\poly(n_t)}$.

The crucial observation is that since all the $C_t$ have size $\poly(n_i)$, each check above can be implemented in $\poly(n_i)$ time as they only read at most $\poly(n_i)$ bits from their input, despite that $\tt(C_{t+1}) \circ \tt(C_t)$ itself can be much longer than $\poly(n_i)$. Assuming that all the checks of $A$ above are passed, by induction we know that $f_{t+1} = \HisKor(C_{n_t},f_t)$ for every $t \in \zeroUpto{i}$. Finally, $A$ checks whether $z$ corresponds to the answer described in $\tt(C_{i+1}) = f_{i+1}$.

\subsubsection{Selectors and an easy-witness argument for \texorpdfstring{$\mathsf{F}\S_2\P$}{FS2P} algorithms}\label{sec:int:S2P-algo-range-avoidance}

Finally, we discuss how to implement the easy-witness argument above with a single-valued $\mathsf{F}\S_2\P$ algorithm. It is known that any single-valued $\mathsf{F}\S_2 \BPP$ algorithm can be converted into an equivalent single-valued $\mathsf{F}\S_2\P$ algorithm outputting the same string~\cite{Canetti96, RussellS98} (see also the proof of~\autoref{theo:S-2-algo-for-range-avoidance} for a self-contained argument). Therefore, in the following we aim to give a single-valued $\mathsf{F}\S_2\BPP$ algorithm for solving range avoidance, which is easier to achieve.

\paragraph*{$\mathsf{F}\S_2\BPP$ algorithms and randomized selectors.} Before we proceed, we give a formal definition of a single-valued $\mathsf{F}\S_2 \BPP$ algorithm $A$. We implement $A$ by a randomized algorithm $V_A$ that takes an input $x$ and two $\poly(|x|)$-length witnesses $\pi_1$ and $\pi_2$.\footnote{$\mathsf{F}\S_2\P$ algorithms are the special case of $\mathsf{F}\S_2\BPP$ algorithms where the algorithm $V_A$ is \emph{deterministic}.} We say that $A(x)$ outputs a string $z \in \bs{\ell}$ (we assume $\ell = \ell(x)$ can be computed in polynomial time from $x$) if the following hold:
\begin{itemize}
	\item there exists a string $h$ such that for every $\pi$, both $V_A(x,h,\pi)$ and $V_A(x,\pi,h)$ output $z$ with probability at least $2/3$. (Note that such $z$ must be unique if it exists.)
\end{itemize}

Actually, our algorithm $A$ will be implemented as a randomized \emph{selector}: given two potential proofs $\pi_1$ and $\pi_2$, it first selects the correct one and then outputs the string $z$ induced by the correct proof.\footnote{If both proofs are correct or neither proofs are correct, it can select an arbitrary one. The condition only applies when exactly one of the proofs is correct.}

\newcommand{\VS}{V_{\mathsf{select}}}

\paragraph{Recap.} Revising the algorithm in~\autoref{sec:int:algo-range-avoid}, our goal now is to give an $\mathsf{F}\S_2 \BPP$ simulation of $\Kor(C_{n_i},f_i)$, assuming that $\HisKor(C_{n_i},f_{i})$ admits a small circuit. Similar to the local $\Pi_1$ verifier used in the case of $\mathsf{F}\Sigma_2\P$ algorithms, now we consider a local randomized selector $\VS$ which takes oracles $\pi_1,\pi_2 \in \bs{5T}$ and $f \in \bs{T}$ such that if exactly one of the $\pi_1$ and $\pi_2$ is $\HisKor(C,f)$, $\VS$ outputs its index with high probability.

Assuming that $f_{i+1} = \HisKor(C_{n_i},f_{i})$ admits a small circuit, one can similarly turn $\VS$ into a single-valued $\mathsf{F}\S_2 \BPP$ algorithms $A$ computing $\Kor(C_{n_i},f_i)$: treat two proofs $\pi_1$ and $\pi_2$ as two small circuits $C$ and $D$ both supposed to compute $f_{i+1}$, from $C$ and $D$ we can obtain a sequence of circuits $\{C_t\}$ and $\{D_t\}$ supposed to compute the $f_t$ for $t \in [i]$. Then we can use the selector $\VS$ to decide for each $t \in [i+1]$ which of the $C_t$ and $D_t$ is the correct circuit for $f_t$. Finally, we output the answer encoded in the selected circuit for $f_{i+1}$; see the proof of~\autoref{theo:S-2-algo-for-range-avoidance} for details.\footnote{However, for the reasons to be explained below, we will actually work with the encoded history instead of the history, which entails a lot of technical challenges in the actual proof.}

\paragraph{Observation: it suffices to find the first differing node label.} Ignore the $(\istar,\jstar)$ part of the history for now. Let $\{v^{1}_{i,j}\}$ and $\{v^{2}_{i,j}\}$ be the node labels encoded in $\pi_1$ and $\pi_2$, respectively. We also assume that exactly one of them corresponds to the correct node labels in $\HisKor(C,f)$. The crucial observation here is that, since the correct node labels are generated by a deterministic procedure \emph{node by node} (from bottom to top and from rightmost to leftmost), it is possible to tell which of the $\{v^{1}_{i,j}\}$ and $\{v^{2}_{i,j}\}$ is correct given the largest $(i',j')$ such that $v^1_{i',j'} \ne v^2_{i',j'}$. (Note that since all $(i,j)$ are processed by $\Kor(C,f)$ in reverse lexicographic order, this $(i',j')$ corresponds to the first node label that the wrong process differs from the correct process, so we call this the first differing point.)

In more detail, assuming we know this $(i',j')$, we proceed by discussing several cases. First of all, if $(i',j')$ corresponds to a leaf, then one can query $f$ to figure out which of $v^{1}_{i',j'}$ and $v^{2}_{i',j'}$ is consistent with the corresponding block in $f$. Now we can assume $(i',j')$ corresponds to an intermediate node. Since $(i',j')$ is the first differing point, we know that $v^1_{i'+1,2j'} \circ v^1_{i'+1,2j'+1} = v^2_{i'+1,2j'} \circ v^2_{i'+1,2j'+1}$ (we let this string to be $\alpha$ for convenience). By the definition of $\HisKor(C,f)$, it follows that the correct $v_{i',j'}$ should be uniquely determined by $\alpha$, which means the selector only needs to read $\alpha$, $v^{1}_{i',j'}$, and $v^{2}_{i',j'}$, and can then be implemented by a somewhat tedious case analysis (so it is local). We refer readers to the proof of~\autoref{lemma:Kor-prop-s2} for the details and only highlight the most illuminating case here: if both of $v^{1}_{i',j'}$ and $v^{2}_{i',j'}$ are good (we say a string $\gamma$ is good, if $\gamma \ne \bot$ and $C(\gamma) = \alpha$), we select the lexicographically smaller one. To handle the $(\istar,\jstar)$ part, one needs some additional case analysis. We omit the details here and refer the reader to the proof of~\autoref{lemma:Kor-prop-s2}.

The takeaway here is that if we can find the first differing label $(i',j')$, then we can construct the selector $\VS$ and hence the desired single-valued $\mathsf{F}\S_2\BPP$ algorithm.

\paragraph{Encoded history.} However, the above assumes the knowledge of $(i',j')$. In general, if one is only given oracle access to $\{v^{1}_{i,j}\}$ and $\{v^{2}_{i,j}\}$, there is no $\poly(n)$-time oracle algorithm computing $(i',j')$ because there might be exponentially many nodes. To resolve this issue, we will encode $\{v^{1}_{i,j}\}$ and $\{v^{2}_{i,j}\}$ via Reed--Muller codes.

Formally, recall that $\HisKor(C,f)$ is the concatenation of $(\istar,\jstar)$ and the string $S$, where $S$ is the concatenation of all the labels on the binary tree. We now define the encoded history, denoted as $\EncHisKor(C,f)$, as the concatenation of $(\istar,\jstar)$ and \emph{a Reed--Muller encoding} of $S$. The new selector is given oracle access to two candidate encoded histories together with $f$. By applying low-degree tests and self-correction of polynomials, we can assume that the Reed--Muller parts of the two candidates are indeed low-degree polynomials. Then we can use a reduction to polynomial identity testing to compute the first differing point between $\{v^{1}_{i,j}\}$ and $\{v^{2}_{i,j}\}$ in randomized polynomial time. See the proof of~\autoref{lm:comp-RM} for the details. This part is similar to the selector construction from~\cite{Hirahara15}.

\subsection{Discussions}

We conclude the introduction by discussing some related works.

\subsubsection{Previous approach: Karp--Lipton collapses and the half-exponential barrier}\label{sec:rel-half-exp}
In the following, we elaborate on the half-exponential barrier mentioned earlier in the introduction.\footnote{A function $f\colon\N\to\N$ is \emph{sub-half-exponential} if $f(f(n)^c) = 2^{o(n)}$ for every constant $c\ge 1$, i.e., composing $f$ twice yields a sub-exponential function. For example, for constants $c\ge 1$ and $\varepsilon > 0$, the functions $f(n) = n^c$ and $f(n) = 2^{\log^c n}$ are sub-half-exponential, but the functions $f(n) = 2^{n^\varepsilon}$ and $f(n) = 2^{\varepsilon n}$ are not.} Let $\calC$ be a ``typical'' uniform complexity class containing $\P$, a \emph{Karp--Lipton collapse} to $\calC$ states that if a large class (say $\EXP$) has polynomial-size circuits, then this class collapses to $\calC$. For example, there is a Karp--Lipton collapse to $\calC = \Sigma_2\P$:
\begin{quote}
	Suppose $\EXP \subseteq \P/_\poly$, then $\EXP = \Sigma_2\P$. (\cite{KarpL80}, attributed to Albert Meyer)
\end{quote}

Now, assuming that $\EXP \subseteq \P/_\poly \implies \EXP = \calC$, the following win-win analysis implies that $\calC\text{-}\EXP$, the exponential-time version of $\calC$, is not in $\P/_\poly$: (1) if $\EXP\not\subset \P/_\poly$, then of course $\calC\text{-}\EXP\supseteq \EXP$ does not have polynomial-size circuits; (2) otherwise $\EXP\subseteq \P/_\poly$. We have $\EXP = \calC$ and by padding $\EEXP = \calC\text{-}\EXP$. Since $\EEXP$ contains a function of maximum circuit complexity by direct diagonalization, it follows that $\calC\text{-}\EXP$ does not have polynomial-size circuits.

Karp--Lipton collapses are known for the classes $\Sigma_2\P$ \cite{KarpL80}, $\ZPP^\NP$ \cite{BshoutyCGKT96}, $\S_2\P$ \cite{Cai01a} (attributed to Samik Sengupta), $\PP$, $\MA$ \cite{LundFKN92, BabaiFNW93}, and $\ZPP^\MCSP$ \cite{ImpagliazzoKV18}. All the aforementioned super-polynomial circuit lower bounds for $\Sigma_2\EXP$, $\ZPEXP^\NP$, $\S_2\EXP$, $\PEXP$, $\MAEXP$, and $\ZPEXP^\MCSP$ are proven in this way.\footnote{There are some evidences that Karp--Lipton collapses are essential for proving circuit lower bounds \cite{ChenMMW19}.}

\paragraph*{The half-exponential barrier.} The above argument is very successful at proving various super-polynomial lower bounds. However, a closer look shows that it is only capable of proving \emph{sub-half-exponential} circuit lower bounds. Indeed, suppose we want to show that $\calC\text{-}\EXP$ does not have circuits of size $f(n)$. We will have to perform the following win-win analysis:
\begin{itemize}
	\item if $\EXP\not\subset \SIZE[f(n)]$, then of course $\calC\text{-}\EXP\supseteq \EXP$ does not have circuits of size $f(n)$;
	\item if $\EXP\subseteq \SIZE[f(n)]$, then (a scaled-up version of) the Karp--Lipton collapse implies that $\EXP$ can be computed by a $\calC$ machine of $\poly(f(n))$ time. Note that $\TIME[2^{\poly(f(n))}]$ does not have circuits of size $f(n)$ by direct diagonalization. By padding, $\TIME[2^{\poly(f(n))}]$ can be computed by a $\calC$ machine of $\poly(f(\poly(f(n))))$ time. Therefore, if $f$ is sub-half-exponential (meaning $f(\poly(f(n))) = 2^{o(n)}$), then $\calC\text{-}\EXP$ does not have circuits of size $f(n)$.
\end{itemize}

Intuitively speaking, the two cases above are \emph{competing with each other}: we cannot get exponential lower bounds in both cases.

\subsubsection{Implications for the Missing-String problem?}
\def\MissingString{\textsc{Missing-String}}
In the $\MissingString$ problem, we are given a list of $m$ strings $x_1, x_2, \dots, x_m \in \{0, 1\}^n$ where $m < 2^n$, and the goal is to output any length-$n$ string $y$ that does not appear in $\{x_1, x_2, \dots, x_m\}$. Vyas and Williams \cite{VyasW23} connected the circuit complexity of $\MissingString$ with the (relativized) circuit complexity of $\Sigma_2\E$:
\begin{theorem}[{\cite[Theorem 32]{VyasW23}, Informal}]\label{thm: lower bounds vs missing strings}
	The following are equivalent:\begin{itemize}
		\item $\Sigma_2\E^A\not\subset \io\SIZE^A[2^{\Omega(n)}]$ for every oracle $A$;
		\item for $m = 2^{\Omega(n)}$, the $\MissingString$ problem can be solved by a uniform family of size-$2^{O(n)}$ depth-$3$ $\AC^0$ circuits. 
	\end{itemize}
\end{theorem}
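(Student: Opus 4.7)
The equivalence is a tight syntactic translation between the quantifier skeleton of $\Sigma_2\E^A$ computations and depth-$3$ $\AC^0$ formulas evaluated on lists of small $A$-oracle-circuit truth tables. The plan is to prove both directions with fully relativizing arguments: $(\Leftarrow)$ unfolds a uniform $\MissingString$ circuit into a hard $\Sigma_2\E^A$ language, and $(\Rightarrow)$ proves the contrapositive by compiling any hypothetical hard $\Sigma_2\E^A$ truth table into a uniform depth-$3$ $\AC^0$ solution of $\MissingString$, then ruling out this compilation via an oracle construction.

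For $(\Leftarrow)$, fix any oracle $A$, input length $n'$ for $\Sigma_2\E$, truth-table length $N=2^{n'}$, and target circuit size $s=2^{\delta n'}$ with $\delta<1$. Enumerate the $m\le 2^{O(s\log s)}\le 2^{O(N)}$ oracle circuits $C_1,\ldots,C_m$ of size $s$ on $n'$ inputs; their truth tables in $\{0,1\}^N$ form a $\MissingString$ instance with $m=2^{\Omega(N)}$ strings of length $N$. Feed these into the hypothesized uniform $2^{O(N)}$-size depth-$3$ $\AC^0$ circuit $D$ to obtain a missing string $y\in\{0,1\}^N$, and let $L$ be the language whose $n'$-slice has truth table $y$; by construction $L_{n'}\notin\SIZE^A[s]$. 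To place $L\in\Sigma_2\E^A$, write the $j$-th output bit of $D$ as a $\bigvee\bigwedge\bigvee$ formula with top and middle fan-ins bounded by $2^{O(N)}$ and bottom fan-in absorbed into a $2^{O(n')}$-time body: the outer $\exists\forall$ pair ranges over $O(N)=2^{O(n')}$-bit witnesses, and the innermost OR together with the $\P^A$-decidability of the literals $C_i(x)$ (computable from the circuit index $i$ and the input coordinate $x$ using the uniform description of $D$) yields the polynomial-time body. Thus $L\in\Sigma_2\E^A\setminus\io\SIZE^A[2^{\Omega(n')}]$ relative to the arbitrary $A$.

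For $(\Rightarrow)$, argue the contrapositive: if $\MissingString$ on $m=2^{\Omega(n)}$ strings of length $n$ has no uniform $2^{O(n)}$-size depth-$3$ $\AC^0$ family, construct an oracle $A$ with $\Sigma_2\E^A\subseteq\io\SIZE^A[2^{o(n)}]$. The key observation is dual to the one above: for any $L\in\Sigma_2\E^A$, the truth table admits a representation $tt_L(j)=\bigvee_y\bigwedge_z\phi^A(j,y,z)$, which is a depth-$3$ $\AC^0$ formula of total size $2^{O(N)}$ once we inline the polynomial-size circuit for $\phi^A$ and collapse its $A$-queries into the bottom layer. A hard $L\in\Sigma_2\E^A$ against $\SIZE^A[2^{\Omega(n')}]$ therefore supplies a uniform depth-$3$ $\AC^0$ solver for the $\MissingString$ instance consisting of the truth tables of all size-$s$ $A$-oracle circuits. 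A stage-based oracle construction, in the style of Baker--Gill--Solovay, interleaves an enumeration of $\Sigma_2\E$ machines with the planting of short $A$-answers that convert each bottom-layer $A$-query into a uniform polynomial-time computation, producing a uniform $2^{O(n)}$-size depth-$3$ $\AC^0$ family for $\MissingString$ and contradicting the assumption.

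\textbf{Main obstacle.} The difficult direction is $(\Rightarrow)$: the stage-based oracle $A$ must preserve the relativized hardness hypothesis at every input length while simultaneously eliminating the bottom-layer $A$-dependence in the unfolded $\Sigma_2\E^A$ formula, and the bottom-OR fan-in must be certified to land in $\Sigma_2\E^A$ rather than $\Sigma_3\E^A$. Careful bookkeeping of fan-ins at each $\AC^0$ layer (so that the bottom level collapses to a poly-time body) and of the length of $A$-bits planted per stage (so that commitments at one length do not spoil the enumeration at another) is where I expect the heaviest lifting. By contrast, the $(\Leftarrow)$ direction is essentially a direct unfolding of the uniform $\AC^0$ description once the fan-in accounting is fixed.
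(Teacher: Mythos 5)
First, a point of orientation: the paper does not prove this statement. It is quoted (and explicitly tagged ``Informal'') from Vyas--Williams and is used only in the discussion section; the paper offers just a two-sentence intuition, which corresponds to your $(\Leftarrow)$ direction (truth tables of all small $A$-oracle circuits form a $\MissingString$ instance; a uniform depth-$3$ solver's output is a hard truth table placed in $\Sigma_2\E^A$ by reading the $\OR\circ\AND\circ\OR$ structure as $\exists\forall$ plus a polynomial-time body). Your $(\Leftarrow)$ sketch matches that intuition, but the step where you ``absorb'' the bottom fan-in into a $2^{O(n')}$-time body is doing real work that you have not justified: a depth-$3$ circuit of size $2^{O(N)}$ on $N$-bit strings (where $N = 2^{n'}$) may have bottom-OR fan-in $2^{O(N)} = 2^{2^{O(n')}}$, and then the natural unfolding is $\exists\forall\exists$ over exponential-length witnesses, i.e.\ $\Sigma_3\E^A$, not $\Sigma_2\E^A$. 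The equivalence only holds under the right formalization of the depth-$3$ family (bottom fan-in small enough to be checked by the $2^{O(n')}$-time oracle machine), which is exactly why the statement is informal; you flag this fan-in issue as an obstacle but then assert it away.

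The more serious gap is in your $(\Rightarrow)$ direction. The intended argument is direct, not contrapositive: given an arbitrary $\MissingString$ instance $\calI$, encode it into an oracle $A_\calI$ so that the listed strings are precisely the truth tables of low $A_\calI$-oracle-circuit complexity; the hypothesis applied to $A_\calI$ yields a hard language in $\Sigma_2\E^{A_\calI}$, and unfolding that single machine (existential witness $\to$ top OR, universal witness merged with the CNF of the oracle machine's decision tree $\to$ middle AND, clauses of width $\poly(N)$ $\to$ bottom ORs) produces a uniform depth-$3$ circuit over the instance bits. Your contrapositive route instead requires building one oracle $A$ relative to which \emph{every} $\Sigma_2\E^A$ language has small circuits at \emph{infinitely many} lengths, starting only from the hypothesis that each uniform depth-$3$ family fails on \emph{some} instance at \emph{some} length; this quantifier mismatch is not resolved by the Baker--Gill--Solovay-style staging you gesture at, and the paper notes immediately after the theorem that the existence of such an oracle is an open problem. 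So your $(\Rightarrow)$ argument, as sketched, replaces a syntactic translation with an attack on an open question and does not go through.
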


The intuition behind \autoref{thm: lower bounds vs missing strings} is roughly as follows. For every oracle $A$, the set of truth tables with low $A$-oracle circuit complexity induces an instance for $\MissingString$, and solving this instance gives us a hard truth table relative to $A$. If the algorithm for $\MissingString$ is a uniform $\AC^0$ circuit of depth $3$, then the hard function is inside $\Sigma_2\E^A$.

However, despite our \autoref{thm: main Sigma2} being completely relativizing, it does not seem to imply any non-trivial depth-$3$ $\AC^0$ circuit for $\MissingString$. The reason is the heavy win-win analysis \emph{across multiple input lengths}: for each $0\le i < t$, we have a single-valued $\mathsf{F}\Sigma_2\P$ construction algorithm for hard truth tables relative to oracle $A$ on input length $n_i$, but this algorithm needs access to $A_{n_{i+1}}$, \emph{a higher input length of $A$}. Translating this into the language of $\MissingString$, we obtain a weird-looking depth-$3$ $\AC^0$ circuit that takes as input a \emph{sequence} of $\MissingString$ instances $\calI_{n_0}, \calI_{n_1}, \dots, \calI_{n_t}$ (where \emph{each} $\calI_{n_i} \subseteq \{0, 1\}^{n_i}$ is a set of strings), looks at all of the instances (or, at least $\calI_{n_i}$ and $\calI_{n_{i+1}}$), and outputs a  purportedly missing string of $\calI_{n_i}$. It is guaranteed that for at least one input length $i$, the output string is indeed a missing string of $\calI_{n_i}$. However, if our algorithm is only given one instance $\calI \subseteq \{0, 1\}^n$, without assistance from a larger input length, it does not know how to find any missing string of $\calI$.

It remains an intriguing open problem whether the bullets in \autoref{thm: lower bounds vs missing strings} are true or not. In other words, is there an oracle $A$ relative to which $\Sigma_2\E$ has small circuits \emph{on infinitely many input lengths}?

\section*{Organization}

In~\autoref{sec:prelim}, we introduce the necessary technical preliminaries for this paper. In~\autoref{sec:korten}, we review Korten's reduction from solving range avoidance to generating hard truth tables~\cite{Korten21}, together with some new properties required by our new results. In~\autoref{sec: lower bounds for sigma2}, we prove the near-maximum circuit lower bound for $\Sigma_2\E$; although this lower bound is superseded by the later $\S_2\E/_{1}$ lower bound, we nonetheless include it in the paper since its proof is much more elementary. In~\autoref{sec:S2-lowb}, we extend the near-maximum circuit lower bound to $\S_2\E/_{1}$, and also present our new algorithms for solving the range avoidance problem.

\section{Preliminaries}\label{sec:prelim}

\paragraph*{Notation.} We use $[n]$ to denote $\{1, 2, \dots, n\}$. %
A search problem $\Pi$ maps every input $x \in \bs{*}$ into a solution set $\Pi_x \subseteq \bs{*}$. We say an algorithm $A$ solves the search problem $\Pi$ on input $x$ if $A(x) \in \Pi_x$. 

\subsection{Complexity Classes}
We assume basic familiarity with computation complexity theory (see, e.g.,~\cite{AB09-book,Goldreich-book} for references). Below we recall the definition of $\S_2\TIME[T(n)]$~\cite{RussellS98,Canetti96}.

\begin{definition}\label{def: symmetric time}
	Let $T\colon \N \to \N$. We say a language $L \in \S_2\TIME[T(n)]$, if there exists an $O(T(n))$-time verifier $V(x,\pi_1,\pi_2)$ that takes $x \in \bs{n}$ and $\pi_1,\pi_2 \in \bs{T(n)}$ as input, satisfying that
	\begin{itemize}
		\item if $x \in L$, then there exists $\pi_1$ such that for every $\pi_2$, $V(x,\pi_1,\pi_2) = 1$, and
		\item if $x \not\in L$, then there exists $\pi_2$ such that for every $\pi_1$, $V(x,\pi_1,\pi_2) = 0$.
	\end{itemize}

	Moreover, we say $L \in \S_2\E$ if $L \in \S_2\TIME[T(n)]$ for some $T(n) \le 2^{O(n)}$, and $L \in \S_2\P$ if $L \in \S_2\TIME[p(n)]$ for some polynomial $p$.
\end{definition}

It is known that $\S_2\P$ contains $\MA$ and $\P^\NP$~\cite{RussellS98}, and $\S_2\P$ is contained in $\ZPP^\NP$~\cite{Cai01a}. From its definition, it is also clear that $\S_2\P \subseteq \Sigma_2 \P \cap \Pi_2 \P$.

\subsection{Single-valued \texorpdfstring{$\mathsf{F}\Sigma_2\P$}{FSigma2P} and \texorpdfstring{$\mathsf{F}\S_2\P$}{FS2P} Algorithms}

We consider the following definitions of single-valued algorithms which correspond to circuit lower bounds for $\Sigma_2\E$ and $\S_2\E$.

\newcommand{\algo}{\mathbb{A}}

\begin{definition}[{Single-valued $\mathsf{F}\Sigma_2\P$ and $\mathsf{F}\S_2\P$ algorithms}]~\label{def: single-valued algorithms}
	A single-valued $\mathsf{F}\Sigma_2\P$ algorithm $A$ is specified by a polynomial $\ell(\cdot)$ together with a polynomial-time algorithm $V_A(x, \pi_1, \pi_2)$. On an input $x \in \bs{*}$, we say that $A$ outputs $y_x \in \bs{*}$, if the following hold:
	\begin{enumerate}
		\item[(a)] There is a $\pi_1\in\{0, 1\}^{\ell(|x|)}$ such that for every $\pi_2\in\{0, 1\}^{\ell(|x|)}$, $V_A(x, \pi_1, \pi_2)$ outputs $y_x$.
		\item[(b)] For every $\pi_1\in\{0, 1\}^{\ell(|x|)}$, there is a $\pi_2\in\{0, 1\}^{\ell(|x|)}$ such that the output of $V_A(x, \pi_1, \pi_2)$ is either $y_x$ or $\bot$ (where $\bot$ indicates ``I don't know'').
	\end{enumerate}

	A single-valued $\mathsf{F}\S_2\P$ algorithm $A$ is specified similarly, except that %
 we replace the second condition above with the following:
	\begin{enumerate}
			\item[(b')] There is a $\pi_2\in\{0, 1\}^{\ell(|x|)}$ such that for every $\pi_1\in\{0, 1\}^{\ell(|x|)}$, $V_A(x, \pi_1, \pi_2)$ outputs $y_x$.
	\end{enumerate}
	
	\label{defi:sv-algos}
\end{definition}

Now, we say that a single-valued $\mathsf{F}\Sigma_2\P$ ($\mathsf{F}\S_2\P$) algorithm $A$ solves a search problem $\Pi$ on input $x$ %
if it outputs a string $y_x$ and $y_x \in \Pi_x$. Note that from~\autoref{defi:sv-algos}, if $A$ outputs a string $y_x$, then $y_x$ is unique.

For convenience, we mostly only consider single-valued algorithms $A(x)$ with fixed output lengths, meaning that the output length $|A(x)|$ only depends on $|x|$ and can be computed in polynomial time given $1^{|x|}$.\footnote{If $A$ takes multiple inputs like $x,y,z$, then the output length $A(x,y,z)$ only depends on $|x|,|y|,|z|$ and can be computed in polynomial time given $1^{|x|}$, $1^{|y|}$, and $1^{|z|}$.}

\subsubsection{Single-Valued \texorpdfstring{$\mathsf{F}\S_2\P$}{FS2P} and \texorpdfstring{$\mathsf{F}\Sigma_2\P$}{FSigma2P} algorithms with \texorpdfstring{$\FP^\NP$}{FP NP} post-processing}
We also need the fact that single-valued $\mathsf{F}\S_2\P$ or $\mathsf{F}\Sigma_2\P$ algorithms with $\FP^\NP$ post-processing can still be implemented by single-valued $\mathsf{F}\S_2\P$ or $\mathsf{F}\Sigma_2\P$ algorithms, respectively. More specifically, we have:

\begin{theorem}\label{theo:S-2-and-PNP}
	Let $A(x)$ be a single-valued $\mathsf{F}\S_2\P$ \textup{(}resp.\ $\mathsf{F}\Sigma_2\P$\textup{)} algorithm and $B(x,y)$ be an $\FP^\NP$ algorithm, both with fixed output length. The function $f(x) \coloneqq B(x,A(x))$ also admits an $\mathsf{F}\S_2\P$ \textup{(}resp.\ $\mathsf{F}\Sigma_2\P$\textup{)} algorithm.
\end{theorem}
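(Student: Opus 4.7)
The plan is to extend the witnesses of $A$ to also carry a certified trace of $B$'s $\FP^\NP$ computation, and have $V_f$ first use $V_A$ to reconstruct the candidate $A(x)$ and then simulate $B$ step by step, validating each adaptive $\NP$-query answer against the supplied $\NP$-certificates. Let $M$ be the polynomial-time oracle machine computing $B$ and let $k = \poly(|x|)$ bound the number of adaptive $\NP$-queries it makes. For any candidate $z$ and any purported answer vector $a \in \bs{k}$, the queries $q_1, \ldots, q_k$ and the final output $M^a(x, z)$ can be generated deterministically in polynomial time by simulating $M$ with the supplied answers, so $V_f$'s remaining task is only to certify each bit of $a$ using $\NP$-certificates delivered by the extended witnesses.

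For the $\mathsf{F}\Sigma_2\P$ case, I would let the $\exists$-witness $\pi_1$ carry an additional tuple $(a, c_1, \ldots, c_k)$ with $a \in \bs{k}$ and each $c_j$ a purported $\NP$-certificate for $q_j$ whenever $a_j = 1$, and let the $\forall$-witness $\pi_2$ carry a challenge $(j^\star, d)$. The verifier $V_f$ first runs $V_A(x, \pi_1, \pi_2)$ and returns $\bot$ if $V_A$ does; otherwise it simulates $M(x, z)$ with answers $a$, returning $\bot$ if any $c_j$ fails $\NP$-verification for its query $q_j$, and returning $\bot$ if in addition $1 \le j^\star \le k$, $a_{j^\star} = 0$, and $d$ is a valid $\NP$-certificate for $q_{j^\star}$; otherwise it returns $M^a(x, z)$. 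Clause (a) of \autoref{def: single-valued algorithms} follows by taking $\pi_1^\star$ to package the good $A$-witness together with the truthful answer vector and valid certificates, under which no $\pi_2$ can trigger a $\bot$. For clause (b), given arbitrary $\pi_1$, the $\forall$-player picks $\pi_2$ from clause (b) of $A$'s definition (to neutralize any deceptive $A$-witness) and sets $j^\star$ to the first position where $a$ disagrees with the truth: if the truth there is $1$ a valid $d$ exists, and if it is $0$ then $c_{j^\star}$ cannot verify, so $V_f$ returns $\bot$ in either case.

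For the $\mathsf{F}\S_2\P$ case, the construction is the single-valued functional analogue of the Russell--Sundaram proof that $\P^\NP \subseteq \S_2\P$~\cite{RussellS98}. Each $\Pi_i$ carries both $A$-witness slots together with a full certified trace $(a^{(i)}, c^{(i)})$, and $V_f$ runs $V_A$ with witnesses drawn across $\Pi_1$ and $\Pi_2$ so that the canonical prover from clause (a) or (b') alone already forces $z = A(x)$. The verifier then simulates $M$ along both traces, and returns $M$'s output if $a^{(1)} = a^{(2)}$ and the corresponding certificates verify. Otherwise it locates the first index $j^\star$ where $a^{(1)}$ and $a^{(2)}$ differ; because the earlier answers coincide, the two simulated queries at position $j^\star$ are identical, so the side claiming $1$ is vindicated or convicted by a single $\NP$-verification of its certificate: a valid certificate means the truth is $1$ and the $0$-claimer loses, while an invalid certificate exposes the $1$-claimer as lying and the $0$-claimer wins, after which the winning side's full trace is adopted and its $M$-output is returned. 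The main obstacle will be precisely the adaptivity of $\FP^\NP$ queries: once a trace lies about an early answer, its subsequent queries diverge from the true sequence and later certificates become uninterpretable; the first-disagreement argument circumvents this by focusing the dispute on an index where both traces still agree on the $\NP$ question being asked, and the canonicity of at least one of $\Pi_1^\star$ or $\Pi_2^\star$ ensures its certificates always pass, so the adjudication favors the truthful side and forces $V_f$ to output $f(x)$ in both directions required by single-valuedness.
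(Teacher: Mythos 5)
Your construction is correct, and it takes a genuinely different route from the paper's. The paper proves only the $\mathsf{F}\S_2\P$ case, and does so by invoking Krentel's $\P^\NP$-completeness of the Lexicographically Maximum Satisfying Assignment problem: each output bit of $B(x,y)$ is reduced to an $\mathsf{LMSAP}$ instance, each proof carries claimed lex-maximal satisfying assignments, and the selector first disqualifies any proof whose assignments fail to satisfy the formulas and then prefers the lexicographically larger concatenation of assignments. This sidesteps the adaptivity of the $\NP$ queries entirely, since maximality of a satisfying assignment is a non-adaptive, locally checkable condition that the honest prover uniquely optimizes. You instead certify the adaptive query--answer trace of the $\FP^\NP$ machine directly and adjudicate at the first index where the two traces disagree, which is the Russell--Sundaram-style argument for $\P^\NP\subseteq\S_2\P$; your key observation that the two simulations still agree on the $\NP$ question being asked at the first disagreement is exactly what makes this sound, and the asymmetric adjudication (an unverifiable certificate convicts the $1$-claimer) is fine because only the canonical trace needs to win. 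Your approach is more self-contained (no completeness theorem needed) and you also spell out the $\mathsf{F}\Sigma_2\P$ case, which the paper omits; the paper's approach buys a shorter selector whose correctness reduces to comparing two strings lexicographically. One small point to tidy up in a full write-up of your $\S_2$ case: when $a^{(1)}=a^{(2)}$ but only one side's certificates verify, the verifier should adopt that side's trace rather than output $\bot$, so that the canonical proof still wins.
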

\begin{proof}
	We only provide a proof for the case of single-valued $\mathsf{F}\S_2\P$ algorithms. Recall that the Lexicographically Maximum Satisfying Assignment problem ($\mathsf{LMSAP}$) is defined as follows: given an $n$-variable formula $\phi$ together with an integer $k \in [n]$, one needs to decide whether $a_k = 1$, where $a_1,\dotsc,a_n \in \bs{n}$ is the lexicographically largest assignment satisfies $\phi$. By~\cite{Krentel88}, $\mathsf{LMSAP}$ is $\P^\NP$-complete.
	
	Let $V_A(x,\pi_1,\pi_2)$ be the corresponding verifier for the single-valued $\mathsf{F}\S_2\P$ algorithm $A$. Let $L(x,y,i)$ be the $\P^\NP$ language such that $L(x, y, i) = 1$ if and only if $B(x,y)_i = 1$. Let $\ell = |B(x, y)|$ be the output length of $B$. We now define a single-valued $\mathsf{F}\S_2\P$ algorithm $\WT{A}$ by defining the following verifier $V_{\WT{A}}$, and argue that $\WT{A}$ computes $f$.

 	The verifier $V_{\WT{A}}$ takes an input $x$ and two proofs $\vec{\pi}_1$ and $\vec{\pi}_2$, where $\vec{\pi}_1$ consists of $\omega_1$, acting as the second argument to $V_A$, and $\ell$ assignments $z^{1}_1,z^{1}_2,\dotsc,z^{1}_\ell \in \bs{m}$. Similarly, $\vec{\pi}_2$ consists of $\omega_2$ and $z^{2}_1,z^{2}_2,\dotsc,z^{2}_\ell \in \bs{m}$.
	
	First, $V_{\WT{A}}$ runs $V_A(x,\omega_1,\omega_2)$ to get $y \in \bs{|A(x)|}$. Then it runs the reduction from $L(x,y,i)$ to $\mathsf{LMSAP}$ for every $i \in [\ell]$ to obtain $\ell$ instances $\{(\phi_i,k_i)\}_{i \in [\ell]}$, where $\phi_i$ is an $m$-variable formula and $k_i \in [m]$. (Without loss of generality by padding dummy variables, we may assume that the number of variables in $\phi_i$ is the same for each $i$, i.e., $m$; and that $m$ only depends on $|x|$ and $|y|$.) Now, for every $\mu \in [2]$, we can define an answer $w_\mu \in \bs{\ell}$ by $(w_\mu)_i = (z^{\mu}_i)_{k_i}$ (i.e., the value of $B(x,y)$, assuming that $\vec{\pi}_\mu$ consists of the lexicographically largest assignments for all the $\mathsf{LMSAP}$ instances).
	
	In what follows, when we say that $V_{\WT{A}}$ \emph{selects} the proof $\mu \in [2]$, we mean that $V_{\WT{A}}$ outputs $w_{\mu}$ and terminates. Then, $V_{\WT{A}}$ works as follows:
	\begin{enumerate}
		\item For each $\mu \in [2]$, it first checks whether for every $i \in [\ell]$, $z^{\mu}_i$ satisfies $\phi_i$. If only one of the $\mu$ passes all the checks, $V_{\WT{A}}$ selects that $\mu$. If none of them passes all the checks, $V_{\WT{A}}$ selects $1$. Otherwise, it continues to the next step.
		
		\item Now, letting $Z^{\mu}= z^{\mu}_1 \circ z^{\mu}_2 \circ \dotsc \circ z^\mu_{\ell}$ for each $\mu \in [2]$. $V_{\WT{A}}$ selects the $\mu$ with the lexicographically larger $Z^\mu$. If $Z^1 = Z^2$, then $V_{\WT{A}}$ selects $1$.
	\end{enumerate}

	Now we claim that $\WT{A}$ computes $f(x)$, which can be established by setting $\vec{\pi}_1$ or $\vec{\pi}_2$  be the corresponding proof for $V_A$ concatenated with all lexicographically largest assignments for the $\{\phi_i\}_{i \in [\ell]}$.
\end{proof}

\subsection{The Range Avoidance Problem}

The \emph{range avoidance} problem \cite{KKMP21, Korten21, RenSW22} is the following problem: Given as input a circuit $C \colon \{0, 1\}^n \to \{0, 1\}^\ell$ where $\ell > n$, find any string $y\in \bs{\ell} \setminus \Range(C)$. Proving circuit lower bounds (for exponential-time classes) is equivalent to solving the range avoidance problem on the \emph{truth table generator} $\TT_{n, s}$, defined as follows. It was shown in~\cite{FrandsenM05} that for $n,s\in \N$, any $s$-size $n$-input circuit $C$ can be encoded as a \emph{stack program} with description size $L_{n,s} := (s+1) (7 + \log(n+s))$. The precise definition of stack programs does not matter (see~\cite{FrandsenM05} for a formal definition); the only property we need is that given $s$ and $n$ such that $n \le s \le 2^n$, in $\poly(2^n)$ time one can construct a circuit $\TT_{n,s} \colon \bs{L_{n,s}} \to \bs{2^n}$ mapping the description of a stack program into its truth table. By the equivalence between stack programs and circuits, it follows that any $f \in \bs{2^n} \setminus \Range(\TT_{n,s})$ satisfies $\SIZE(f) > s$.  Also, we note that for large enough $n \in \N$ and $s = 2^n/n$, we have $L_{n,s} < 2^n$.

\begin{fact}\label{fact: range avoidance for TT implies circuit lower bounds}
    Let $s(n) \colon \N \to \N$. Suppose that there is a single-valued $\mathsf{F}\S_2\P$ algorithm  $A$ such that for infinitely many $n \in \N$, $A(1^{2^n})$ takes $\alpha(n)$ bits of advice and outputs a string $f_n \in \bs{2^n} \setminus \Range(\TT_{n,s(n)})$. Then $\S_2\E/_{\alpha(n)}\not\subset \SIZE[s(n)]$.
\end{fact}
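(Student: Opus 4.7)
The plan is to build a language $L \in \S_2\E/_{\alpha(n)}$ whose $n$-bit slice coincides with the hard truth table $f_n$ on every input length $n$ at which $A$ succeeds; the circuit lower bound then follows from $\CircuitSize(f_n) > s(n)$, which is immediate from $f_n \notin \Range(\TT_{n,s(n)})$ and the definition of $\TT_{n,s(n)}$.

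All the real work is in placing $L$ in $\S_2\E/_{\alpha(n)}$. Let $V_A$ be the polynomial-time verifier witnessing the single-valued $\mathsf{F}\S_2\P$ algorithm $A$. On each \emph{good} input length $n$, write $a_n \in \bs{\alpha(n)}$ for the advice string under which $A(1^{2^n})$ outputs $f_n$; on the remaining lengths pick $a_n$ arbitrarily, and set $L_n$ to be whatever the construction below happens to decide. My $\S_2\E$-verifier $V$ acts as follows: on input $x \in \bs{n}$ with advice $a_n$ and two proofs $\pi_1, \pi_2 \in \bs{2^{O(n)}}$, simulate $V_A(1^{2^n}, a_n, \pi_1, \pi_2)$ to obtain a string $y \in \bs{2^n}$ (treat a $\bot$ output, if any, as the all-zeros string), and accept iff the $x$-th bit of $y$ is $1$. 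Since $V_A$ runs in time polynomial in $|1^{2^n}| + |\pi_1| + |\pi_2| = 2^{O(n)}$, so does $V$, matching the $\S_2\E$ budget on $n$-bit inputs.

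To check the $\S_2$ structure on any good $n$ and any $x \in \bs{n}$: if $(f_n)_x = 1$, clause (a) of \autoref{def: single-valued algorithms} provides a $\pi_1^\star$ such that $V_A(1^{2^n}, a_n, \pi_1^\star, \pi_2) = f_n$ for \emph{every} $\pi_2$, so $V(x, \pi_1^\star, \pi_2) = 1$ always; symmetrically, if $(f_n)_x = 0$, clause (b') provides a $\pi_2^\star$ forcing $V_A$ to output $f_n$ for every $\pi_1$, hence $V(x, \pi_1, \pi_2^\star) = 0$ always. This is exactly the $\exists\pi_1\,\forall\pi_2 / \exists\pi_2\,\forall\pi_1$ pattern of \autoref{def: symmetric time}, so $L \in \S_2\E/_{\alpha(n)}$.

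I do not foresee a genuine obstacle. The one point requiring mild care is the bookkeeping between scales: $V_A$ is polynomial-time on an input of size $2^n$, which translates to $2^{O(n)}$ time and $2^{O(n)}$-long proofs for $V$, i.e.\ precisely what $\S_2\E$ affords. Arbitrary behaviour on bad input lengths is harmless, since only the infinitely many good $n$ are needed to conclude $L \notin \SIZE[s(n)]$.
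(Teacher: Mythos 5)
Your proof is correct and takes essentially the same route as the paper's own (two-line) proof sketch: define $L$ so that the truth table of $L\cap\bs{n}$ is $A(1^{2^n})$, and observe that clauses (a) and (b$'$) of \autoref{def: single-valued algorithms} transfer directly into the $\exists\pi_1\forall\pi_2$ / $\exists\pi_2\forall\pi_1$ conditions of \autoref{def: symmetric time}, with the $2^{O(n)}$ time and proof lengths exactly matching the $\S_2\E$ budget. The one point you (like the paper) gloss over is that on bad input lengths the verifier $V_A$ need not satisfy the $\S_2$ promise under any advice, so ``arbitrary behaviour is harmless'' is not literally automatic; it is harmless in the paper's applications because there $V_A$ deterministically outputs $\bot$ on bad lengths, which your all-zeros convention turns into a trivially valid $\S_2$ verifier.
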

\begin{proof}[Proof sketch]
	We define a language $L$ such that the truth table of the characteristic function of $L \cap \{0,1\}^n$ is $A(1^{2^n})$. It is easy to see that $L \notin \SIZE[s(n)]$ and $L \in \S_2\E/_{\alpha(n)}$.
\end{proof}
\section{Korten's Reduction}\label{sec:korten}

Our results crucially rely on a reduction in \cite{Korten21} showing that proving circuit lower bounds is ``the hardest explicit construction'' under $\P^\NP$ reductions.

\paragraph{Notation.} Let $s$ be a string of length $n$. We will always use $0$-index (i.e., the first bit of $s$ is $s_0$ and the last bit of $s$ is $s_{n-1}$). Let $i<j$, we use $s_{[i, j]}$ to denote the substring of $s$ from the $i$-th bit to the $j$-th bit, and $s_{[i, j)}$ to denote the substring of $s$ from the $i$-th bit to the $(j-1)$-th bit. (Actually, we will use the notation $s_{[i, j)}$ more often than $s_{[i, j]}$ as it is convenient when we describe the GGM tree.) We also use $s_1\circ s_2 \circ \dots \circ s_k$ to denote the concatenation of $k$ strings.

\subsection{GGM Tree and the Reduction}

We first recall the GGM tree construction from~\cite{GoldreichGM86}, which is used in a crucial way by~\cite{Korten21}.

\begin{definition}[The GGM tree construction {\cite{GoldreichGM86}}]
	Let $C\colon\{0, 1\}^n \to \{0, 1\}^{2n}$ be a circuit. Let $n,T \in \N$ be such that $T \ge 4n$ and let $k$ be the smallest integer such that $2^k n \ge T$. The function $\GGM_T[C]\colon\{0, 1\}^n \to \{0, 1\}^T$ is defined as follows.
	
	Consider a perfect binary tree with $2^k$ leaves, where the root is on level $0$ and the leaves are on level $k$. Each node is assigned a binary string of length $n$, and for $0\le j < 2^i$, denote $v_{i, j} \in \{0, 1\}^n$ the value assigned to the $j$-th node on level $i$. Let $x\in\{0, 1\}^n$. We perform the following computation to obtain $\GGM_T[C](x)$: we set $v_{0, 0} := x$, and for each $0\le i < k$, $0 \le j < 2^i$, we set $v_{i+1, 2j} := C(v_{i, j})_{[0, n)}$ (i.e., the first half of $C(v_{i, j})$) and $v_{i+1, 2j+1} := C(v_{i, j})_{[n, 2n)}$ (i.e., the second half of $C(v_{i, j})$). (We say the nodes $(i+1, 2j)$ and $(i+1, 2j+1)$ are ``children'' of $(i, j)$.)
	
	Finally, we concatenate all values of the leaves and take the first $T$ bits as the output:
	\[\GGM_T[C](x) := (v_{k, 0} \circ v_{k, 1} \circ \dots \circ v_{k, 2^k-1})_{[0, T)}.\]
\end{definition}

\begin{lemma}[The output of GGM tree has a small circuit]\label{lemma: GGMEval}
	Let $\GGMEval(C, T, x, i)$ denote the $i$-th bit of $\GGM_T[C](x)$. There is an algorithm running in $\WT{O}\mleft(|C| \cdot \log T\mright)$ time that, given $C, T, x, i$, outputs $\GGMEval(C, T, x, i)$.
\end{lemma}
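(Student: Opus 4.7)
The proof I would give is the standard path-traversal argument for tree-based constructions like GGM.

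The plan is to observe that each output bit of $\GGM_T[C](x)$ lies inside exactly one leaf of the binary tree, and the label of that leaf depends only on the labels along the root-to-leaf path. So to read the $i$-th output bit we do not need to evaluate the whole tree (which has $2^k \approx T/n$ leaves); it suffices to walk down one path of length $k = \lceil \log_2(T/n)\rceil = O(\log T)$.

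Concretely, I would first compute the leaf index $j := \lfloor i/n \rfloor \in \{0,\dots,2^k-1\}$ and the intra-leaf offset $r := i \bmod n$, so that the desired bit is the $r$-th bit of $v_{k,j}$. Writing $j$ in binary as $b_1 b_2 \cdots b_k$ (with $b_1$ the most significant bit), the path from the root to the $j$-th leaf visits the nodes $(0,0), (1, b_1), (2, b_1 b_2), \dots, (k, j)$. Since by definition $(v_{i+1,2j'}, v_{i+1,2j'+1}) = C(v_{i,j'})$, once we know $v_{i,j'}$ we obtain the label of either child by one evaluation of $C$ followed by selecting the correct half according to the next bit of $j$. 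Thus, starting from $v_{0,0} = x$ and iteratively applying $C$ exactly $k$ times, tracking only the single node on the path, we compute $v_{k,j}$ in time $k \cdot (|C| + O(n)) = O(|C| \cdot \log T)$, since $n \le |C|$. Reading off the $r$-th bit is $O(1)$, and all the bookkeeping on $i$, $j$ and $T$ costs only $\polylog(T)$ additional time, yielding the claimed $\WT{O}(|C|\cdot \log T)$ bound.

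There is no real obstacle here: the lemma is essentially the content of the GGM path-opening trick, and the only subtlety worth spelling out is that $k$ is chosen as the smallest integer with $2^k n \ge T$, so $k = O(\log T)$ even when $n$ is close to $T$, and that the output of $\GGM_T[C](x)$ is the prefix of length $T$ of the concatenation of the leaves, so the index computation $(j,r) = (\lfloor i/n\rfloor, i \bmod n)$ is well-defined for every $i \in \{0,\dots,T-1\}$. As an immediate corollary (used later in the paper), hard-wiring $x$ and unrolling the $k$ evaluations of $C$ yields a circuit of size $\WT{O}(|C|\cdot \log T)$ computing $\GGM_T[C](x)$ as a function of $i$, which is the form in which this lemma is used in Korten's reduction.
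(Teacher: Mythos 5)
Your proof is correct and follows exactly the same route as the paper's (much terser) proof sketch: reduce the $i$-th output bit to the single leaf $v_{k,\lfloor i/n\rfloor}$ and compute it by descending the root-to-leaf path with $k = O(\log T)$ evaluations of $C$. Your added details (the index arithmetic, the bound $n \le |C|$, and the remark about hard-wiring $x$ to get a small circuit) are all accurate and consistent with how the lemma is used later.
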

\begin{proof}[Proof Sketch]
	We first note that to compute the $i$-th bit of $\GGM_T[C](x) := (v_{k, 0} \circ v_{k, 1} \circ \dots \circ v_{k, 2^k-1})_{[0, T)}$, it suffices to compute $v_{k,\lfloor i/n\rfloor}$. Computing $v_{k,\lfloor i/n\rfloor}$ can be done by descending from the root of the GGM tree to the leave $(k,\lfloor i/n\rfloor)$, which takes $\WT{O}(|C| \cdot \log T)$ time.
\end{proof}

It is shown in \cite{Korten21} that the range avoidance problem for $C$ reduces to the range avoidance problem for $\GGM_T[C]$. In what follows, we review this proof, during which we also define the \emph{computational history} of ``solving range avoidance of $C$ from $\GGM_T[C]$'', which will be crucial in our main proof.

\begin{algorithm2e}
	\caption{$\Kor(C, f)$: Korten's reduction}\label{algo: Korten-algo}
	\SetKwInput{KwSetting}{Setting}
	\SetKwInput{KwPara}{Parameters}
	\SetKwInput{KwAssumption}{Assumption}
	\SetKwInput{KwAdvice}{Advice}
	\SetKw{KwAccept}{accept}
	\SetKw{KwReject}{reject}
	
	\KwIn{$C\colon \{0, 1\}^n \to \{0, 1\}^{2n}$ denotes the input circuit, and $f \in \bs{T}\setminus \Range(\GGM_T[C])$ denotes the input ``hard'' truth table}
	\KwOut{A non-output of $C$}
	\KwData{The computational history of $\Kor(C, f)$: a pair $(\istar,\jstar)$ and an array $\{v_{i,j}\}_{i,j}$ where $i \in \{0,1\dotsc,k\}$ and $j \in \{0,1,\dotsc,2^i\}$.} %
	Let $k \gets \left\lceil \log_2(T/n) \right\rceil$\;
	Append $f$ with $2^k n-|f|$ zeros at the end\;
	\For{$j\gets 0$ \emph{to} $2^k-1$}{
		$v_{k, j}\gets f_{[jn, (j+1)n)}$\;\tcc{the $j$-th ``block'' of $f$}
	}
	\For {$i\gets k-1$ \emph{downto} $0$}{
		\For {$j\gets 2^i-1$ \emph{downto} $0$}{
			Let $v_{i, j}$ be the lexicographically smallest string in $C^{-1}(v_{i+1, 2j}\circ v_{i+1, 2j+1})$\;
			\tcc{Note that this step needs to invoke the $\NP$ oracle}
			\If {$v_{i, j}$ \emph{does not exist}}{
				For every $(i', j')$ such that $v_{i', j'}$ is not set yet, set $v_{i', j'} \gets \bot$\;
				Set $\istar := i$, and $\jstar := j$\;
				\Return $v_{i+1, 2j}\circ v_{i+1, 2j+1}$\;
			}
		}
	}
	\Return{$\bot$}
\end{algorithm2e}

\begin{lemma}[{Reduction from solving range avoidance of $C$ to solving range avoidance of $\GGM_T[C]$}]\label{lemma: Korten's reduction}
	Let $C\colon\{0, 1\}^n \to \{0, 1\}^{2n}$ be a circuit. Let $f$ be a non-output of $\GGM_T[C]$, i.e., $f\in\{0, 1\}^T\setminus\Range(\GGM_T[C])$.  Then, $\Kor(C, f)$ (as defined in \autoref{algo: Korten-algo}) outputs a non-output of $C$ in deterministic $\poly(T, n)$ time with an $\NP$ oracle.
\end{lemma}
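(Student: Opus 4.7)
The plan is to argue both running time and correctness by direct inspection of \autoref{algo: Korten-algo}, with the key observation being that the ``return $\bot$'' branch is unreachable under the hypothesis $f \notin \Range(\GGM_T[C])$.

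For the running time, the outer loops iterate over the internal nodes of the binary tree, of which there are $2^k - 1 \le 2T/n$ many. Each iteration does one nontrivial step: compute the lexicographically smallest string in $C^{-1}(v_{i+1,2j}\circ v_{i+1,2j+1})$. Standard prefix search with an $\NP$ oracle solves this with $n+1$ queries (each query asks whether a given prefix in $\bs{\le n}$ extends to some preimage under $C$); each query has description size $\poly(|C|, n)$. Leaf initialization and the final concatenation are $O(T)$-time. Since $|C| \le \poly(T, n)$ (we may take $|C|$ to be part of the input), the overall running time is $\poly(T, n)$ with an $\NP$ oracle.

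For correctness, the algorithm terminates in one of two ways. If it exits through the inner ``if $v_{i,j}$ does not exist'' branch, it returns $v_{i+1,2j} \circ v_{i+1,2j+1}$ in a situation where $C^{-1}(v_{i+1,2j} \circ v_{i+1,2j+1}) = \emptyset$, so the returned string is in $\bs{2n} \setminus \Range(C)$, which is what we want. If instead the algorithm reaches the final ``return $\bot$'' line, then every internal node was assigned some $v_{i,j} \in \bs{n}$ with $C(v_{i,j}) = v_{i+1,2j} \circ v_{i+1,2j+1}$. A straightforward downward induction on the level $i$ then shows that the leaf values $(v_{k,0}, \ldots, v_{k,2^k-1})$ produced by unrolling $\GGM_{2^k n}[C](v_{0,0})$ coincide exactly with those set in the leaf-initialization step, namely the length-$n$ blocks of the zero-padded $f$. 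Truncating this leaf concatenation to its first $T$ bits yields $\GGM_T[C](v_{0,0}) = f$, contradicting the hypothesis $f \notin \Range(\GGM_T[C])$. Hence the ``$\bot$'' branch is never taken, completing the proof.

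The main point to watch out for is the zero-padding of $f$ to length $2^k n$: one must check that matching the padded $f$ at the leaves still implies $\GGM_T[C](v_{0,0}) = f$ for the unpadded $f$, which is immediate because $\GGM_T[C]$ is defined as the first $T$ bits of the leaf concatenation. Beyond this small bookkeeping point, both the $\NP$-oracle implementation of lexicographic preimage search and the inductive unrolling of the tree are routine.
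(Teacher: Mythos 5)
Your proposal is correct and follows essentially the same route as the paper's proof: the running time is read off the algorithm (you add the standard $\NP$-oracle prefix-search detail), any non-$\bot$ output is a non-output of $C$ by construction, and reaching the final $\bot$ line would force $f = \GGM_T[C](v_{0,0})$, contradicting $f \notin \Range(\GGM_T[C])$. The zero-padding bookkeeping you flag is handled implicitly in the paper and your treatment of it is fine.
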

\begin{proof}[Proof Sketch]
	The running time of $\Kor(C,f)$ follows directly from its description. Also, note that whenever $\Kor(C,f)$ outputs a string $v_{i+1, 2j}\circ v_{i+1, 2j+1} \in \bs{2n}$, it holds that this string is not in the range of $C$. Therefore, it suffices to show that when $f \in \bs{T}\setminus \Range(\GGM_T[C])$, $\Kor(C,f)$ does not return $\bot$.

    Assume, towards a contradiction, that $\Kor(C,f)$ returns $\bot$.  This means that all the $\{v_{i,j}\}_{i,j}$ values are set.   It follows from the algorithm description that $f = \GGM_T[C](v_{0,0})$, which contradicts the assumption that $f \in \bs{T}\setminus \Range(\GGM_T[C])$.
\end{proof}

In addition, we observe the following trivial fact:
\begin{fact}\label{fact: brute force for sigma2}
	Let $C:\{0, 1\}^n \to \{0, 1\}^{2n}$ be a circuit, $T := 2^{2n}\cdot 2n$, and $f$ be the concatenation of all length-$2n$ strings (which has length $T$). Then $f\not\in\Range(\GGM_T[C])$.
\end{fact}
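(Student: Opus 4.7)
The plan is a direct counting argument that compares the number of distinct length-$2n$ blocks appearing in the two strings. First I would pin down the GGM parameters: since $T = 2^{2n}\cdot 2n = 2^{2n+1}\cdot n$, the smallest $k$ with $2^k n \ge T$ is exactly $k = 2n+1$, so $2^k n = T$ and no padding or truncation occurs. Thus $\GGM_T[C](x) = v_{k,0}\circ v_{k,1}\circ\cdots\circ v_{k,2^k-1}$ has length exactly $T$, and when I split it into consecutive length-$2n$ blocks I get exactly $2^{k-1} = 2^{2n}$ blocks, each being of the form $v_{k,2j}\circ v_{k,2j+1}$.

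Next I would invoke the definition of the GGM tree to observe that the $j$-th such length-$2n$ block equals $C(v_{k-1,j})$, because $v_{k,2j} = C(v_{k-1,j})_{[0,n)}$ and $v_{k,2j+1} = C(v_{k-1,j})_{[n,2n)}$. Consequently every length-$2n$ block of $\GGM_T[C](x)$ lies in $\Range(C)$, which has at most $2^n$ distinct elements since $C$ has $n$-bit inputs. Therefore the string $\GGM_T[C](x)$, for any $x\in\bs{n}$, contains at most $2^n$ \emph{distinct} length-$2n$ blocks.

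On the other hand, by definition $f$ is the concatenation of \emph{all} $2^{2n}$ length-$2n$ strings in some order, so splitting $f$ into its $2^{2n}$ consecutive length-$2n$ blocks yields $2^{2n}$ pairwise distinct blocks. Since $2^{2n} > 2^n$ for every $n\ge 1$, no $x\in\bs{n}$ can satisfy $f = \GGM_T[C](x)$, which is exactly the claim $f\notin \Range(\GGM_T[C])$.

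I do not anticipate a real obstacle here; the only thing to be careful about is aligning the block boundaries with the leaf pairs of the GGM tree, which is automatic once one checks that $T$ is an integer multiple of $2n$ and that $k = 2n+1$ so that the output is unpadded. The whole argument is essentially the pigeonhole observation that $\GGM_T[C]$ can only exhibit $|\Range(C)| \le 2^n$ distinct $C$-output blocks, whereas $f$ by construction exhibits all $2^{2n}$ possible blocks.
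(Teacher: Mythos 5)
Your proof is correct, and it is exactly the argument the paper intends (the paper states this as a trivial fact without proof, but the surrounding text makes clear the intended reasoning is that every aligned length-$2n$ block of a GGM output lies in $\Range(C)$, which has at most $2^n < 2^{2n}$ elements, while $f$ exhibits all $2^{2n}$ blocks). Your parameter check that $k = 2n+1$ and $2^k n = T$, so no padding or truncation disturbs the block alignment, is the right detail to verify.
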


One can combine \autoref{fact: brute force for sigma2} with \autoref{lemma: Korten's reduction} to obtain a brute force algorithm that solves the range avoidance problem in $2^{O(n)}$ time with an $\NP$ oracle. Essentially, this brute force algorithm tests every possible length-$2n$ string against the range of the circuit. It will be the basis of our win-win analysis in \autoref{sec: lower bounds for sigma2}.

Finally, we give the following remark, showing that Korten's reduction relativizes.

\begin{remark}\label{rem:Korten-algo-rel}
	\autoref{algo: Korten-algo} and~\autoref{lemma: Korten's reduction} \emph{relativizes}, in the sense that if the input is actually an oracle circuit $C^{O}$ for some arbitrary oracle, the algorithm still works except now it needs to call an $\NP^{O}$ oracle to find the lexicographically smallest string in $C^{-1}(v_{i+1, 2j}\circ v_{i+1, 2j+1})$.
\end{remark}

\subsection{\texorpdfstring{$\Pi_1$}{Pi1} Verification of the History of \texorpdfstring{$\Kor(C,f)$}{Kor(C,f)}}

In what follows, we say that $(i, j) < (i', j')$ if either $i < i'$ or ($i = i'$ and $j < j'$) (that is, we consider the lexicographical order of pairs). Observe that \autoref{algo: Korten-algo} processes all the pairs $(i, j)$ in the reverse lexicographic order.

\begin{definition}[The computational history of $\Kor(C,f)$]\label{def: computation history of Korten}
	Let $n,T \in \N$ be such that $\log T \le n \le T$.	Let $C\colon\{0, 1\}^n \to \{0, 1\}^{2n}$ be a circuit, and $f\in\{0, 1\}^T$ be a ``hard truth table'' in the sense that $f\not\in\Range(\GGM_T[C])$. The \emph{computational history} of $\Kor(C, f)$, denoted as 
	\[
	\HisKor(C, f),
	\]
	consists of $(\istar, \jstar)$, as well as the concatenation of $v_{i, j}$ for every $0\le i < k$ and $0\le j < 2^i$, in the lexicographical order of $(i, j)$ ($(\istar, \jstar)$ and the $v_{i, j}$ are defined in~\autoref{algo: Korten-algo}). Each $v_{i, j}$ is encoded by $n+1$ bits ${\sf enc}(v_{i, j})$, where if $v_{i, j} \in \{0, 1\}^n$ then ${\sf enc}(v_{i, j}) = 0\circ v_{i, j}$, and if $v_{i, j} = \bot$ then ${\sf enc}(v_{i, j}) = 1^{n+1}$. The length of this history is at most $(2^{k+1}-1)(n+1) + 2\log T \le 5T$, and for convenience we always pad zeros at the end so that its length becomes exactly $5T$.
\end{definition}

The following lemma summarizes the properties of the computational history construction above required for the $\Sigma_2 \E$ lower bound in the next section.

\begin{lemma}\label{lemma:Kor-prop-sigma2}
	Let $n,T \in \N$ be such that $\log T \le n \le T$. Let $C\colon\{0, 1\}^n \to \{0, 1\}^{2n}$ be a circuit and $f\in\bs{T} \setminus \Range(\GGM_T[C])$. Let $h \coloneqq \HisKor(C, f)$ and $z \coloneqq \Kor(C, f)$.
	
	\begin{enumerate}
		\item {\bf (history contains input/output)} There is a $\poly(\log T)$-time one-query oracle algorithm $\In$ and an $O(n)$-time oracle algorithm $\Out$, both having input parameters $T, n$ and taking a string $\tilde{h} \in \bs{5T}$ as oracle, such that the following hold:\label{item: Input-Output}
		
		\begin{enumerate}
			\item When given $h$ as the oracle, $\In_{T,n}$ takes an additional input $i \in \zeroUpto{5T-1}$ and outputs $f_i$. \label{item: history-contain-input}
			
			\item When given $h$ as the oracle, $\Out_{T,n}$ outputs $z = \Kor(C, f)$. \label{item: history-contain-output}
		\end{enumerate}
		
		\item {\bf ($\Pi_1$ verification of the history)} There is an oracle algorithm $V$ with input parameters $T,n$ such that the following holds:\label{item: Pi1 verification of history}
		
		\begin{enumerate}
			\item $V$ takes $\tilde{f} \in \bs{T}, \tilde{h} \in \bs{5 T}$ as oracles and $C$ and $w \in \bs{5 \cdot (\log T + n)}$ as inputs. It runs in $\poly(n)$ time.
			\item $h = \HisKor(C,f)$ is the unique string from $\bs{5 T}$ satisfying the following: \label{item: uniquely verifiable history}
			\[
			V^{f,h}(C, w) = 1 \qquad\text{for every $w \in \bs{5 \cdot (\log T + n)}$.}
			\]
		\end{enumerate}
	\end{enumerate}
\end{lemma}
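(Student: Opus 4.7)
For \autoref{item: Input-Output}, my plan is to exploit the completely rigid layout of $h = \HisKor(C,f)$: the first $2\log T$ bits encode $(\istar,\jstar)$, then the $(n+1)$-bit encodings $\mathsf{enc}(v_{i,j})$ appear in lexicographic order of $(i,j)$, then zero padding up to length $5T$. Because the leaf labels $v_{k,j}$ are set at the very start of \autoref{algo: Korten-algo} to $f_{[jn,(j+1)n)}$ and are never $\bot$, their positions inside $h$ depend only on $T$, $n$, and $j$. To implement $\In_{T,n}(i)$, I would compute $j = \lfloor i/n \rfloor$, use $T$ and $n$ to look up the offset of $\mathsf{enc}(v_{k,j})$ in $h$, and issue a single oracle query for the bit at position $1 + (i \bmod n)$ inside that block (the leading $1$ skips the $0$-bit marking a non-$\bot$ label). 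For $\Out_{T,n}$, I would read the first $2\log T$ bits to recover $(\istar,\jstar)$, locate and read the two $(n+1)$-bit blocks encoding $v_{\istar+1,2\jstar}$ and $v_{\istar+1,2\jstar+1}$, strip the leading $0$s, and concatenate the two $n$-bit strings, all in $O(n)$ time.

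For \autoref{item: Pi1 verification of history}, my plan is to realise $V$ as a local checker of a family of constraints indexed by $w$. Letting $(\tilde{\istar},\tilde{\jstar})$ and $\tilde{v}_{i,j}$ denote the quantities read off from $\tilde{h}$, the constraints are: (1)~each leaf encoding in $\tilde{h}$ matches the corresponding $n$-bit block of $\tilde{f}$; (2)~for every $(i,j) \le (\tilde{\istar},\tilde{\jstar})$, $\tilde{v}_{i,j} = \bot$; (3)~for every $z \in \bs{n}$, $C(z) \ne \tilde{v}_{\tilde{\istar}+1,2\tilde{\jstar}} \circ \tilde{v}_{\tilde{\istar}+1,2\tilde{\jstar}+1}$; (4)~for every $(i,j) > (\tilde{\istar},\tilde{\jstar})$ with $i < k$, $\tilde{v}_{i,j} \ne \bot$ and $C(\tilde{v}_{i,j}) = \tilde{v}_{i+1,2j} \circ \tilde{v}_{i+1,2j+1}$; and (5)~for the same $(i,j)$ and every $v' < \tilde{v}_{i,j}$, $C(v') \ne \tilde{v}_{i+1,2j} \circ \tilde{v}_{i+1,2j+1}$. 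A handful of sanity checks additionally enforce that $(\tilde{\istar},\tilde{\jstar})$ is a legal internal node index and that the trailing padding bits of $\tilde{h}$ are zero. Each constraint instance is specified by a small type tag plus a node pair $(i,j)$ of $O(\log T)$ bits and, optionally, an $n$-bit string, all of which fit well within the $5(\log T + n)$-bit budget of $w$; each check reads $O(n)$ bits of $\tilde{h}$ and $\tilde{f}$ and evaluates $C$ at most twice, so $V$ runs in $\poly(n)$ time.

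The heart of the argument is the uniqueness claim: if $\tilde{f} = f$, the only $\tilde{h}$ that passes every check is $h = \HisKor(C,f)$. Let $(\istar,\jstar)$ and $\{v_{i,j}\}$ be the genuine values produced by $\Kor(C,f)$; these exist because $f \notin \Range(\GGM_T[C])$ (\autoref{lemma: Korten's reduction}). Constraint~(1) forces $\tilde{v}_{k,j} = v_{k,j}$ for every leaf. A reverse-lex induction driven by constraints~(4) and~(5) then forces $\tilde{v}_{i,j}$ to be the lexicographically smallest $C$-preimage of the already-matched children, yielding $\tilde{v}_{i,j} = v_{i,j}$ for every $(i,j) > (\tilde{\istar},\tilde{\jstar})$. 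Comparing now $(\tilde{\istar},\tilde{\jstar})$ with $(\istar,\jstar)$: if $(\tilde{\istar},\tilde{\jstar}) < (\istar,\jstar)$, then the children of $(\istar,\jstar)$ lie strictly lex-above $(\tilde{\istar},\tilde{\jstar})$ and hence agree with the true values, but constraint~(4) demands a $C$-preimage of $v_{\istar+1,2\jstar} \circ v_{\istar+1,2\jstar+1}$, contradicting that $\Kor$ failed at $(\istar,\jstar)$; symmetrically, if $(\tilde{\istar},\tilde{\jstar}) > (\istar,\jstar)$, then the real $\Kor$ successfully processed $(\tilde{\istar},\tilde{\jstar})$ and produced an actual preimage, violating constraint~(3). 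Hence $(\tilde{\istar},\tilde{\jstar}) = (\istar,\jstar)$, and constraint~(2) pins down the remaining $\bot$ labels, so $\tilde{h} = h$.

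The most delicate point will be making sure every edge case of the uniqueness case split really produces a contradiction: in particular, I have to rule out $(\tilde{\istar},\tilde{\jstar})$ pointing outside the valid internal-node range (handled by the sanity constraints) and I have to verify the simple but essential observation that any node strictly lex-above $(\tilde{\istar},\tilde{\jstar})$ has both children also strictly lex-above $(\tilde{\istar},\tilde{\jstar})$, which is exactly what lets the induction pass through the children of $(\istar,\jstar)$ in the first case. A secondary bookkeeping task is confirming that every constraint truly reads only $O(n)$ bits of its oracles, despite $\tilde{h}$ having length $5T$; that is precisely why $w$ must itself carry the node index $(i,j)$ and the auxiliary string $v'$ rather than letting $V$ try to enumerate them internally.
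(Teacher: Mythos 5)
Your proposal is correct and follows essentially the same route as the paper: the same rigid-layout implementations of $\In$ and $\Out$ (which the paper dismisses as straightforward), and the same five families of local constraints for the $\Pi_1$ verifier, each indexed by a node pair plus an optional $n$-bit witness packed into $w$. Your reverse-lexicographic induction and the two-sided case analysis on $(\tilde{\istar},\tilde{\jstar})$ versus $(\istar,\jstar)$ is in fact a more detailed justification of the uniqueness claim than the paper provides, which simply asserts that the checks uniquely determine $h$.
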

\begin{proof}
	\newcommand{\tdh}{\tilde{h}}
	From the definition of $\HisKor(C,f)$, the construction of $\In_{T,n}$ and $\Out_{T,n}$ are straightforward. Now we describe the verifier $V^{f,\tdh}$, where $f \in \bs{T}$ and $\tdh \in \bs{5T}$. Note that here we fix the first oracle of $V$ to be the input truth table $f$, while the second oracle $\tdh$ can be any string from $\bs{5T}$.
	
	First, $V$ reads $(\istar,\jstar)$ from $\tdh$. Note that the rest of $\tdh$ can be parsed as an array $\{v_{i,j}\}_{i,j}$ where $i \in \{0,1\dotsc,k\}$ and $j \in \{0,1,\dotsc,2^i\}$. We will think of $V$ as performing at most $2^{|w|}$ checks, each of which \emph{passes} or \emph{fails}. To show the second item of the lemma, we need to show that (1) if a string $\tdh$ passes all the checks, then it must be the case that $\tdh = h$; and (2) $h$ passes all the checks.
	
	Specifically, $V$ checks $\tdh$ as follows:
	
	\begin{itemize}
		\item The values written on the leaves of $\{v_{i,j}\}$ are indeed $f$. That is, for every $j \in \zeroUpto{2^k-1}$, check that $v_{k,j}$ is consistent with the corresponding block in $f$.
		
		\item For every $(i, j) > (\istar, \jstar)$ such that $i < k$, $C(v_{i, j}) = v_{i+1, 2j} \circ v_{i+1, 2j+1}$. (That is, the value $v_{i, j}$ is consistent with its two children.)
		\item For every $(i, j) > (\istar, \jstar)$ such that $i < k$, for every $x\in\{0, 1\}^n$ that is lexicographically smaller than $v_{i, j}$, $C(x) \ne v_{i+1, 2j} \circ v_{i+1, 2j+1}$. (That is, the value $v_{i, j}$ is the lexicographically first preimage of its two children.)
		\item For every $x\in\{0, 1\}^n$, $C(x) \ne v_{\istar + 1, 2\jstar} \circ v_{\istar + 1, 2\jstar + 1}$. (That is, the two children of $(\istar, \jstar)$ form a non-output of $C$; by the previous checks, $(\istar, \jstar)$ is the lexicographically largest such pair.)
		\item For every $(i, j) \le (\istar, \jstar)$, $v_{i, j} = \bot$.
	\end{itemize}
	
	Note that the above can be implemented with a universal ($\forall$) quantification over at most $5 \cdot (\log T + n)$ bits. First, one can see that by the definition of the correct history $h$ (\autoref{def: computation history of Korten}), $h$ passes all the checks above. Second, one can indeed see that all the conditions above \emph{uniquely determine} $h$, and therefore any $\tdh$ passing all the checks must equal $h$.%
\end{proof}

Again, it is easy to observe that~\autoref{def: computation history of Korten} and~\autoref{lemma:Kor-prop-sigma2} relativize.
\begin{remark}\label{rem:Kor-prop-sigma2-rel}
	\autoref{def: computation history of Korten} and~\autoref{lemma:Kor-prop-sigma2} \emph{relativize}, in the sense that if $C$ is an oracle circuit $C^{O}$ for some arbitrary oracle,~\autoref{def: computation history of Korten} needs no modification since~\autoref{algo: Korten-algo} relativizes, and~\autoref{lemma:Kor-prop-sigma2} holds with the only modification that $V$ now also need to take $O$ as an oracle (since it needs to evaluate $C$).
\end{remark}

\section{Circuit Lower Bounds for \texorpdfstring{$\Sigma_2\E$}{Sigma2E}}\label{sec: lower bounds for sigma2}

In this section, we prove our near-maximum circuit lower bounds for $\Sigma_2 \E$ by providing a new single-valued $\mathsf{F}\Sigma_2\P$ algorithm for $\Avoid$.

Let $\{C_n \colon \{0, 1\}^n \to \{0, 1\}^{2n}\}_{n \in \N}$ be a $\P$-uniform family of circuits. We show that there is a single-valued $\mathsf{F}\Sigma_2\P$ algorithm $A$ that, on input $1^n$, outputs a canonical string that is outside the range of $C_n$ for infinitely many $n\in\N$.

\begin{theorem}\label{theo:Sigma-2-algo-for-range-avoidance}
	Let $\{C_n \colon \{0, 1\}^n \to \{0, 1\}^{2n}\}_{n \in \N}$ be a $\P$-uniform family of circuits. There is a single-valued $\mathsf{F}\Sigma_2\P$ algorithm $A$ with one bit of advice such that for infinitely many $n \in \N$, $A(1^n)$ outputs $y_n \in \bs{2n} \setminus \Range(C_n)$. 
\end{theorem}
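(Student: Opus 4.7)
The plan is to instantiate the iterative win-win paradigm of~\cite{ChenLORS23} via Korten's reduction and the history-verification machinery of~\autoref{lemma:Kor-prop-sigma2}. Fix a large constant $\alpha$ (to be chosen so that the ``hardness'' threshold beats Korten's reduction requirements) and set $\beta := 2\alpha$. For each sufficiently large starting length $n_0$, let $t := \lceil \log_2 \log n_0 \rceil$ and define the sequence $n_{i+1} := n_i^{\beta}$ for $i \in \zeroUpto{t-1}$. The single bit of advice will encode which index $i \in \zeroUpto{t}$ the algorithm ``succeeds on'' at input length $n_i$. Since different choices of $n_0$ give disjoint windows, succeeding on at least one $n_i$ per window yields infinitely many input lengths on which $A$ produces a canonical non-output of $C_{n_i}$.

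The core loop is the following inductive definition of strings $f_i \in \bs{T_i}$ and algorithms $\ALG_i := \Kor(C_{n_i}, f_i)$. First, using~\autoref{fact: brute force for sigma2}, set $f_0$ to be the concatenation of all length-$2n_0$ strings, so $T_0 = 2n_0 \cdot 2^{2n_0}$ and $\ALG_0$ is simply the $2^{O(n_0)}$-time brute-force search for the lexicographically smallest non-output of $C_{n_0}$. Inductively, once $f_i$ is defined, let $f_{i+1} := \HisKor(C_{n_i}, f_i)$, of length $T_{i+1} \le 5 T_i$; note that by construction $f_i$ appears as a substring of $f_{i+1}$ (cf.~\autoref{item: history-contain-input} of~\autoref{lemma:Kor-prop-sigma2}), so a circuit of size $s$ computing $f_{i+1}$ yields a circuit of size $s + O(\log T_{i+1})$ computing each earlier $f_{t}$ with $t \le i$. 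A standard calculation (identical to the one in~\cite{ChenLORS23}) shows that with $\beta = 2\alpha$ and $t = \Theta(\log \log n_0)$ steps, we reach $T_t \le \poly(n_t)$, i.e.\ the final string $f_t$ has only polynomial length in $n_t$.

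The win-or-improve mechanism at stage $i \in \zeroUpto{t-1}$ is the following dichotomy. \textbf{(Improve)} If $f_{i+1}$ has circuit complexity greater than $n_{i+1}^{\alpha}$ (which is more than enough to exceed $|C_{n_{i+1}}|^{O(1)}$), then by~\autoref{lemma: GGMEval} it lies outside $\Range(\GGM_{T_{i+1}}[C_{n_{i+1}}])$, so~\autoref{lemma: Korten's reduction} guarantees $\ALG_{i+1} = \Kor(C_{n_{i+1}}, f_{i+1})$ outputs a non-output of $C_{n_{i+1}}$; we advance to stage $i+1$. \textbf{(Win)} Otherwise $f_{i+1}$ has a $\poly(n_{i+1})$-size circuit, and then (as described next) $\ALG_i$ can be simulated by a single-valued $\mathsf{F}\Sigma_2\P$ algorithm running in $\poly(n_{i+1}) = \poly(n_i)$ time that outputs its canonical answer at input length $n_i$. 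At the terminal stage $t$, $\ALG_t$ is an $\FP^\NP$ algorithm (hence trivially a single-valued $\mathsf{F}\Sigma_2\P$ algorithm) running in $\poly(n_t)$ time. In every branch, at some index $i \in \zeroUpto{t}$ we obtain a $\poly(n_i)$-time single-valued $\mathsf{F}\Sigma_2\P$ algorithm producing $y_{n_i} \in \bs{2n_i} \setminus \Range(C_{n_i})$; the advice bit selects this $i$ for each window.

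The main technical step is the Win case: simulating $\ALG_i = \Kor(C_{n_i}, f_i)$ by an $\mathsf{F}\Sigma_2\P$ algorithm assuming $f_{i+1}$ has a $\poly(n_{i+1})$-size circuit. The existential proof $\pi_1$ supplied to the $\Sigma_2$ verifier will be a single $\poly(n_{i+1})$-size circuit $\WT{C}_{i+1}$ allegedly computing $f_{i+1}$; from $\WT{C}_{i+1}$, by truncating/projecting to the appropriate substring (using~\autoref{item: history-contain-input} and $\P$-uniformity of the $C_{n_i}$ family), we derive circuits $\WT{C}_t$ allegedly computing $f_t$ for each $t \le i$, and $f_0$ has an explicit $\poly(n_0)$-size circuit $\WT{C}_0$ by~\autoref{fact: brute force for sigma2}. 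The universal proof $\pi_2$ provides, for each $t \in \zeroUpto{i}$, a string $w_t \in \bs{5(\log T_{t+1} + n_t)}$ to be fed into the verifier $V$ of~\autoref{item: Pi1 verification of history} applied to $(C_{n_t}, \WT{C}_{t+1}, \WT{C}_t)$. The deterministic verifier $V_A(1^{n_i}, \pi_1, \pi_2)$ evaluates $V^{\tt(\WT{C}_{t+1}), \tt(\WT{C}_t)}(C_{n_t}, w_t)$ for each $t$; crucially, each such call makes only $\poly(n_t)$ queries, and those queries can be answered in $\poly(n_{i+1})$ time by evaluating the (small) circuits $\WT{C}_{t+1}, \WT{C}_t$, so the whole check runs in polynomial time even though $\tt(\WT{C}_{t+1})$ itself is exponentially long. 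If all checks pass for every $w_t$, then by the uniqueness clause of~\autoref{item: uniquely verifiable history} applied inductively, $\WT{C}_{i+1}$ really computes $f_{i+1}$, so invoking $\Out$ from~\autoref{item: Input-Output} on $\tt(\WT{C}_{i+1})$ (simulated via circuit evaluation) yields the canonical string $\Kor(C_{n_i}, f_i)$; any failing check forces the verifier to output $\bot$. This fulfills the conditions (a) and (b) of~\autoref{defi:sv-algos}, and combined with~\autoref{fact: range avoidance for TT implies circuit lower bounds}-style reasoning completes the proof. The hardest part is precisely this chained simulation: threading the single ``master'' small circuit for $f_{i+1}$ down through all previous levels while keeping verification local enough to run in $\poly(n_i)$ time.
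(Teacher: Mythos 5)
Your proposal follows essentially the same route as the paper's proof: the iterative win-win over a tower of input lengths, $f_{i+1} := \HisKor(C_{n_i}, f_i)$, the $\Pi_1$ local verification of the history (\autoref{lemma:Kor-prop-sigma2}) chained down through all levels via $\In$, and the easy-witness simulation in the Win case. The only structural difference is cosmetic: you take $\pi_1$ to be a small circuit allegedly computing $f_{i+1}$ and pivot the dichotomy on the circuit complexity of $f_{i+1}$, whereas the paper pivots on whether $f_{i+1} \in \Range(\GGM_{T_{i+1}}[C_{n_{i+1}}])$ and takes $\pi_1$ to be a GGM seed, using \autoref{lemma: GGMEval} as the ``small circuit.'' Both work; the paper's choice slightly streamlines the case analysis because the same object ($\GGM$-range membership) governs both the validity of $\Kor(C_{n_k},f_k)$ and the existence of the witness.

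There is, however, one concrete quantitative error: you set $t := \lceil \log_2 \log n_0 \rceil$ and claim $T_t \le \poly(n_t)$ after that many stages. Since $T_{i+1} \le 5T_i$ and $T_0 = 2^{\Theta(n_0)}$, while $n_i = n_0^{\beta^i}$, the condition $T_t \le \poly(n_t)$ requires $\beta^t \log n_0 = \Omega(n_0)$, i.e.\ $t = \Theta(\log n_0)$ — exactly as in \cite{ChenLORS23} and in the paper. With $t = \Theta(\log\log n_0)$ you only reach $n_t = 2^{\polylog(n_0)} \ll T_t = 2^{\Theta(n_0)}$, so your terminal case (``$\ALG_t$ is an $\FP^\NP$ algorithm running in $\poly(n_t)$ time'') fails: computing $f_t$ alone takes $2^{\Omega(n_0)}$ time. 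The fix is just to take $t = \Theta(\log n_0)$. Relatedly, you assert rather than arrange that different windows are disjoint; taking every large $n_0$ as a starting point produces overlapping windows (which would make the advice bit and the canonical output ambiguous), so you need a sparse sequence of starting lengths, e.g.\ the paper's $n^{(\ell+1)} = 2^{2^{n^{(\ell)}}}$, chosen so that $n_t^{(\ell)} < n_0^{(\ell+1)}$. Both are routine repairs; the core argument is sound.
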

\begin{proof} We begin with some notation.
	
	\paragraph*{Notation.}
	Let $n^{(1)}$ be a large enough power of $2$, $n^{(\ell)} = 2^{2^{n^{(\ell-1)}}}$ for each integer $\ell > 1$. Let $n_0^{(\ell)} = n^{(\ell)}$ and $t^{(\ell)} = O\mleft(\log n_0^{(\ell)}\mright)$ be parameters that we set later. For each $1\le i\le \tl$, let $\nell_i := \mleft(\nell_{i-1}\mright)^{10}$. To show our algorithm $A$ works on infinitely many input lengths, we will show that for every $\ell \in \N$, there is an input length $\nell_{i}$ for some $i \in \zeroUpto{\tl}$ such that $A$ works.
	
	Fix $\ell \in \N$. From now on, for convenience, we will use $n_i$ and $t$ to denote $\nell_i$ and $\tl$, respectively.
		
	\paragraph*{Specifying $T_i$ and $f_i$.} For each input length $n_i$, we will specify a parameter $T_i \in \N$ and a string $f_i \in \bs{T_i}$. Our win-win analysis is based on whether $f_i \in \Range(\GGM_{T_i}[C_{n_i}])$ for each $i \in \zeroUpto{t}$.
	
	Let $T_0 := 2^{2n_0}\cdot 2n_0$ and $f_0$ be the concatenation of all length-$2n_0$ strings (which has length $T_0$). From \autoref{fact: brute force for sigma2}, we have that $f_0 \not \in \Range(\GGM_{T_0}[C_{n_0}])$. For every $i \in [t]$, we define
	\[f_i := \HisKor(C_{n_{i-1}}, f_{i-1}).\]
	From \autoref{def: computation history of Korten}, this also means that we have set $T_i = 5 \cdot T_{i-1}$ for every $i \in [t]$.
	
	Let $t$ be the first integer such that $T_{t+1} \le 4n_{t+1}$. Note that we have $T_i = 5^i \cdot T_0 \le 2^{3 n_0 + i \cdot \log 5}$ and $n_i = (n_0)^{10^i} = 2^{\log n_0 \cdot 10^i}$. Hence, we have that $t \le O(\log n_0)$. (Also note that $n_t^{(\ell)} < n_0^{(\ell + 1)}$.)
	
	\paragraph*{Description of our $\mathsf{F}\Sigma_2\P$ algorithm $A$.} Now, let $k \in \{0,1,\dotsc,t\}$ be the largest integer such that $f_k\not\in \Range(\GGM_{T_k}[C_{n_k}])$. Since $f_0\not\in \Range(\GGM_{T_0}[C_{n_0}])$, such a $k$ must exist. Let $z := \Kor(C_{n_k}, f_k)$.  It follows from \autoref{lemma: Korten's reduction} that $z$ is not in the range of $C_{n_k}$. Our single-valued $\mathsf{F}\Sigma_2\P$ algorithm $A$ computes $z$ on input $1^{n_k}$ (see~\autoref{defi:sv-algos}). That is, for some $\ell_1, \ell_2 \le \poly(n_k)$:
	
	\begin{itemize}
		\item There exists $\pi_1\in\{0, 1\}^{\ell_1}$ such that for every $\pi_2\in\{0, 1\}^{\ell_2}$, $V_A(1^{n_k}, \pi_1, \pi_2)$ prints $z$, and
		\item For every $\pi_1\in\{0, 1\}^{\ell_1}$, there exists some $\pi_2\in\{0, 1\}^{\ell_2}$ such that $V_A(1^{n_k}, \pi_1, \pi_2)$ prints either $z$ or $\bot$.
	\end{itemize}
	
	In more details, if $k < t$, then $V_A$ treats $\pi_1$ as an input to the circuit $\GGM_{T_{k+1}}[C_{n_{k+1}}]$, and let 
	\[\hat{f}_{k+1} := \GGM_{T_{k+1}}[C_{n_{k+1}}](\pi_1).\]
	Here, the length of $\pi_1$ is $\ell_1 := n_{k+1} \le \poly(n_k)$. If $k = t$, then $V_A$ defines $\hat{f}_{k+1} := \pi_1$ and $\ell_1 := T_{t+1} \le \poly(n_k)$. It is intended that $\hat{f}_{k+1} = f_{k+1} = \History(C_{n_k}, f_k)$ (which $V_A$ needs to verify). Note that in the case where $k < t$, since $f_{k+1} \in \Range(\GGM_{T_{k+1}}[C_{n_{k+1}}])$, there indeed exists some $\pi_1$ such that $\hat{f}_{k+1} = f_{k+1}$.
	
	We note that \autoref{lemma: GGMEval} provides us ``random access'' to the (potentially very long) string $\hat{f}_{k+1}$: given $\pi_1$ and $j\in[T_{k+1}]$, one can compute the $j$-th bit of $\hat{f}_{k+1}$ in $\poly(n_k)$ time. Also recall from~\autoref{lemma:Kor-prop-sigma2} that for each $i$, $f_{i+1} = \History(C_{n_i}, f_i)$ contains the string $f_i$, which can be retrieved by the oracle algorithm $\In$ described in \autoref{item: Input-Output} of \autoref{lemma:Kor-prop-sigma2}. Therefore, for each $i$ from $k$ downto $1$, we can recursively define $\hat{f}_i$ such that $(\hat{f}_i)_j = \In_{T_i, n_i}^{\hat{f}_{i+1}}(j)$. We define $\hat{f}_0$ to be the concatenation of all length-$(2n_0)$ strings in the lexicographical order, so $\hat{f}_0 = f_0$. Applying the algorithm $\In$ recursively, we obtain an algorithm that given $i \in \zeroUpto{k}$ and $j \in \zeroUpto{T_i-1}$, outputs the $j$-th bit of $\hat{f}_i$. Since $\In$ only makes one oracle query, this algorithm runs in $\poly(n_k)$ time.\footnote{Note that the definition of $f_0$ is so simple that one can directly compute the $j$-th bit of $f_0$ in $\poly(n_0)$ time.}
	
	Then, $V_A$ parses the second proof $\pi_2$ into $\pi_2 = (i, w)$ where $i \in \zeroUpto{k}$ and $w\in\{0, 1\}^{5(\log T_i + n_i)}$. Clearly, the length of $\pi_2$ is at most $\ell_2 := \log (k+1) + 5(\log T_k + n_k) \le \poly(n_k)$. Now, let $V_{\HisKor}$ be the oracle algorithm in \autoref{item: Pi1 verification of history} of \autoref{lemma:Kor-prop-sigma2}, we let $V_A(1^{n_k}, \pi_1, \pi_2)$ check whether the following holds:
	\begin{equation}\label{eq: check in sigma2}
		V_{\HisKor}^{\hat{f}_i, \hat{f}_{i+1}}(C_{n_i}, w) = 1.\footnote{Here $V_{\History}$ also takes input parameters $T_i$ and $n_i$. We omit them in the subscript for notational convenience.}
	\end{equation}
	If this is true, then $V_A$ outputs the string $z := \Out^{\hat{f}_{k+1}}_{T_k, n_k}$, where $\Out$ is the output oracle algorithm defined in \autoref{item: Input-Output} of~\autoref{lemma:Kor-prop-sigma2}. %
 Otherwise, $V_A$ outputs $\bot$.
	
	\paragraph*{The correctness of $A$.} Before establishing the correctness of $A$, we need the following claim:
	\begin{claim}\label{claim:condition}
		$f_{k+1} = \hat{f}_{k+1}$ if and only if the following holds:
		\begin{itemize}
			\item  $V_{\History}^{\hat{f}_i, \hat{f}_{i+1}}(C_{n_i}, w) = 1$ for every $i \in \zeroUpto{k}$ and for every $w \in \{0, 1\}^{5(\log T_i + n_i)}$.
		\end{itemize}
	\end{claim}
	\begin{claimproof}
		First, assume that $f_{k+1} = \hat{f}_{k+1}$. By \autoref{item: history-contain-input} of \autoref{lemma:Kor-prop-sigma2}, we have that $\hat{f}_i = f_i$ for every $i \in \zeroUpto{k+1}$. Recall that by definition, $f_{i+1} = \HisKor(C_{n_i}, f_i)$ for every $i \in \zeroUpto{k}$. Hence, by \autoref{item: uniquely verifiable history} of \autoref{lemma:Kor-prop-sigma2}, we have that for every $i \in \zeroUpto{k}$, and for every $w \in \{0, 1\}^{5(\log T_i + n_i)}$, $V_{\History}^{\hat{f}_i,\hat{f}_{i+1}}(C_{n_i}, w) = 1$ holds.
		
		For the other direction, suppose that for every $i \in \zeroUpto{k}$ and $w \in \{0, 1\}^{5(\log T_i + n_i)}$, we have that $V_{\History}^{\hat{f}_i,\hat{f}_{i+1}}(C_{n_i}, w) = 1$ holds. First recall that $f_0 = \hat{f}_0$ by definition. By an induction on $i \in [k+1]$ and (the uniqueness part of) \autoref{item: uniquely verifiable history} of \autoref{lemma:Kor-prop-sigma2}, it follows that $f_i = \hat{f}_i$ for every $i \in \zeroUpto{k+1}$. In particular, $f_{k+1} = \hat{f}_{k+1}$.
	\end{claimproof}
	
	Now we are ready to establish that $A$ is a single-valued $\mathsf{F}\Sigma_2\P$ algorithm computing $z$ on input $1^{n_k}$. We first prove the completeness of $A$; i.e., there is a proof $\pi_1$ such that for every $\pi_2$, $V_A(1^{n_k}, \pi_1, \pi_2)$ outputs $z = \Kor(C_{n_k}, f_k)$. We set $\pi_1$ to be the following proof: If $k < t$, then $f_{k+1} \in \Range(\GGM_{T_{k+1}}[C_{n_{k+1}}])$, and we can set $\pi_1 \in \{0, 1\}^{n_{k+1}}$ to be the input such that $f_{k+1} = \GGM_{T_{k+1}}[C_{n_{k+1}}](\pi_1)$; if $k = t$, then we simply set $\pi_1 = f_{k+1}$. Then, we have $f_{k+1} = \hat{f}_{k+1}$, and by \autoref{claim:condition}, we know that $V_A$ will output $z = \Kor(C_{n_k}, f_k)$ on every proof $\pi_2$.

	Next, we show that for every $\pi_1$, there is some $\pi_2$ that makes $V_A$ output either $z$ or $\bot$. It suffices to consider $\pi_1$ such that for every $\pi_2$, $V_A(1^{n_k}, \pi_1, \pi_2)\ne \bot$. In this case, every invocation of \autoref{eq: check in sigma2} holds, and thus by \autoref{claim:condition} we know that $f_{k+1} = \hat{f}_{k+1}$. It follows that $\Kor(C_{n_k}, f_k) = z$ and $V_A$ will output $z$ regardless of $\pi_2$.

 	Finally, we generalize $A$ and $V_A$ to work on all inputs $1^n$. On input $1^n$, $V_A$ calculates the largest $\ell$ such that $n^{(\ell)} \le n$, and also calculates the largest ${k'}$ such that $n_{k'}^{(\ell)} \le n$. If $n_{k'}^{(\ell)}\ne n$, then $V_A$ immediately outputs $\bot$ and halts. Otherwise, $V_A$ receives an advice bit indicating whether $k' = k^{(\ell)}$ where $k^{(\ell)}$ is the largest integer such that $f_{k^{(\ell)}}^{(\ell)} \not \in \Range(\GGM_{T_k^{(\ell)}}[C_{n_k^{(\ell)}}])$. If this is the case, then $V_A$ runs the verification procedure above; otherwise, it immediately outputs $\bot$ and halts. It is easy to see that $V_A$ runs in $\poly(n)$ time, and is an infinitely-often single-valued $\mathsf{F}\Sigma_2\P$ algorithm solving the range avoidance problem of $\{C_n\}_{n \in \N}$.
\end{proof}

From~\autoref{rem:Korten-algo-rel} and~\autoref{rem:Kor-prop-sigma2-rel}, one can obverse that the proof above also relativizes. Hence we have the following as well.

\newcommand{\oracle}{\mathcal{O}}

\begin{theorem}[Relativized version of \autoref{theo:Sigma-2-algo-for-range-avoidance}]
	Let $\oracle \colon \bs{*} \to \bs{}$ be any oracle. Let $\{C_n^\oracle \colon \{0, 1\}^n \to \{0, 1\}^{2n}\}_{n \in \N}$ be a $\P$-uniform family of $\oracle$-oracle circuits. There is a single-valued $\mathsf{F}\Sigma_2\P^\oracle$ algorithm $A^\oracle$ with one bit of advice such that for infinitely many $n \in \N$, $A^\oracle(1^n)$ outputs $y_n \in \bs{2n} \setminus \Range(C_n^\oracle)$.
\end{theorem}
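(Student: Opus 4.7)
The plan is to rerun the proof of \autoref{theo:Sigma-2-algo-for-range-avoidance} verbatim after substituting the oracle circuits $C_n^\oracle$ for $C_n$ and giving every subroutine oracle access to $\oracle$. Relativization has already been established for each ingredient used: \autoref{rem:Korten-algo-rel} tells us that $\Kor(C^\oracle, f)$ outputs a string outside $\Range(C^\oracle)$ whenever $f \notin \Range(\GGM_T[C^\oracle])$, now using an $\NP^\oracle$ oracle in place of $\NP$; \autoref{rem:Kor-prop-sigma2-rel} tells us that $\HisKor(C^\oracle, f)$ is well-defined and that the $\Pi_1$-verifier $V_{\HisKor}$ still enforces the same local constraints in $\poly(n)$ time once it is allowed to query $\oracle$ in order to evaluate $C^\oracle$; and \autoref{fact: brute force for sigma2} is purely combinatorial so it needs no modification.

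First I would set up the same schedule of input lengths: pick a large power of two $n^{(1)}$, define $n^{(\ell)} := 2^{2^{n^{(\ell-1)}}}$, $n^{(\ell)}_0 := n^{(\ell)}$, and $n^{(\ell)}_{i+1} := (n^{(\ell)}_i)^{10}$; and for each $\ell$ take $t^{(\ell)} = O(\log n^{(\ell)}_0)$ as in the unrelativized argument, so that the sequences $\{n^{(\ell)}_i\}$ are pairwise disjoint across $\ell$. I would then set $T_0 := 2^{2 n_0}\cdot 2 n_0$, $T_{i+1} := 5 T_i$, let $f_0$ be the concatenation of all $2 n_0$-bit strings, and for $i\ge 1$ set $f_i := \HisKor(C^\oracle_{n_{i-1}}, f_{i-1})$. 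By the relativized \autoref{fact: brute force for sigma2} we have $f_0 \notin \Range(\GGM_{T_0}[C^\oracle_{n_0}])$, so there is a largest $k \in \zeroUpto{t^{(\ell)}}$ with $f_k \notin \Range(\GGM_{T_k}[C^\oracle_{n_k}])$. By the relativized \autoref{lemma: Korten's reduction}, $z := \Kor(C^\oracle_{n_k}, f_k)$ lies in $\bs{2 n_k}\setminus\Range(C^\oracle_{n_k})$, and this is the canonical string the algorithm will output.

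The $\mathsf{F}\Sigma_2\P^\oracle$ verifier $V_A^\oracle(1^{n_k}, \pi_1, \pi_2)$ is then the oracle analogue of the one built in the proof of \autoref{theo:Sigma-2-algo-for-range-avoidance}: it treats $\pi_1$ as a GGM seed for $\GGM_{T_{k+1}}[C^\oracle_{n_{k+1}}]$ when $k < t$ (whose bits are recoverable in $\poly(n_k)$ time via the relativized \autoref{lemma: GGMEval}) and as $\hat f_{k+1}$ itself when $k = t$; then recursively extracts $\hat f_k, \hat f_{k-1}, \dots, \hat f_0$ via the relativized $\In$ routine of \autoref{lemma:Kor-prop-sigma2}; parses $\pi_2$ as $(i, w)$; runs the relativized check $V_{\HisKor}^{\hat f_i,\, \hat f_{i+1}}(C^\oracle_{n_i}, w) = 1$ with $V_{\HisKor}$ additionally querying $\oracle$ to evaluate $C^\oracle_{n_i}$; and on success outputs $\Out^{\hat f_{k+1}}_{T_k, n_k}$, otherwise $\bot$. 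The analogue of \autoref{claim:condition} goes through word-for-word from the relativized \autoref{lemma:Kor-prop-sigma2}, so completeness and soundness of the single-valued $\mathsf{F}\Sigma_2\P^\oracle$ procedure follow exactly as before. The one bit of advice is used as in the unrelativized case: on input $1^n$ the verifier computes the largest $\ell$ and $k'$ with $n^{(\ell)}_{k'} = n$, outputs $\bot$ if no such pair exists, and otherwise consults the advice bit to decide whether $k' = k^{(\ell)}$.

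The only thing worth sanity-checking---and the closest thing to an obstacle---is that every subroutine the verifier invokes is itself a polynomial-time oracle computation against $\oracle$. Each evaluation of $C^\oracle_{n_i}$ needed inside $\GGM$, inside $V_{\HisKor}$, or inside the recursion for $\hat f_i$ is an explicit $\poly(n_k)$-time computation with $\oracle$-queries, and the only other quantification is the $\Sigma_2$ quantification supplied by the $\mathsf{F}\Sigma_2\P^\oracle$ framework itself. Consequently $V_A^\oracle$ runs in $\poly(n_k)$ time with access only to $\oracle$, which is exactly what $\mathsf{F}\Sigma_2\P^\oracle$ permits, completing the relativized theorem.
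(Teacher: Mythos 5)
Your proposal is correct and follows exactly the route the paper intends: the paper proves the relativized theorem by observing (via \autoref{rem:Korten-algo-rel} and \autoref{rem:Kor-prop-sigma2-rel}) that every ingredient of the proof of \autoref{theo:Sigma-2-algo-for-range-avoidance} uses $C_n$ only as a black box, so the entire argument goes through verbatim with $C_n^\oracle$ and $\NP^\oracle$. Your write-up simply makes explicit the same substitution the paper leaves as a one-sentence observation.
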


We omit the proof of the following corollary since it is superseded by the results in the next section.

\begin{corollary}
	$\Sigma_2\E\not\subseteq \SIZE[2^n/n]$ and $(\Sigma_2\E \cap \Pi_2\E)/_1\not\subseteq \SIZE[2^n/n]$.
	Moreover, these results relativize: for every oracle $\oracle$, $\Sigma_2\E^\oracle\not\subseteq \SIZE^\oracle[2^n/n]$ and $(\Sigma_2\E^\oracle \cap \Pi_2\E^\oracle)/_1\not\subseteq \SIZE^\oracle[2^n/n]$.
\end{corollary}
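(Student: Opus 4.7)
The plan is to derive this corollary as a bookkeeping consequence of the relativized version of \autoref{theo:Sigma-2-algo-for-range-avoidance}, in three short steps.

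First (instantiation), I would apply the theorem to a $\P$-uniform family $\{C_N : \{0,1\}^N \to \{0,1\}^{2N}\}$ whose non-outputs are hard truth tables. For $m = \lceil \log(2N) \rceil$ and $s = 2^m/m$, the preliminaries give $L_{m,s} < 2^m \le N$ for large $N$, so one can set $C_N$ to be (a padded version of) the truth-table generator $\TT_{m,s}$: ignore extra input bits and, if needed, stretch the $2^m$-bit output to exactly $2N$ bits by a hardness-preserving padding (e.g.\ duplicate a block, which cannot create a non-hard string in the image complement). Alternatively, compose \autoref{theo:Sigma-2-algo-for-range-avoidance} with Korten's $\P^\NP$ reduction from stretch $n{+}1$ to stretch $2n$ range avoidance, and invoke \autoref{theo:S-2-and-PNP} (post-processing by $\FP^\NP$ preserves single-valued $\mathsf{F}\Sigma_2\P$). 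Either route yields a single-valued $\mathsf{F}\Sigma_2\P$ algorithm $A$ with one bit of advice that, for infinitely many $N$, outputs a canonical $y_N \in \{0,1\}^{2N}$ of circuit complexity greater than $2^m/m$.

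Second (language construction), define $L \subseteq \{0,1\}^{\star}$ whose length-$m$ slice is $y_{2^{m-1}}$, viewed as the truth table of a function on $m$ inputs; then $L$ on $m$-bit inputs has circuit complexity above $2^m/m$ for infinitely many $m$. Single-valuedness of $A$ places $L$ in $(\Sigma_2\E \cap \Pi_2\E)/_1$: the $i$-th output bit equals $1$ iff $\exists \pi_1\, \forall \pi_2\colon V_A(1^N,\pi_1,\pi_2)_i = 1$ (a $\Sigma_2$ characterization), and equals $0$ iff the same statement holds with $1$ replaced by $0$, which, by well-definedness of $A$'s output, gives a $\Pi_2$ characterization of ``the bit equals $1$''. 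The one-bit advice indicates whether $N = 2^{m-1}$ lies among the ``good'' input lengths of $A$; on bad lengths, $L$ defaults to $\varnothing$. This establishes $(\Sigma_2\E \cap \Pi_2\E)/_1 \not\subset \SIZE[2^n/n]$.

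Third (advice removal for $\Sigma_2\E$), apply the standard advice-splitting trick. Let $L_0, L_1 \in \Sigma_2\E$ be the languages obtained by hard-coding the advice bit as $0$ or $1$. For each input length $m$, the $m$-slice of $L$ coincides with the $m$-slice of $L_{\alpha(m)}$ for some $\alpha(m) \in \{0,1\}$, so if both $L_0, L_1 \in \SIZE[2^n/n]$ then $L \in \SIZE[2^n/n]$, contradicting the previous step. Hence one of $L_0, L_1$ witnesses $\Sigma_2\E \not\subset \SIZE[2^n/n]$. All ingredients relativize, yielding the stated bounds against $\SIZE^{\oracle}[2^n/n]$ for $\Sigma_2\E^\oracle$ and $(\Sigma_2\E^\oracle \cap \Pi_2\E^\oracle)/_1$. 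The only mild subtlety is the Step 1 embedding of $\TT_{m,s}$ into a circuit of exact stretch $n \to 2n$, handled either by padding or by the post-processing composition via \autoref{theo:S-2-and-PNP}; there is no real obstacle beyond \autoref{theo:Sigma-2-algo-for-range-avoidance} itself.
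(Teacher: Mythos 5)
Your overall architecture is sound and matches the route the paper intends (the paper omits this proof, deferring to the $\S_2$ analogue in Section~5, which is exactly your ``alternative'' route: compose \autoref{theo:Sigma-2-algo-for-range-avoidance} with the stretch-$(n{+}1)$-to-$2n$ reduction of \autoref{lemma:stretch-1-Korten} via \autoref{theo:S-2-and-PNP}, then invoke the $\Sigma_2$ analogue of \autoref{fact: range avoidance for TT implies circuit lower bounds}). Your Step~2 ($\Sigma_2\cap\Pi_2$ characterization of the bits of a single-valued output) and Step~3 (advice-splitting) are correct and standard.

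However, the \emph{primary} route you offer in Step~1 --- directly padding $\TT_{m,s}$ into a circuit $C_N\colon\bs{N}\to\bs{2N}$ --- has a genuine gap, beyond the arithmetic slip that $m=\lceil\log(2N)\rceil$ forces $2^m\ge 2N>N$ rather than $2^m\le N$. If $C_N(x)=\TT_{m,s}(x')\circ(\text{padding determined by }x')$, then a string $y=y_1\circ y_2\in\bs{2N}\setminus\Range(C_N)$ may avoid the range solely because $y_2$ is inconsistent with the padding rule, while $y_1$ is a perfectly easy truth table (e.g., $y_1=\TT_{m,s}(\text{const-}0)$ and $y_2$ a single flipped bit); such a $y$ also has trivial complexity viewed as one long truth table. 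Since \autoref{theo:Sigma-2-algo-for-range-avoidance} hands you one \emph{specific canonical} non-output with no control over its form, ``duplicate a block'' is not hardness-preserving, and this route does not yield a hard function. You should drop it and rely solely on the composition through \autoref{lemma:stretch-1-Korten} and \autoref{theo:S-2-and-PNP} (noting, as the paper does for the $\S_2$ case, that the good input lengths produced by the theorem are powers of two, so they line up with truth-table lengths).
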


\section{Circuit Lower Bounds for \texorpdfstring{$\S_2\E$}{S2E}}\label{sec:S2-lowb}

In this section, we prove our near-maximum circuit lower bounds for $\S_2\E/_1$ by giving a new single-valued $\mathsf{F}\S_2\P$ algorithm for $\Avoid$.

\subsection{Reed--Muller Codes}

\newcommand{\RM}{\mathsf{RM}}
\newcommand{\BRM}{\mathsf{BRM}}
\newcommand{\degmax}{\deg_{\rm max}}

To prove maximum circuit lower bounds for $\S_2\E/_1$, we will need several standard tools for manipulating Reed--Muller (RM) codes (i.e., low-degree multi-variate polynomials).

For a polynomial $P \colon \F_p^m \to \F_p$, where $\F_p$ is the finite field of $p$ elements, we use $\degmax(P)$ to denote the maximum individual degree of variables in $P$. Let $p$ be a prime, $\Delta, m \in \N$. For a string $S \in \bs{\Delta^m}$, we use $\RM_{\F_p,\Delta,m}(S)$ to denote its Reed--Muller encoding by extension: letting $H = \{0,1,\dotsc,\Delta-1\}$ and $w_1,\dotsc,w_{\Delta^m} \in H^m$ be the enumeration of all elements in $H^m$ in the lexicographical order, $\RM_{\F_p,\Delta,m}(S)$ is the unique polynomial $P\colon \F_p^m \to \F_p$ such that (1) $P(w_i) = S_i$ for every $i \in [\Delta^m]$ and (2) $\degmax(P) \le \Delta - 1$.\footnote{To see the uniqueness of $P$, note that for every $P \colon \F_p^m \to \F_p$ with $\degmax(P) \le \Delta-1$, the restriction of $P$ to $H^m$ uniquely determines the polynomial $P$. Also, such $P$ can be constructed by standard interpolation.}

We also fix a Boolean encoding of $\F_p$ denoted as $\Enc_{\F_p} \colon \F_p \to \bs{\lceil \log p \rceil}$. For simplicity, we can just map $z \in \zeroUpto{p-1}$ to its binary encoding. In particular, $\Enc_{\F_p}(0) = 0^{\lceil \log p \rceil}$ and $\Enc_{\F_p}(1) = 0^{\lceil \log p \rceil -1} \circ 1$.\footnote{This fact is useful because if we know a string $m \in \bs{\lceil \log p \rceil}$ encodes either $0$ or $1$, then we can decode it by only querying the last bit of $m$.} Now we further define $\BRM_{\F_p,\Delta,m}(S)$ by concatenating $\RM_{\F_p,\Delta,m}(S)$ with $\Enc_{\F_p}$, thus obtaining a Boolean encoding again. Formally, letting $P = \RM_{\F_p,\Delta,m}(S)$ and $w_1,\dotsc,w_{p^m} \in \F_p^m$ be the enumeration of all elements from $\F_p^m$ in the lexicographic order, we define $\BRM_{\F_p,\Delta,m}(S) = \Enc_{\F_p}(P(w_1)) \circ \Enc_{\F_p}(P(w_2)) \circ \dotsc \circ \Enc_{\F_p}(P(w_{p^m}))$.
We remark that for every $i \in \bkts{\Delta^m}$, in $\poly(m,\log p)$ time one can compute an index $i' \in [p^m \cdot \lceil \log p \rceil]$ such that $\BRM_{\F_p,\Delta,m}(S)_{i'} = S_i$.

We need three properties of Reed--Muller codes, which we explain below.

\paragraph*{Self-correction for polynomials.} We first need the following self-corrector for polynomials, which efficiently computes the value of $P$ on any input given an oracle that is close to a low-degree polynomial $P$.  %
(In other words, it is a \emph{local decoder} for the Reed--Muller code.)

\begin{lemma}[A self-corrector for polynomials, cf.~\cite{DBLP:journals/ipl/GemmellS92, Sud95}]\label{lm:poly-SC}
	There is a probabilistic oracle algorithm $\mathsf{PCorr}$ such that the following holds. Let $p$ be a prime, $m,\Delta\in\mathbb{N}$ such that $\Delta<p/3$. Let $g\colon \F_{p}^m\to \F_p$ be a function such that for some polynomial $P$ of total degree at most $\Delta$,
	\[
	\Pr_{\vec{x}\gets{\F_p^m}}[g(\vec{x}) \ne P(\vec{x})] \le 1/4.
	\]
	Then for all $\vec{x}\in\F_p^m$, $\mathsf{PCorr}^{g}(p,m,\Delta,\vec{x})$ runs in time $\poly(\Delta,\log p,m)$ and outputs $P(\vec{x})$ with probability at least $2/3$.
\end{lemma}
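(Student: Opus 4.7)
The plan is to follow the classical Gemmell--Sudan self-correction template, reducing multivariate polynomial self-correction to univariate Reed--Solomon decoding along a random direction through $\vec{x}$. Concretely, on input $(p, m, \Delta, \vec{x})$ the algorithm $\mathsf{PCorr}^{g}$ will sample a uniform $\vec{y} \in \F_p^m$, query $g$ at each point $L(t) := \vec{x} + t\vec{y}$ for $t \in \F_p \setminus \{0\}$, run Berlekamp--Welch on the resulting vector to decode it as noisy evaluations of a degree-$\Delta$ univariate polynomial, and output the value of the decoded polynomial at $t = 0$. I would then iterate this primitive $O(1)$ times with independent directions, and combine with a certification step, to push the success probability above $2/3$.

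For correctness, the key observations are that $Q^{\star}(t) := P(L(t))$ is a univariate polynomial of degree at most $\Delta$ with $Q^{\star}(0) = P(\vec{x})$, and that for each nonzero $t \in \F_p$ the point $L(t)$ is uniformly distributed over $\F_p^m$, so the expected number of indices $t$ for which $g(L(t)) \ne Q^{\star}(t)$ is at most $(p-1)/4$. Berlekamp--Welch uniquely decodes a degree-$\Delta$ polynomial from $n = p - 1$ evaluations with up to $e := \lfloor (n - \Delta)/2 \rfloor$ errors, and under $\Delta < p/3$ this threshold satisfies $e \ge (p-1)/3 - O(1) > (p-1)/4$; Markov's inequality then yields a fixed positive per-trial success probability, and I would amplify via $O(1)$-fold repetition plus a certification step (accept a decoded $Q$ only if $Q$ agrees with the queried samples on a sufficient fraction of points) to reach $\ge 2/3$. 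Uniqueness of Berlekamp--Welch decoding (any two distinct polynomials of degree $\le \Delta$ agree on at most $\Delta$ of the $n$ evaluation points) ensures that an accepted candidate must equal $Q^{\star}$, so the algorithm outputs $P(\vec{x})$ whenever at least one trial succeeds. The running-time analysis is routine: Berlekamp--Welch runs in $\poly(n, \log p) = \poly(\Delta, \log p)$ time, each query passes $\poly(m, \log p)$ bits, and there are $O(1)$ outer iterations, giving the claimed $\poly(\Delta, m, \log p)$ bound.

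The main technical delicacy will be the concentration argument, since pairs of points on a single random line are \emph{not} pairwise independent and a pure Markov bound on one trial only yields success probability approaching $1/2$. I see two standard ways to handle this: (i) the certification approach sketched above, where failures are detected and discarded so that $O(1)$ repetitions suffice; or (ii) replace the random line by a degree-$2$ random curve $C(t) = \vec{x} + t\vec{y}_1 + t^2 \vec{y}_2$, whose pairs of queried points are uniform in $(\F_p^m)^2$ by a Vandermonde-style determinant argument, giving Chebyshev concentration at the cost of inflating the univariate degree to $2\Delta$ (and correspondingly demanding a slightly larger $p$). Both routes are fully worked out in the cited references~\cite{DBLP:journals/ipl/GemmellS92, Sud95}, and either suffices for the stated conclusion under $\Delta < p/3$.
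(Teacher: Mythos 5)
The paper does not actually prove this lemma; it imports it from the cited references, so there is no in-paper argument to compare against. Your overall template (a random line $L(t)=\vec{x}+t\vec{y}$ through $\vec{x}$, querying all $n=p-1$ nonzero points, and Berlekamp--Welch decoding of $Q^\star=P\circ L$ at $t=0$) is indeed the standard route. You also correctly isolate the real difficulty: points on a line through a \emph{fixed} point are not pairwise independent, so only Markov is available, and a single trial succeeds only with some constant probability that is at most $1/2$.

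The gap is that neither of your two amplification routes actually closes this under the stated hypotheses. Quantitatively, $\Pr[X\ge \tfrac{1}{2}(n-\Delta)]\le \frac{n/4}{(n-\Delta)/2}$, which for $\Delta$ close to $p/3$ only gives per-trial success $\ge 1/4$. (i) The certification step is unsound: your claim that ``an accepted candidate must equal $Q^\star$'' only holds when the number of line errors $X$ is below the unique-decoding radius. If $X\ge \theta n-\Delta$, a polynomial $Q\ne Q^\star$ can agree with the received word on $\ge\theta n$ points (it agrees with $Q^\star$ on $\le\Delta$ points and with errors on $\le X$ points, and $\Delta+X\ge\theta n$ is then possible), so a wrong answer gets certified. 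Markov bounds $\Pr[X\ge\theta n-\Delta]$ only by $\frac{1/4}{\theta-\Delta/n}$, a constant that for $\Delta$ near $n/3$ and any $\theta<3/4$ (needed for completeness, since $\mathbb{E}[X]$ can be as large as $n/4$) exceeds the guaranteed success probability. Hence ``output the first certified value'' or a plurality over $O(1)$ trials cannot be pushed to $2/3$; failures are \emph{not} reliably detected and discarded. (ii) The degree-$2$ curve route gives pairwise independence, so the error fraction on the curve concentrates at $1/4+o(1)$, but the restricted polynomial has degree $2\Delta$ and the unique-decoding condition becomes $1/4<\tfrac{1}{2}(1-2\Delta/n)$, i.e.\ $\Delta<(p-1)/4$. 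This does not cover $\Delta\in[(p-1)/4,\,p/3)$, so your closing claim that either route ``suffices for the stated conclusion under $\Delta<p/3$'' is not right. The constants $1/4$ and $p/3$ sit exactly at the threshold where the line-plus-Markov argument degenerates; the lemma is information-theoretically fine (since $1/4<\tfrac12(1-\Delta/p)$ the polynomial $P$ is unique), and in the paper's application there is enormous slack (the actual closeness is $o(1)$ and $\Delta/p=o(1)$, so even your route (ii) would serve the application), but as a proof of the lemma as stated your amplification step does not go through and would need either a genuinely sharper decoding argument or a weakening of the stated constants.
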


\paragraph*{Low-max-degree test.} We also need the following efficient tester, which checks whether a given polynomial has maximum individual degree at most $\Delta$ or is far from such polynomials.\footnote{To obtain the theorem below, we set the parameter $\delta$ and $\eps$ from~\cite[Remark~5.15]{BabaiFL91} to be $\min\mleft(\frac{1}{200n^2(\Delta+1)},1/2p\mright)$ and $\min\mleft( \frac{1}{400n^3(\Delta+1)},1/2p\mright)$, respectively.}

\begin{lemma}[{Low-max-degree tester~\cite[Remark~5.15]{BabaiFL91}}]\label{lm:poly-LDT}
	Let $n,\Delta,p \in \N$ be such that $p \ge 20 \cdot (\Delta+1)^2 \cdot n^2$ and $p$ is a prime. There is a probabilistic non-adaptive oracle machine $\mathsf{LDT}$ such that the following holds. Let $g\colon \F_p^n\to \F_p$. Then for $\delta = 3 n^2 \cdot (\Delta+1) / p$, it holds that
	\begin{enumerate}
		\item if $\degmax(g)\leq \Delta$, then $\mathsf{LDT}^{g}(p,n,\Delta)$ accepts with probability $1$, 
		\item if $g$ is at least $\delta$-far from every polynomial with maximum individual degree at most $\Delta$, then $\mathsf{LDT}^{g}(p,n,\Delta)$ rejects with probability at least $2/3$, and
		\item $\mathsf{LDT}$ runs in $\poly(p)$ time.
	\end{enumerate}
	
\end{lemma}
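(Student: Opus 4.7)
I would build $\mathsf{LDT}$ as the standard axis-parallel line test. For each coordinate $i \in [n]$, perform $r$ independent trials (with $r$ a sufficiently large absolute constant): sample $\vec{y} \in \F_p^n$ uniformly at random, query $g$ at all $p$ points $\vec{y} + t \cdot \mathbf{e}_i$ for $t \in \F_p$ (where $\mathbf{e}_i$ is the $i$-th standard basis vector), Lagrange-interpolate through the first $\Delta+1$ of these values to obtain a univariate polynomial $Q$ of degree at most $\Delta$, and reject if $Q(t) \neq g(\vec{y} + t\cdot\mathbf{e}_i)$ for any of the remaining $p - \Delta - 1$ values of $t$. The overall tester accepts iff every one of the $nr$ axis-line tests accepts. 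The test is manifestly non-adaptive, makes $O(npr)=\poly(p)$ queries, and runs in $\poly(p)$ time. Completeness is immediate: if $\degmax(g) \le \Delta$, then each restriction $t \mapsto g(\vec{y}+t\cdot\mathbf{e}_i)$ is a univariate polynomial of degree at most $\Delta$, which is uniquely determined by its values at any $\Delta+1$ points, so each axis-line test accepts with probability $1$.

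For soundness I would argue the contrapositive by a self-correction argument. Suppose each of the $n$ per-direction axis-line tests rejects with probability at most some small $\eta$. By the classical univariate-line analysis (in the style of Rubinfeld--Sudan), define the \emph{self-corrected} function $\hat{g}$ whose value at $\vec{x}$ is the plurality, over the random choice of an axis-$i$ line through $\vec{x}$, of the interpolant's value at coordinate $x_i$; standard line-vs-point analysis shows $\hat{g}$ agrees with $g$ on a $(1 - O(\eta))$-fraction of $\F_p^n$, and moreover $\hat{g}$ is a polynomial of degree at most $\Delta$ in the $i$-th variable for every fixing of the remaining coordinates. The heart of the proof is then showing that a single polynomial $P$ with $\degmax(P)\le\Delta$ agrees with $g$ on a $(1-O(n\eta))$-fraction of points. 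This uses a Schwartz--Zippel rigidity fact, namely that two distinct polynomials of maximum individual degree $\le \Delta$ disagree on at least a $1 - n\Delta/p$ fraction of $\F_p^n$, together with an induction on $n$ (base case being the classical univariate low-degree test) that combines per-direction corrections into one global polynomial. Summing per-direction error rates and invoking the hypothesis $p \ge 20(\Delta+1)^2 n^2$ yields the stated distance bound $\delta = 3n^2(\Delta+1)/p$, and choosing $r$ large enough boosts the overall rejection probability to at least $2/3$.

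The main obstacle is the soundness direction, and specifically the step that glues the axis-by-axis self-corrections into a single polynomial of low individual degree in every variable simultaneously. The clean way to execute this is an induction on $n$: applying the inductive hypothesis to the one-variable slices of $\hat{g}$ promotes the per-coordinate guarantee ``degree $\le \Delta$ in $x_i$ for each fixing of the other coordinates'' to full low-individual-degree-ness of a single global polynomial. The error blow-up per inductive step is precisely what the hypothesis $p \ge 20(\Delta+1)^2 n^2$ is calibrated to absorb, which is also the reason that the final distance parameter carries the factor $n^2(\Delta+1)/p$ rather than anything smaller.
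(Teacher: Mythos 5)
This lemma is not proved in the paper at all: it is imported as a black box from Babai--Fortnow--Lund (Remark~5.15 of that paper), with a footnote recording only how the parameters $\delta$ and $\eps$ of the cited result are instantiated to obtain the constants $\delta = 3n^2(\Delta+1)/p$ and the hypothesis $p \ge 20(\Delta+1)^2 n^2$. So there is no in-paper argument to compare yours against; the relevant comparison is with the cited reference, and your plan --- the axis-parallel line test, completeness by uniqueness of univariate interpolation, soundness by per-direction self-correction followed by an induction on the number of variables that glues the corrections into a single polynomial of low individual degree, with a Schwartz--Zippel-type distance bound controlling the error accumulation --- is indeed the standard proof of that result. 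The test you describe (query a full axis-parallel line, interpolate on $\Delta+1$ points, check the rest) is exactly the right primitive, and your completeness and running-time observations are correct.

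That said, as a proof your submission has a genuine gap rather than a flaw: the entire content of the lemma is the soundness direction, and you have described the inductive gluing step rather than carried it out. The delicate points are (i) showing that the direction-$i$ self-corrected function is close to $g$ \emph{and} globally a polynomial of degree $\le \Delta$ in $x_i$ (not merely slice-by-slice), (ii) showing that the corrections in different directions are mutually consistent so that a single global polynomial emerges, and (iii) the bookkeeping that turns a per-direction rejection probability $\eta$ into closeness $O(n^2(\Delta+1)/p + n\eta)$-type bounds --- this is where the quadratic dependence on $n$ and the hypothesis $p \ge 20(\Delta+1)^2 n^2$ actually get used, and you assert rather than verify that the stated constants fall out. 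Since the paper itself delegates exactly this work to \cite[Remark~5.15]{BabaiFL91}, your sketch is an acceptable reconstruction of the intended argument, but it should either cite that analysis or fill in the induction explicitly before the specific constants in the lemma statement can be claimed.
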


\newcommand{\Comp}{\mathsf{Comp}}

\paragraph*{Comparing two RM codewords.} Lastly, we show an efficient algorithm that, given oracle access to two codewords of $\RM_{\F_p,\Delta,m}$, computes the lexicographically first differing point between the respective messages of the two codewords.

\begin{lemma}[Comparing two RM codewords]\label{lm:comp-RM}
	Let $p$ be a prime. Let $m,\Delta \in \N$ be such that $m \cdot \Delta < p/2$. There is a probabilistic oracle algorithm $\Comp$ that takes two polynomials $f,g \colon \F_p^m \to \F_p$ as oracles, such that if both $\degmax(f)$ and $\degmax(g)$ are at most $\Delta$, then the following holds with probability at least $9/10$:
	\begin{itemize}
		\item If $f \ne g$, then $\Comp^{f,g}(p,m,\Delta)$ outputs the lexicographically smallest element $w$ in $H^m$ such that $f(w) \ne g(w)$, where $H = \{0,1,\dotsc,\Delta-1\}$.\footnote{Since both $f$ and $g$ have max degree at most $\Delta$, their values are completely determined by their restrictions on $H^m$. Hence, if $f\ne g$, then such $w$ must exist.}
		\item If $f = g$, then $\Comp^{f,g}(p,m,\Delta)$ outputs $\bot$.
		\item $\Comp$ makes at most $\poly(m \cdot \Delta)$ queries to both $f$ and $g$, and runs in $\poly(m \cdot \Delta \cdot \log p)$ time.
	\end{itemize}
\end{lemma}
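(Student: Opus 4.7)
The plan is a standard coordinate-by-coordinate prefix search that reduces each step to polynomial identity testing via Schwartz--Zippel. First I would set up a subroutine $\mathsf{ZeroTest}$ which, given oracle access to a restriction of $f-g$ viewed as a polynomial $h \colon \F_p^k \to \F_p$ (for some $k \le m$), evaluates $h$ at $N = O(\log(m\Delta))$ independent uniform points of $\F_p^k$ and declares $h \equiv 0$ iff every evaluation is zero. Since $h$ is a restriction of $f-g$, its total degree is at most $m\Delta < p/2$, so Schwartz--Zippel gives: if $h \not\equiv 0$ then each evaluation is nonzero with probability at least $1/2$, and $N$ repetitions drive the per-call error down to $1/(100 m \Delta)$.

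Given $\mathsf{ZeroTest}$, the main procedure runs as follows. First apply $\mathsf{ZeroTest}$ to $f-g$ itself; if it declares $f \equiv g$, output $\bot$ and halt. Otherwise, build $w = (w_1,\dots,w_m) \in H^m$ greedily: for $i = 1,2,\dots,m$ and having fixed $w_1,\dots,w_{i-1}$, try $a = 0,1,\dots,\Delta-1$ in increasing order, and for each $a$ apply $\mathsf{ZeroTest}$ to the restriction
\[
h_a(x_{i+1},\dots,x_m) \;:=\; (f-g)(w_1,\dots,w_{i-1},a,x_{i+1},\dots,x_m),
\]
which needs only two oracle queries per evaluation. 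Set $w_i$ to the first $a$ for which $\mathsf{ZeroTest}$ reports $h_a \not\equiv 0$, and proceed to coordinate $i+1$. Finally output $w$.

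For correctness, assuming every $\mathsf{ZeroTest}$ call is correct, the key point is that since the max individual degrees of $f$ and $g$ are at most $\Delta$ while $|H|=\Delta$, every restriction $h_a$ is determined by its values on $H^{m-i}$ (this is exactly the content of the footnote), so $h_a \equiv 0$ iff $h_a$ vanishes on all of $H^{m-i}$. Hence $\mathsf{ZeroTest}$ reporting $h_a \not\equiv 0$ is equivalent to the existence of an extension $(w_{i+1},\dots,w_m) \in H^{m-i}$ with $f \neq g$ at $(w_1,\dots,w_{i-1},a,w_{i+1},\dots,w_m)$; choosing the smallest such $a$ at each step yields precisely the lexicographically smallest differing point. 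The initial call similarly guarantees the $\bot$ case is handled correctly.

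For the error and complexity bookkeeping: $\mathsf{ZeroTest}$ is invoked at most $m\Delta + 1$ times, each with failure probability at most $1/(100 m \Delta)$, so a union bound gives overall success at least $9/10$. Each call makes $O(\log(m\Delta))$ evaluations of $f$ or $g$ (one oracle query apiece) and arithmetic in $\F_p$ costs $\poly(\log p)$, so the total is $\poly(m\Delta)$ queries and $\poly(m \Delta \log p)$ time, as required. The only mildly delicate point is keeping the Schwartz--Zippel failure probability small enough that the union bound over all $m\Delta$ calls still beats $1/10$; this is routine and is the reason for the logarithmic repetition inside $\mathsf{ZeroTest}$ rather than a single evaluation.
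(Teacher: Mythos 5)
Your proposal is correct and follows essentially the same route as the paper's proof: a coordinate-by-coordinate greedy search where each candidate value is tested via Schwartz--Zippel polynomial identity testing on the restricted difference polynomial (total degree at most $m\Delta < p/2$), repeated $O(\log(m\Delta))$ times per test so that a union bound over the roughly $m\Delta$ tests gives the $9/10$ success probability. The only cosmetic difference is that you handle the $f=g$ case with an upfront global zero test, whereas the paper detects it when the first stage finds no differing value of $w_1$; both variants rely on the same fact (stated in the lemma's footnote) that the restrictions are determined by their values on the subgrid $H^{m-i}$.
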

\begin{proof}
	Our proof is similar to the proof from~\cite{Hirahara15}, which only considers multi-linear polynomials. Our algorithm $\Comp^{f,g}(p,m,\Delta)$ works as follows:
	\begin{enumerate}
		\item The algorithm has $m$ stages, where the $i$-th stage aims to find the $i$-th entry of $w$. At the end of the $i$-th stage, the algorithm obtains a length-$i$ prefix of $w$.
		
		\item For every $i \in [m]$:
		\begin{enumerate}
			\item Let $w_{<i} \in H^{i-1}$ be the current prefix. For every $h \in \{0,1,\dotsc,\Delta-1\}$, we run a randomized polynomial identity test to check whether the restricted polynomial $f(w_{<i},h,\cdot)$ and $g(w_{<i},h,\cdot)$ are the same, with error at most $\frac{1}{10 m|H|}$.\footnote{Note that these two polynomials have total degree at most $m \cdot \Delta < p/2$. Hence if they are different, their values on a random element from $\F_p^{m-i}$ are different with probability at least $1/2$. Hence the desired error level can be achieved by sampling $O(\log m + \log \Delta)$ random points from $\F^{m-i}$ and checking whether $f(w_{<i},h,\cdot)$ and $g(w_{<i},h,\cdot)$ have the same values.}
			\item We set $w_i$ to be the smallest $h$ such that our test above reports that $f(w_{<i},h,\cdot)$ and $g(w_{<i},h,\cdot)$ are distinct. If there is no such $h$, we immediately return $\bot$.
		\end{enumerate}
	\end{enumerate}

	By a union bound, all $m H$ polynomial identity testings are correct with probability at least $9/10$. In this case, if $f = g$, then the algorithm outputs $\bot$ in the first stage. If $f \ne g$, by induction on $i$, we can show that for every $i \in [m]$, $w_{\le i}$ is the lexicographically smallest element from $H^{m}$ such that $f(w_{\le i},\cdot)$ and $g(w_{\le i},\cdot)$ are distinct, which implies that the output $w$ is also the lexicographically smallest element $w$ in $H^m$ such that $f(w) \ne g(w)$. 
\end{proof}

\subsection{Encoded History and \texorpdfstring{$\S_2\BPP$}{S2BPP} Verification}

Next, we define the following encoded history.

\begin{definition}\label{defi:encoded-history-Korten}
	Let $C\colon\{0, 1\}^n \to \{0, 1\}^{2n}$ be a circuit, and $f\in\{0, 1\}^T$ be a ``hard truth table'' in the sense that $f\not\in\Range(\GGM_T[C])$. Let $k$, $(\istar, \jstar)$, and $\{v_{i,j}\}_{i,j}$ be defined as in~\autoref{algo: Korten-algo}. Let $S$ be the concatenation of ${\sf enc}(v_{i, j})$ for every $i \in \zeroUpto{k-1}$, $j \in \zeroUpto{2^i-1}$, in the reserve lexicographical order of $(i, j)$, padded with zeros at the end to length exactly $5 T$. (Recall that ${\sf enc}(v_{i, j})$ was defined in \autoref{def: computation history of Korten}.) Let $p$ be the smallest prime that is at least $20 \cdot \log^5 T$, and $m$ be the smallest integer such that $(\log T)^{m} \ge 5 \cdot T$.
	
	The \emph{encoded computational history} of $\Kor(C,f)$, denoted as 
	\[
	\EncHisKor(C,f),
	\]
	consists of $(\istar, \jstar)$, concatenated with $\BRM_{\F_p,\log T,m}(S)$.
	
	The length of the encoded history is at most 
	\[
	\mleft\lceil \log(40 \cdot \log^5 T) \mright\rceil \cdot (40 \cdot \log^5 T)^{\log (5T) / \log\log T + 1} + 2 \log T \le T^6
	\]
	for all sufficiently large $T$, and for convenience we always pad zeros at the end so that its length becomes exactly $T^6$.\footnote{For simplicity even for $T$ such that the length of the encoded history is longer than $T^6$, we will pretend its length is exactly $T^6$ throughout this section. This does not affect the analysis in our main theorem~\autoref{theo:S-2-algo-for-range-avoidance} since there we only care about sufficiently large $T$.}
\end{definition}

Recall that the original computational history $\HisKor(C,f)$ is simply the concatenation of $(\istar,\jstar)$ and $S$. In the encoded version, we encode its $S$ part by the Reed--Muller code instead. In the rest of this section, when we say history, we always mean the encoded history $\EncHisKor(C,f)$ instead of the vanilla history $\HisKor(C,f)$.

We need the following lemma. 

\begin{lemma}\label{lemma:Kor-prop-s2}
	Let $n,T \in \N$ be such that $\log T \le n \le T$. Let $C\colon\{0, 1\}^n \to \{0, 1\}^{2n}$ be a circuit and $f\in\bs{T} \setminus \Range(\GGM_T[C])$. Let $h \coloneqq \EncHisKor(C, f)$ and $z \coloneqq \Kor(C, f)$.
	
	\begin{enumerate}
		\item {\bf (history contains input/output)} There is a $\poly(\log T)$-time oracle algorithm $\In$ and an $O(n)$-time oracle algorithm $\Out$, both of which have input parameters $T, n$ and take a string $\tilde{h} \in \bs{T^6}$ as oracle, such that the following hold:\label{item: Input-Output-S2}
		
		\begin{enumerate}
			\item $\In_{T,n}$ makes a single query to its oracle; when given $h$ as the oracle, $\In_{T,n}$ takes an additional input $i \in \zeroUpto{T^6-1}$ and outputs $f_i$. \label{item: history-contain-input-S2}
			
			\item $\Out_{T,n}$ makes at most $4n$ queries to its oracle; when given $h$ as the oracle, $\Out_{T,n}$ outputs $z = \Kor(C, f)$. \label{item: history-contain-output-S2}
		\end{enumerate}
			
		\item {\bf ($\S_2\BPP$ verification of the history)} There is a randomized oracle algorithm $V$ with input parameters $T,n$ such that the following hold: \label{item: S2BPP-V}
		
		\begin{enumerate}
			\item $V$ takes strings $\tilde{f} \in \bs{T}, \pi_1,\pi_2 \in \bs{T^6}$ as oracles, the circuit $C$, an integer $i \in \bkts{T^6}$, and $\eps \in (0,1)$ as input, and runs in $\poly(n,\log \eps^{-1})$ time. %
			
			\item For every $\pi \in \bs{T^6}$ and every $i \in \zeroUpto{T^6 - 1}$, we have that
			\[
			\Pr\bkts{V^{f,\pi,h}_{T, n}(C,i,\eps) = h_i} \ge 1-\eps \quad\text{and}\quad \Pr\bkts{V^{f,h,\pi}_{T, n}(C,i,\eps) = h_i} \ge 1-\eps.
			\]
		\end{enumerate}
	\end{enumerate}
\end{lemma}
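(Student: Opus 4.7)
The construction of $\In$ and $\Out$ is direct from the layered structure $\EncHisKor(C,f) = (\istar,\jstar) \circ \BRM_{\F_p,\log T,m}(S)$ (padded to length $T^6$). Since $S$ has a public layout in which the leaf values coincide with the bits of $f$, $\In_{T,n}^h(i)$ first computes the position of the $i$-th bit of $f$ inside $S$, maps it deterministically to one position inside $\BRM_{\F_p,\log T,m}(S)$, and queries $h$ there. The key observation is that $\Enc_{\F_p}$ was chosen so that Boolean values sit in the last bit of the field-element encoding, so a single oracle query to $h$ suffices. For $\Out_{T,n}^h$, read the header to recover $(\istar,\jstar)$ and then read ${\sf enc}(v_{\istar+1,2\jstar})\circ{\sf enc}(v_{\istar+1,2\jstar+1})$ via the same per-bit BRM extraction, stripping the ${\sf enc}$ overhead to output $\Kor(C,f)$ in at most $4n$ queries.

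\textbf{Part 2 --- structure of the selector.} The verifier $V^{f,\pi_1,\pi_2}(C,i,\eps)$ is built as a randomized \emph{selector}. Parse each $\pi_\mu$ into $(\istar^\mu,\jstar^\mu)$ and an alleged BRM part, viewed as a function $g_\mu\colon\F_p^m\to\F_p$. Run $\mathsf{LDT}$ of \autoref{lm:poly-LDT} on each $g_\mu$: when $\pi_\mu = h$, the corresponding $g_\mu$ is an individual-degree-$(\log T - 1)$ polynomial and passes with probability $1$; when $g_\mu$ is far from every such polynomial, the test rejects with high probability. If only one side fails, pick the other. Otherwise, use $\mathsf{PCorr}$ (\autoref{lm:poly-SC}) to oracle-access the low-degree polynomials $P_1,P_2$ close to $g_1,g_2$, and run $\Comp^{P_1,P_2}$ (\autoref{lm:comp-RM}) to either certify $P_1 \equiv P_2$ or produce the lex-first $w \in H^m$ on which they differ.

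\textbf{Part 2 --- local decision at the first differing point.} If $\Comp$ returns $\bot$, the candidates encode the same $S$ and differ only in the header; since $(\istar,\jstar)$ is determined by $S$ as the lex-largest pair $(i,j)$ with $v_{i,j}=\bot$, $O(n)$ self-corrected probes at positions near each claimed header identify the consistent one. Otherwise, $w$ lies inside the encoding of some node $v_{i',j'}$; spend $O(n)$ self-corrected queries each to read $\gamma^\mu = v^\mu_{i',j'}$ and the common $\alpha = v_{i'+1,2j'}\circ v_{i'+1,2j'+1}$ (common by minimality of $w$). If $(i',j')$ is a leaf, query the relevant block of $f$ to decide. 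If $(i',j')$ is internal, invoke the uniqueness characterization of $\HisKor$ from \autoref{lemma:Kor-prop-sigma2}: the correct $v_{i',j'}$ is $\bot$ exactly when $(i',j')\le(\istar,\jstar)$, and otherwise is the lex-smallest $C$-preimage of $\alpha$. A case analysis on $(\gamma^1,\gamma^2)$ combined with the claimed headers $(\istar^\mu,\jstar^\mu)$ uniquely picks out $h$ (for instance, when both $\gamma^\mu$ are valid $C$-preimages of $\alpha$ select the lex-smaller, as in \autoref{sec:int:S2P-algo-range-avoidance}; when one is $\bot$, the claimed headers disambiguate which regime applies). Finally, output the $i$-th bit of the selected proof.

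\textbf{Main obstacle and error budget.} The principal challenge is completing the case analysis above so that, whenever one of $\pi_1,\pi_2$ equals $h$, every subcase uniquely identifies $h$; the key lever is that the correct header $(\istar,\jstar)$ and the correct $v_{i',j'}$ are mutually determined by ``$v_{i,j}=\bot$ iff $(i,j)\le(\istar,\jstar)$'' together with ``at $(\istar,\jstar)$, $\alpha$ has no $C$-preimage'', so any inconsistency against a wrong header is locally detectable using $C$-evaluations. For the probability bookkeeping, amplify each call to $\mathsf{LDT}$, $\mathsf{PCorr}$, and $\Comp$ to error $\eps/\poly(n)$ at a $\log(\eps^{-1}\poly(n))$-factor time cost and union-bound over the $\poly(n)$ primitive invocations used during value extraction and the final bit readout; since $n \ge \log T$, the total error is at most $\eps$ and the verifier runs in $\poly(n, \log\eps^{-1})$ time as required.
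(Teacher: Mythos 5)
Your Part~1 and the overall architecture of your selector match the paper's proof: the single-query trick for $\In$ via the last bit of $\Enc_{\F_p}$, the header-plus-two-blocks reading for $\Out$, and the pipeline of low-degree testing, self-correction, and $\Comp$ to locate the first differing node $(i',j')$ in processing order, followed by a local decision there. The leaf case, the case where both candidate values are non-$\bot$ (verify membership in $C^{-1}(\alpha)$ by evaluating $C$, break ties by the lexicographically smaller preimage), the handling of the header position when the Reed--Muller parts agree, and the error amplification are all essentially as in the paper.

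The gap is in the one subcase you yourself flag as the principal challenge: an internal node $(i',j')$ where one proof claims $\bot$ and the other exhibits some $\gamma\in C^{-1}(\alpha)$. You propose to disambiguate ``which regime applies'' using the claimed headers $(\istar^\mu,\jstar^\mu)$, asserting that any inconsistency against a wrong header is locally detectable by $C$-evaluations. This fails. Concretely, let $\pi_1=h$ with true header $(\istar,\jstar)$, and take $(i',j')=(\istar-1,j')$ with $2j'>\jstar$, so that the children of $(i',j')$ carry genuine strings and $\alpha$ is a well-defined $2n$-bit string; the true history has $v_{i',j'}=\bot$ only because $\Kor$ already halted at $(\istar,\jstar)$, and $C^{-1}(\alpha)$ may well be nonempty. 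Let $\pi_2$ copy $h$ except that it writes $\gamma=\min C^{-1}(\alpha)$ at $(i',j')$ and claims the header $(i',j'-1)$. Both proofs are then locally consistent at $(i',j')$ relative to their own headers: $\pi_1$ claims $(i',j')\le(\istar^1,\jstar^1)$ and shows $\bot$, while $\pi_2$ claims $(i',j')>(\istar^2,\jstar^2)$ and exhibits a $C$-verified preimage. The actual inconsistency of $\pi_2$ (it still has $\bot$ at positions lexicographically greater than its claimed header, e.g.\ at $(i',j'+1)$) lives where the two proofs \emph{agree}, so $\Comp$ never points you there; and the complementary condition ``$C^{-1}$ of the children at the claimed header is empty'' is a $\Pi_1$ statement that a polynomial-time selector with no $\NP$ oracle cannot check. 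The paper resolves this subcase without touching the headers: it queries the value at the predecessor of $(i',j')$ in processing order (the smallest pair lexicographically greater than $(i',j')$), which is common to both proofs by minimality of the differing point. If that common value is $\bot$, contiguity of the $\bot$-region forces $v_{i',j'}=\bot$ and the $\bot$-claiming proof is selected; if it is not $\bot$, then an exhibited, $C$-verified preimage wins, and otherwise the $\bot$-claiming proof wins. Some such use of information common to both proofs is needed; the claimed headers alone do not suffice.
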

\begin{proof}
	\newcommand{\wtg}{\WT{g}}
	Again, the algorithms $\In_{T,n}$ and $\Out_{T,n}$ can be constructed in a straightforward way.\footnote{To see that $\Out_{T,n}$ makes at most $4n$ queries: Note that $\Out$ first reads the pair $(\istar,\jstar)$ from $h$, and then reads two corresponding blocks from $v_{i,j}$ encoded in $h$. In total, it reads at most $2\log T + 2n \le 4n$ bits from $h$.} %
	So we focus on the construction of $V$. Let $p,m,k \in \N$ be as in~\autoref{defi:encoded-history-Korten}. We also set $\F = \F_p$ and $\Delta = \log T$ in the rest of the proof.
	
	Our $V$ always first \emph{selects} one of the oracles $\pi_1$ and $\pi_2$ (say $\pi_\mu$ for $\mu \in \{1,2\}$), and then outputs $\pi_\mu(i)$. Hence, in the following, we say that $V$ selects $\pi_\mu$ to mean that $V$ outputs $\pi_\mu(i)$ and terminates. Given $\pi_1$ and $\pi_2$, let $g_1,g_2 \colon \F^{m} \to \F$ be the (potential) RM codewords encoded in $\pi_1$ and $\pi_2$, respectively.\footnote{Technically $\pi_1$ and $\pi_2$ are supposed to contain the RM codewords concatenated with $\Enc_{\F_p} \colon \F_p \to \bs{\lceil \log p \rceil}$.} From now on, we will assume that $i$ points to an entry in the encoded history $g_1$ or $g_2$ instead of the encoded pair of integers $(\istar,\jstar)$. We will discuss the other case at the end of the proof.
	
	\paragraph*{Low-max-degree test and self-correction.}  We first run $\mathsf{LDT}^{g_1}(p,m,\Delta - 1)$ and $\mathsf{LDT}^{g_2}(p,m,\Delta - 1)$ for $c_1$ times, where $c_1$ is a sufficiently large constant. Recall that $p \ge 20 \cdot \log^5 T$, $m = \mleft\lceil \log \mleft( 5T \mright) /  \log\log T \mright\rceil$, and $\Delta = \log T$. It follows that $p \ge 20 \cdot ( (\Delta - 1) + 1 )^2 \cdot m^2$, which satisfies the condition of~\autoref{lm:poly-LDT}. We also note that $3 m^2 \cdot ((\Delta - 1) + 1) / p < 1/4$. Hence, by~\autoref{lm:poly-LDT}, if $g_1$ is $1/4$-far from all polynomials with maximum individual degree at most $\Delta-1$, then $\mathsf{LDT}^{g_1}(p,m,\Delta - 1)$ rejects with probability $2/3$, and similarly for $g_2$.
	
	Now, if any of the runs on $\mathsf{LDT}^{g_1}(p,m,\Delta - 1)$ rejects, $V$ selects $\pi_2$, and if any of the runs on $\mathsf{LDT}^{g_2}(p,m,\Delta - 1)$ rejects, $V$ selects $\pi_1$.\footnote{As a minor detail, if both $g_1$ and $g_2$ are rejected by some runs, $V$ selects $\pi_2$.} In other words, $V$ first \emph{disqualifies} the oracles that do not pass the low-max-degree test. We set $c_1$ to be large enough so that conditioning on the event that $V$ does not terminate yet, with probability at least $0.99$, both $g_1$ and $g_2$ are $1/4$-close to polynomials $\wtg_1\colon \F_p^m \to \F$ and $\wtg_2\colon \F_p^m \to \F$, respectively, where $\degmax(\wtg_1)$ and $\degmax(\wtg_2)$ are at most $\Delta - 1$.
	
	We can then use $\mathsf{PCorr}^{g_1}(p,m,m \cdot (\Delta - 1),\cdot)$ and $\mathsf{PCorr}^{g_2}(p,m,m \cdot (\Delta - 1),\cdot)$ to access the polynomials $\wtg_1$ and $\wtg_2$. (Note that $m \cdot (\Delta-1) < p/3$, which satisfies the condition of~\autoref{lm:poly-SC}). We repeat them each $O(\log T + \log m)$ times to make sure that on a single invocation, they return the correct values of $\wtg_1$ and $\wtg_2$ respectively with probability at least $1 - 1/(mT)^{c_2}$ for a sufficiently large constant $c_2$. By~\autoref{lm:poly-SC}, each call to $\mathsf{PCorr}^{g_1}(p,m,m \cdot (\Delta - 1),\cdot)$ or $\mathsf{PCorr}^{g_2}(p,m,m \cdot (\Delta - 1),\cdot)$ takes $\polylog(T)$ time.
	
	\paragraph*{Selecting the better polynomial.}

	From now on, we \textbf{refine} what it means when $V$ selects $\pi_\mu$: now it means that $V$ outputs the bit corresponding to $i$ in $\WT{g}_\mu$ (recall that we are assuming that $i$ points to an entry in the encoded history $g_1$ or $g_2$).
	
	Let $\{v^{1}_{i,j}\}$ and $\{v^{2}_{i,j}\}$ be the encoded histories in $\WT{g}_1$ and $\WT{g}_2$. Then $V$ uses $\Comp^{\WT{g}_1,\WT{g}_2}(p,m,\Delta-1)$ to find the lexicographically largest $(i',j')$ such that $v^{1}_{i',j'} \ne v^{2}_{i',j'}$.\footnote{Recall that the $\{v_{i,j}\}$ is encoded in the reverse lexicographic order (\autoref{defi:encoded-history-Korten}).} Note that $\Comp^{\WT{g}_1,\WT{g}_2}(p,m,\Delta-1)$ makes at most $\poly(m \cdot \Delta)$ queries to both $\WT{g}_1$ and $\WT{g}_2$.  By making $c_2$ large enough, we know that $\Comp$ operates correctly with probability at least $0.8$. By operating correctly, we mean that (1) if $\wtg_1 \ne \wtg_2$, $\Comp$ finds the correct $(i',j')$ and (2) If $\wtg_1 = \wtg_2$, $\Comp$ returns $\bot$.\footnote{From~\autoref{lm:comp-RM}, $\Comp^{\WT{g}_1,\WT{g}_2}(p,m,\Delta-1)$ itself operates correctly with probability at least $0.9$. But the access to $\WT{g}_1$ (similarly to $\WT{g}_2$) is provided by $\mathsf{PCorr}^{g_1}(p,m,m \cdot (\Delta - 1),\cdot)$, which may err with probability at most $1/(mT)^{c_2}$. So we also need to take a union bound over all the bad events that a query from $\Comp$ to $\WT{g}_1$ or $\WT{g}_2$ is incorrectly answered.}
	
	In what follows, we assume that $\Comp$ operates correctly. If $\Comp$ returns $\bot$, then $V$ simply selects $\pi_1$. Otherwise, there are several cases:

	\def\ipred{i_{\circ}}
	\def\jpred{j_{\circ}}
	
	\begin{enumerate}
		\item $i' = k$. In this case, $\wtg_1$ and $\wtg_2$ disagree on their leaf values, which intend to encode $f$. $V$ queries $f$ to figure out which one has the correct value, and selects the corresponding oracle. (Note that at most one of them can be consistent with $f$.  If none of them are consistent, then $V$ selects $\pi_1$.)
		
		From now on, assume $i' < k$ and set $\alpha = v^{1}_{i'+1,2j'}\circ v^{1}_{i'+1,2j'+1}$. Note that by the definition of $(i',j')$, it holds that $\alpha = v^{2}_{i'+1,2j'}\circ v^{2}_{i'+1,2j'+1}$ as well.
		
		\item $i' < k$ and both $v^{1}_{i',j'}$ and $v^{2}_{i',j'}$ are not $\bot$. In this case, $V$ first checks whether both of them are in $C^{-1}(\alpha)$ (it can be checked by testing whether $C(v^{1}_{i',j'}) = \alpha$ and $C(v^{2}_{i',j'}) = \alpha$). If only one of them is contained in $C^{-1}(\alpha)$, $V$ selects the corresponding oracle. If none of them are contained, $V$ selects $\pi_1$. Finally, if both are contained in $C^{-1}(\alpha)$, $V$ checks which one is lexicographically smaller, and selects the corresponding oracle.
		
		\item $i' < k$, and one of the $v^{1}_{i',j'}$ and $v^{2}_{i',j'}$ is $\bot$. Say that $v^b_{i',j'} = \bot$ for some $b\in\{1, 2\}$, and denote $\bar{b} := 3-b$ as the index of the other proof. In this case, let $(\ipred, \jpred)$ denote the predecessor of $(i', j')$ in terms of the reverse lexicographical order (that is, the smallest pair that is lexicographically greater than $(i', j')$). Since $\Comp$ operates correctly, we have that $v^1_{\ipred, \jpred} = v^2_{\ipred, \jpred}$. If $v^1_{\ipred, \jpred} = \bot$, then $\pi_{\bar{b}}$ has to be incorrect (since by \autoref{def: computation history of Korten}, $\bot$'s form a contiguous suffix of the history), and $V$ selects $\pi_b$. Otherwise, if $v^{\bar{b}}_{i', j'} \in C^{-1}(\alpha)$, then $\pi_b$ is incorrect (as it claims that $C^{-1}(\alpha) = \varnothing$), and $V$ selects $\pi_{\bar{b}}$. Otherwise, $V$ selects $\pi_b$.
	\end{enumerate}

	\paragraph*{Analysis.} Now we show that $\Pr\bkts{V_{T,n}^{f,h,\pi}(i) = h(i)} \ge 2/3$. (The proof for $\Pr\bkts{V_{T,n}^{f,\pi,h}(i) = h(i)} \ge 2/3$ is symmetric.) To get the desired $\eps$ error probability, one can simply repeat the above procedure $O(\log 1/\eps)$ times and output the majority.
	
	First, by~\autoref{lm:poly-LDT}, $\mathsf{LDT}^{g_1}(p,m,\Delta - 1)$ passes with probability $1$. If some of the runs of $\mathsf{LDT}^{g_2}(p,m,\Delta - 1)$ rejects, then $V$ selects $h$. Otherwise, we know that with probability at least $0.99$, $\mathsf{PCorr}^{g_1}(p,m,m \cdot (\Delta - 1),\cdot)$ and $\mathsf{PCorr}^{g_2}(p,m,m \cdot (\Delta - 1),\cdot)$ provide access to polynomials $\wtg_1$ and $\wtg_2$ with maximum individual degree at most $\Delta - 1$, where $\wtg_1$ encodes the correct history values $\{v_{i,j}\}_{i,j}$ of $\Kor(C,f)$.
	
	Then, assuming $\Comp$ operates correctly (which happens with probability at least $0.8$), if $\wtg_1 = \wtg_2$, then the selection of $V$ does not matter. Now we assume $\wtg_1 \ne \wtg_2$.
	
	We will verify that in all three cases above, $h$ (as the first oracle) is selected by $V$. In the first case, by definition, all leaf values in $h$ are consistent with $f$, and hence $h$ is selected. In the second case, since $h$ contains the correct history values, we know that $v^1_{i',j'}$ must be the smallest element from $C^{-1}(\alpha)$, so again $h$ is selected. In the last case: (1) if $v^1_{\ipred, \jpred} = \bot$, then $v^1_{i', j'}$ has to be $\bot$ as well, thus $h$ is selected; (2) if $v^1_{\ipred, \jpred} \ne \bot$ and $v^1_{i', j'} = \bot$, then $C^{-1}(\alpha) = \varnothing$, and since the other proof $\pi$ claims some element $v^2_{i', j'}\in C^{-1}(\alpha)$, $h$ is selected; and (3) if $v^1_{\ipred, \jpred} \ne\bot$ and $v^1_{i', j'} \ne \bot$, then $\pi$ claims that $C^{-1}(\alpha) = \varnothing$ and we can check that $v^1_{i', j'} \in C^{-1}(\alpha)$, therefore $h$ is selected as well.

	\paragraph*{The remaining case: $i$ points to the location of $(\istar,\jstar)$.} In this case, $V$ still runs the algorithm described above to make a selection. Indeed, if $\Comp$ does not return $\bot$, $V$ operates exactly the same. But when $\Comp$ returns $\bot$, $V$ cannot simply select $\pi_1$ since we need to make sure that $V$ selects the oracle corresponding to $h$  (it can be either $\pi_1$ or $\pi_2$). Hence, in this case, $V$ first reads $(\istar^1,\jstar^1)$ and $(\istar^2,\jstar^2)$ from $\pi_1$ and $\pi_2$. If they are the same, $V$ simply selects $\pi_1$. Otherwise, for $b \in [2]$, $V$ checks whether $v^b_{\istar^b,\jstar^b} = \bot$, and select the one that satisfies this condition. (If none of the $v^b_{\istar^b,\jstar^b}$ are, then $V$ selects $\pi_1$). If both of $v^b_{\istar^b,\jstar^b}$ are $\bot$, $V$ selects the $\mu \in [2]$ such that $(\istar^\mu,\jstar^\mu)$ is larger.

	Now, we can verify that $V_{T,n}^{f,h,\pi}$ selects $h$ with high probability as well. (To see this, note that in the correct history, $(\istar,\jstar)$ points to the lexicographically largest all-zero block.)
 
 	Finally, the running time bound follows directly from the description of $V$.
\end{proof}

\subsubsection{A remark on relativization}
Perhaps surprisingly, although~\autoref{lemma:Kor-prop-s2} heavily relies on arithmetization tools such as Reed--Muller encoding and low-degree tests, it in fact also relativizes. To see this, the crucial observation is that, similarly to~\autoref{lemma:Kor-prop-sigma2}, the verifier $V$ from~\autoref{lemma:Kor-prop-s2} only needs \emph{black-box access} to the input circuit $C$, meaning that it only needs to evaluate $C$ on certain chosen input. Hence, when $C$ is actually an oracle circuit $C^\oracle$ for some arbitrary oracle $\oracle$, the only modification we need is that $V$ now also takes $\oracle$ as an oracle.

\begin{remark}\label{rem:Kor-prop-s2-rel}
	\autoref{defi:encoded-history-Korten} and~\autoref{lemma:Kor-prop-s2} \emph{relativize}, in the sense that if $C$ is an oracle circuit $C^{\oracle}$ for some arbitrary oracle,~\autoref{defi:encoded-history-Korten} needs no modification since~\autoref{def: computation history of Korten} relativizes, and~\autoref{lemma:Kor-prop-s2} holds with the only modification that $V$ now also needs to take $\oracle$ as an oracle (since it needs to evaluate $C$).
\end{remark}

Indeed, the remark above might sound strange at first glance: arguments that involve PCPs often do not \emph{relativize}, and the encoded history $\EncHisKor(C, f)$ looks similar to a PCP since it enables $V$ to perform a probabilistic local verification. However, a closer inspection reveals a key difference: the circuit $C$ is always treated as a black box\textemdash both in the construction of history (\autoref{def: computation history of Korten}) and in the construction of the encoded history (\autoref{defi:encoded-history-Korten}). That is, the arithmetization in the encoded history \emph{does not arithmetize} the circuit $C$ itself.

\subsection{Lower Bounds for \texorpdfstring{$\S_2\E$}{S2E}}\label{sec:S2E:S2E-lb}

Let $\{C_n:\{0, 1\}^n \to \{0, 1\}^{2n}\}$ be a $\P$-uniform family of circuits. We show that there is a single-valued $\mathsf{F}\S_2\P$ algorithm $\calA$ such that for infinitely many $n\in\N$, on input $1^n$, $\calA(1^n)$ outputs a canonical string that is outside the range of $C_n$.

\begin{theorem}\label{theo:S-2-algo-for-range-avoidance}
	Let $\{C_n \colon \{0, 1\}^n \to \{0, 1\}^{2n}\}_{n \in \N}$ be a $\P$-uniform family of circuits. There is a sequence of valid outputs $\{y_n\in\bs{2n} \setminus \Range(C_n)\}_{n \in \N}$ and a single-valued $\mathsf{F}\S_2\P$ algorithm $A$ with one bit of advice, such that for infinitely many $n \in \N$, $A(1^n)$ outputs $y_n$.
\end{theorem}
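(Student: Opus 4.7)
The plan is to construct a single-valued $\mathsf{F}\S_2\BPP$ algorithm that solves $\Avoid$ on $\{C_n\}$ for infinitely many inputs, and then collapse it to a single-valued $\mathsf{F}\S_2\P$ algorithm via the standard $\S_2\BPP = \S_2\P$ argument quoted in~\autoref{sec:intuitions} from~\cite{Canetti96, RussellS98}. The overall combinatorial architecture mirrors~\autoref{theo:Sigma-2-algo-for-range-avoidance}: fix a starting input length $n_0$, set $n_i = n_{i-1}^{10}$, $T_0 = 2^{2n_0}\cdot 2n_0$ with $f_0$ the lexicographic concatenation of all length-$2n_0$ strings (so $f_0 \notin \Range(\GGM_{T_0}[C_{n_0}])$ by~\autoref{fact: brute force for sigma2}), and recursively $f_{i+1} = \EncHisKor(C_{n_i}, f_i)$, which by~\autoref{defi:encoded-history-Korten} has length $T_{i+1} = T_i^6$. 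The crucial difference from~\autoref{theo:Sigma-2-algo-for-range-avoidance} is that we now use the Reed--Muller--encoded history so that verification becomes an $\S_2\BPP$ local selector rather than a $\Pi_1$ local verifier. A calculation analogous to the one in~\autoref{theo:Sigma-2-algo-for-range-avoidance} (the degree-$6$ blow-up is still dominated by the $n_i \mapsto n_i^{10}$ growth) gives $t = O(\log n_0)$ with $T_t \le \poly(n_t)$, and varying $n_0$ yields disjoint sequences covering infinitely many input lengths.

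Let $k \in \{0, 1, \dotsc, t\}$ be the largest index with $f_k \notin \Range(\GGM_{T_k}[C_{n_k}])$; such $k$ exists since $k=0$ always works. The algorithm $A$ outputs $\Kor(C_{n_k}, f_k)$ on input $1^{n_k}$. Each of the two proofs $\pi_\mu$ represents a candidate $\hat f_{k+1}^{(\mu)}$ of $f_{k+1}$: if $k < t$ we take $\pi_\mu \in \bs{n_{k+1}}$ to be a seed for $\GGM_{T_{k+1}}[C_{n_{k+1}}]$ (such a seed generating $f_{k+1}$ exists by the maximality of $k$), while if $k=t$ we take $\pi_\mu \in \bs{T_{t+1}}$ to be $\hat f_{t+1}^{(\mu)}$ directly. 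In either case any bit of $\hat f_{k+1}^{(\mu)}$ is computable from $\pi_\mu$ in $\poly(n_k)$ time, using~\autoref{lemma: GGMEval} in the former case. By iterating the $\In$ oracle of~\autoref{lemma:Kor-prop-s2}, from each $\pi_\mu$ we obtain a candidate string $\hat f_i^{(\mu)}$ at every level $i \le k+1$; since each $\In$ call makes only a single oracle query and runs in $\poly(\log T_i)$ time, these candidates are each computable by a circuit of size $\poly(n_k)$.

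The heart of the construction is to ``peel'' the chain using the $\S_2\BPP$ selector $V$ from~\autoref{lemma:Kor-prop-s2}. Starting with the deterministically known $f_0$, for $i = 1, 2, \dotsc, k+1$ the algorithm commits to one of the two $\poly(n_k)$-size circuits computing $\hat f_i^{(\mu)}$, using $V$ with oracle access to the correct $f_{i-1}$ that is supplied by the commitment made at the previous level. \autoref{lemma:Kor-prop-s2} guarantees that so long as at least one of $\pi_1, \pi_2$ encodes the correct $f_{k+1}$, the selector outputs the correct bit of the committed history at each queried position with probability $1-\eps$; setting $\eps = 1/\poly(n_k)$ and taking a union bound across the $O(\log n_0)$ levels keeps the overall failure probability below $1/3$. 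Once the level-$(k+1)$ commitment is fixed, we apply $\Out$ from~\autoref{lemma:Kor-prop-s2} to extract $\Kor(C_{n_k}, f_k)$. Handling the infinitely-often quantifier and the choice of $k$ takes one bit of advice, flagging whether the current $n$ equals $n_k^{(\ell)}$ for the sequence started at $n_0^{(\ell)}$. The derandomization from $\mathsf{F}\S_2\BPP$ to $\mathsf{F}\S_2\P$ is then the standard argument: one $\S_2$-prover guesses a ``good'' random tape on which the selector succeeds, while the opposing prover is forced to commit either to agreement or to a random tape exposing a fault.

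The chief technical hurdle I expect is formalizing the commitment step without blowing past polynomial time. Naively, accessing one bit of the correct $f_i$ by invoking $V$ costs $\poly(n_k)$ queries to the correct $f_{i-1}$, and recursively unfolding down to $f_0$ would explode into quasi-polynomial time. The intended resolution is to commit, at each level, to one of the two $\poly(n_k)$-size candidate circuits (rather than to an opaque oracle), so that subsequent levels can evaluate the committed circuit deterministically in $\poly(n_k)$ time. This shifts the difficulty to distinguishing the two candidates when they differ, where we expect to rely on the Reed--Muller structure of the encoded history: use $\mathsf{LDT}$ from~\autoref{lm:poly-LDT} to rule out malformed candidates, $\mathsf{PCorr}$ from~\autoref{lm:poly-SC} for robust access, and $\Comp$ from~\autoref{lm:comp-RM} to localize any disagreement where the selector can then be invoked conclusively. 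Threading these arithmetization ingredients together while keeping the whole procedure in polynomial time, and preserving the black-box access to $C_{n_k}$ that guarantees relativization (\autoref{rem:Kor-prop-s2-rel}), is where the real care is needed.
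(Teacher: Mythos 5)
Your architecture matches the paper's proof almost exactly: the same ladder $n_i = n_{i-1}^{10}$, $T_i = T_{i-1}^6$ with $f_{i+1} = \EncHisKor(C_{n_i}, f_i)$, the same choice of $k$, proofs interpreted as $\GGM$ seeds, recursive access to the candidates via $\In$, selection via \autoref{lemma:Kor-prop-s2}, extraction via $\Out$, the XOR-based derandomization of $\S_2\BPP$ into $\S_2\P$, and the advice bit. The one place you deviate is the step you yourself flag as the chief technical hurdle, and there both your diagnosis and your fix are off. The ``naive'' recursive unfolding that you dismiss as quasi-polynomial is exactly what the paper does, and it is polynomial: level $i$ makes $(n_{i-1}\cdot\log\eps_i^{-1})^{O(1)}$ queries to level $i-1$, so the total cost is governed by $\prod_{i\le k} n_i^{O(1)} = n_0^{O(\sum_i 10^i)} = \poly(n_k)$ --- the exponents telescope because the $n_i$ grow so fast that $\sum_i \log n_i = O(\log n_k)$. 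One must let the error parameters $\eps_i$ shrink down the recursion, but $\log \eps_i^{-1}$ stays bounded by $\poly(n_0)\cdot \log n_k$, so nothing blows up.

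By contrast, your proposed commitment shortcut has a genuine gap as stated. You commit, at level $i$, to one of the two raw $\poly(n_k)$-size circuits computing $\hat f_i^{(\mu)}$, and then feed that circuit to the next level's selector as its ``correct $f_i$'' oracle. But \autoref{lemma:Kor-prop-s2} only guarantees that the selector's \emph{output bits} agree with the true encoded history; internally it low-degree-tests and self-corrects, so when the adversarial candidate merely decodes (after self-correction) to the same polynomial as the honest one while differing from it on a sparse set of raw positions, the selector may legitimately settle on the adversarial raw string. Committing to that raw circuit hands the next level an oracle that disagrees with the true $f_i$ at some positions, and the next level's leaf-consistency check (the case $i'=k$ in the proof of \autoref{lemma:Kor-prop-s2}) needs exact access to $f_i$ at positions chosen by the disagreement-finding step, which the adversary influences; its correctness guarantee therefore no longer applies. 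The repair is to commit only to the index $\mu$ and to route every subsequent access to the committed history through $\mathsf{PCorr}$ with amplified confidence --- which is precisely what the paper's recursive definition $V_i = \Select^{V_{i-1},\hat f_i,\hat g_i}_{T_{i-1},n_{i-1}}$ accomplishes. With that repair (or by simply adopting the paper's recursion together with its telescoping running-time analysis), your proof goes through.
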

\begin{proof}
	Our proof proceeds similarly to the proof of the previous~\autoref{theo:Sigma-2-algo-for-range-avoidance}. We will follow the same notation.
	
	\paragraph*{Notation.}
	Let $n^{(1)}$ be a large enough power of $2$, $n^{(\ell)} = 2^{2^{n^{(\ell-1)}}}$ for each integer $\ell > 1$. Let $n_0^{(\ell)} = n^{(\ell)}$ and $t^{(\ell)} = O\mleft(\log n_0^{(\ell)}\mright)$ be parameters that we set later. For each $1\le i\le \tl$, let $\nell_i := \mleft(\nell_{i-1}\mright)^{10}$. To show our algorithm $A$ works on infinitely many input lengths, we will show that for every $\ell \in \N$, there is an input length $\nell_{i}$ for some $i \in \bkts{\tl}$ such that $A$ works.
	
	Fix $\ell \in \N$. From now on, for convenience, we will use $n_i$ and $t$ to denote $\nell_i$ and $\tl$, respectively.
	
	\paragraph*{Specifying $T_i$ and $f_i$.} For each input length $n_i$, we will specify a parameter $T_i \in \N$ and a string $f_i \in \bs{T_i}$. Our win-win analysis is based on whether $f_i \in \Range(\GGM_{T_i}[C_{n_i}])$ for each $i \in \zeroUpto{t}$.
	
	Let $T_0 := 2^{2n_0}\cdot 2n_0$ and $f_0$ be the concatenation of all length-$2n_0$ strings (which has length $T_0$). From \autoref{fact: brute force for sigma2}, we have that $f_0 \not \in \Range(\GGM_{T_0}[C_{n_0}])$. For every $i \in [t]$, we define $$f_i = \EncHisKor(C_{n_{i-1}}, f_{i-1}).$$ From \autoref{defi:encoded-history-Korten}, this also means that we have set $T_i = T_{i-1}^6$ for every $i \in [t]$.
	
	Let $t$ be the first integer such that $T_{t+1} \le n_{t+1}$. Note that we have $T_i = (T_0)^{6^i} \le 2^{3 n_0 \cdot 6^i}$ and $n_i = (n_0)^{10^i} = 2^{\log n_0 \cdot 10^i}$. Hence, we have that $t \le O(\log n_0)$. (Also note that $n_t^{(\ell)} < n_0^{(\ell + 1)}$.)
	
	\paragraph*{Description of our $\mathsf{F}\S_2\P$ algorithm $A$.} Now, let $k \in \{0,1,\dotsc,t\}$ be the largest integer such that $f_k\not\in \Range(\GGM_{T_k}[C_{n_k}])$. Since $f_0\not\in \Range(\GGM_{T_0}[C_{n_0}])$, such a $k$ must exist. Let $z := \Kor(C_{n_k}, f_k)$, it follows from \autoref{lemma: Korten's reduction} that $z$ is not in the range of $C_{n_k}$ (i.e., $z \in \bs{2n_k} \setminus \Range(C_{n_k})$). Our single-valued $\mathsf{F}\S_2\P$ algorithm $A$ computes $z$ on input $1^{n_k}$ (see~\autoref{defi:sv-algos}). %
	
	We will first construct an $\S_2\BPP$ verifier $V$ that computes $z$ in polynomial time on input $1^{n_k}$, and then use the fact that all $\S_2\BPP$ verifiers can be turned into equivalent $\S_2 \P$ verifiers with a polynomial-time blow-up~\cite{Canetti96, RussellS98}, from which we can obtain the desired verifier $V_A$ for $A$.
	
	\paragraph*{Description of an $\S_2\BPP$ verifier $V$ computing $z$.} 
	Formally, $V$ is a randomized polynomial-time algorithm that takes $1^{n_k}$ and two witnesses $\pi_1,\pi_2 \in \bs{n_{k+1}}$ as input, and we aim to establish the following:
	 \begin{mdframed}
	 	\small
	 	There exists $\omega \in \bs{n_{k+1}}$ such that for every $\pi \in \bs{n_{k+1}}$, we have
	 	\[
	 	\Pr[V(1^{n_k},\omega,\pi) = z] \ge 2/3\qquad\text{and}\qquad \Pr[V(1^{n_k},\pi,\omega) = z] \ge 2/3,
	 	\]
   where the probabilities are over the internal randomness of $V$.
	 \end{mdframed}
 
 \newcommand{\hfa}{\hat{f}}
 \newcommand{\hfb}{\hat{g}}
 
	 In more detail, if $k < t$, then $V$ treats $\pi_1$ and $\pi_2$ as inputs to the circuit $\GGM_{T_{k+1}}[C_{n_{k+1}}]$, and let 
 \[
 \hfa_{k+1} := \GGM_{T_{k+1}}[C_{n_{k+1}}](\pi_1)\quad\text{and}\quad\hfb_{k+1} := \GGM_{T_{k+1}}[C_{n_{k+1}}](\pi_2).
 \]
 Here, the lengths of $\pi_1$ and $\pi_2$ are $\ell \coloneqq n_{k+1} \le \poly(n_k)$. If $k = t$, then $V$ defines $\hfa_{k+1} := \pi_1$, $\hfb_{k+1} := \pi_2$, and their lengths are $\ell \coloneqq T_{t+1} \le n_{k+1} \le \poly(n_k)$. It is intended that one of the $\hfa_{k+1}$ and $\hfb_{k+1}$ is $f_{k+1} = \EncHisKor(C_{n_k}, f_k)$ ($V$ needs to figure out which one). 
 
 We now specify the intended proof $\omega \in \bs{n_{k+1}}$. When $k < t$, since $f_{k+1} \in \Range(\GGM_{T_{k+1}}[C_{n_{k+1}}])$, we can set $\omega$ so that $\GGM_{T_{k+1}}[C_{n_{k+1}}](\omega) = f_{k+1}$. When $k = t$, we simply set $\omega = f_{k+1}$.
 
 Note that \autoref{lemma: GGMEval} provides us ``random access'' to the (potentially very long) strings $\hfa_{k+1}$ and $\hfb_{k+1}$: (take $\hfa_{k+1}$ as an example) given $\pi_1$ and $j\in \zeroUpto{T_{k+1}-1}$, one can compute the $j$-th bit of $\hat{f}_{k+1}$ in $\poly(n_k)$ time. Also recall from~\autoref{lemma:Kor-prop-s2} that for each $i$, $f_{i+1} = \EncHisKor(C_{n_i}, f_i)$ contains the string $f_i$, which can be retrieved by the oracle algorithm $\In$ described in \autoref{item: Input-Output-S2} of \autoref{lemma:Kor-prop-s2}. Therefore, for each $i$ from $k$ downto $1$, we can recursively define $\hfa_i$ such that $(\hfa_i)_j = \In_{T_i, n_i}^{\hfa_{i+1}}(j)$ (similarly for $\hfb_i$). We also define $\hfa_0$ and $\hfb_0$ to be the concatenation of all length-$(2n_0)$ strings in the lexicographical order, so $\hfa_0 = \hfb_0 = f_0$. 
 
 Applying the algorithm $\In$ recursively, we obtain two algorithms $F$ and $G$ (depending on $\pi_1$ and $\pi_2$, respectively) that given $i \in \zeroUpto{k+1}$ and $j \in \zeroUpto{T_i-1}$, output the $j$-th bit of $\hfa_i$ or $\hfb_i$, respectively. Since $\In$ only makes one oracle query, these algorithms run in $\poly(n_k)$ time.%
	
 We are now ready to formally construct $V$. We first recursively define a series of procedures $V_0,\dotsc, V_{k+1}$, where each $V_i$ takes an input $j$ and outputs (with high probability) the $j$-th bit of $f_i$. Let $V_0$ be the simple algorithm that, on input $j$, computes the $j$-th bit of $f_0$. For every $i \in [k+1]$, we define 
 \[
 V_i(\alpha) \coloneqq \Select_{T_{i-1},n_{i-1}}^{V_{i-1},\hfa_{i},\hfb_{i}}(C_{n_{i-1}}, \alpha,\eps_i)
 \]
 
 for some $\eps_i \in [0,1)$ to be specified later,
 where $\Select$ is the algorithm in \autoref{item: S2BPP-V} of \autoref{lemma:Kor-prop-s2}.
 We note that since $V_{i-1}$ is a randomized algorithm, when $V_i$ calls $V_{i-1}$, it also draws \emph{independent} random coins used by the execution of $V_{i-1}$. Moreover, all calls to $\hfa_{i}$ and $\hfb_{i}$ in $V_i$ can be simulated by calling our algorithms $F$ and $G$. Jumping ahead, we remark that $V_i$ is supposed to compute $f_i$ when at least one of $\hfa_{i}$ or $\hfb_{i}$ is $f_{i}$. We then set 
 \[
 V(1^{n_k}, \pi_1, \pi_2) \coloneqq \Out_{T_{k},n_{k}}^{V_{k+1}}
 \]
 (note that $V_{k+1}$ is defined from $\hfa_{k+1}$ and $\hfb_{k+1}$, which are in turn constructed from $\pi_1$ and $\pi_2$), where $\Out_{T_{k},n_{k}}$ is the algorithm from~\autoref{item: Input-Output-S2} of~\autoref{lemma:Kor-prop-s2}.
 
 \paragraph*{Correctness of $V$.} Let $\tau \in \N$ be a large constant such that $\Select_{T,n}$  runs in $(n \cdot \log 1/\eps)^{\tau}$ time. In particular, on any input $\alpha$, $\Select_{T_{i-1},n_{i-1}}^{V_{i-1},\hfa_{i},\hfb_{i}}(C_{n_{i-1}}, \alpha,\eps_i)$ makes at most $(n_{i-1} \cdot \log 1/\eps_i)^{\tau}$ many queries to $V_{i-1}$. 
 
 We say $\Select^{f,\pi_1,\pi_2}_{T, n}(C, \alpha, \eps_i)$ makes an error if the following statements hold ($h = \EncHisKor(C,f)$ from~\autoref{lemma:Kor-prop-s2}):\footnote{The condition below only applies when at least one of $\pi_1$ and $\pi_2$ is $h$. If neither of them are $h$, then $\Select$ by definition never errs.}
 \[
 \mleft[ \pi_1 = h\quad \mathsf{OR}\quad \pi_2 = h \mright] \quad\mathsf{AND}\quad \mleft[ \Select^{f,\pi_1,\pi_2}_{T, n}(C_{n_{i-1}}, \alpha, \eps_i) \neq h_\alpha \mright].
 \]
 
 Similarly, we say that $\Select_{T_{i-1},n_{i-1}}^{V_{i-1},\hfa_{i},\hfb_{i}}(C_{n_{i-1}}, \alpha,\eps_i)$ makes an error if either (1) one of the queries to $V_{i-1}$ are incorrectly answered (i.e., the answer is not consistent with $f_{i-1}$) or (2) all queries are correctly answered but $\Select_{T_{i-1},n_{i-1}}^{f_{i-1},\hfa_{i},\hfb_{i}}(C_{n_{i-1}}, \alpha,\eps_i)$ makes an error. Note that (2) happens with probability at most $\eps_i$ from~\autoref{item: S2BPP-V} of~\autoref{lemma:Kor-prop-s2}.
 
 Now we are ready to specify the parameter $\eps_i$. We set $\eps_{k+1} =1 / (100 \cdot n_{k+1})$, and for every $i \in \{0,1,\dotsc,k\}$, we set 
 \[
 \eps_i = \frac{\eps_{i+1}}{4 \cdot (n_{i} \cdot \log 1/\eps_{i+1})^{\tau}}.
 \]
 
 To show the correctness of $V$, we prove the following claim by induction.
 
 \begin{claim}\label{claim:V_i-correct}
 	Assume either $\hfa_{k+1} = f_{k+1}$ or $\hfb_{k+1} = f_{k+1}$. For every $i \in \{0,1,\dotsc,k+1\}$ and $\alpha \in \bkts{|f_i|}$, $V_i(\alpha)$ outputs $f_i(\alpha)$ with probability at least $1 - 2 \eps_i$.
 \end{claim}
 \begin{claimproof}
 	The claim certainly holds for $V_0$. Now, for $i \in [k+1]$, assuming it holds for $V_{i-1}$, it follows that $\Select_{T_{i-1},n_{i-1}}^{V_{i-1},\hfa_{i},\hfb_{i}}(C_{n_{i-1}}, \alpha,\eps_i)$ makes an error with probability at most 
 	\[
 	\eps_i + (n_{i-1} \cdot \log 1/\eps_i)^{\tau} \cdot 2\eps_{i-1} \le 2\eps_i.
 	\]
 	By the definition of making an error and our assumption that either $\hfa_{k+1} = f_{k+1}$ or $\hfb_{k+1} = f_{k+1}$ (from which we know either $\hfa_{i} = f_{i}$ or $\hfb_{i} = f_{i}$), it follows that $V_i(\alpha)$ outputs $f_i(\alpha)$ with probability at least $1 - 2 \eps_i$.
 \end{claimproof}

Note that $\Out_{T_{k},n_{k}}^{V_{k+1}}$ makes at most $4n_k$ queries to $V_{k+1}$. It follows from~\autoref{claim:V_i-correct} that when either $\hfa_{k+1} = f_{k+1}$ or $\hfb_{k+1} = f_{k+1}$, we have that $V(1^{n_k}, \pi_1, \pi_2)$ outputs $z$ with probability at least $1 - (4n_k) \cdot 1/(100 n_{k+1}) \ge 2/3$. The correctness of $V$ then follows from our choice of $\omega$.

\paragraph*{Running time of $V$.} Finally, we analyze the running time of $V$, for which we first need to bound $\log \eps_i^{-1}$. First, we have
\[
\log \eps_{k+1}^{-1} = \log n_{k+1} + \log 100.
\]
By our definition of $\eps_i$ and the fact that $\tau$ is a constant, we have
\begin{align*}
\log \eps_{i}^{-1} &= \log \eps_{i+1}^{-1} + \log  4 + \tau \cdot \mleft(\log n_{i} + \log\log \eps_{i+1}^{-1} \mright)\\
						   &\le 2 \log \eps_{i+1}^{-1} + O(\log n_{i}).
\end{align*}

Expanding the above and noting that $k \le t \le O(\log n_0)$, for every $i \in [k+1]$ we have that
\[
\log \eps_{i}^{-1} \le 2^{k} \cdot O\mleft(\sum_{\ell=0}^{k} \log n_{\ell}\mright) \le \poly(n_0) \cdot \log n_k.
\]

Now we are ready to bound the running time of the $V_i$. First $V_0$ runs in $T_0 = \poly(n_0)$ time. For every $i \in [k+1]$, by the definition of $V_i$, we know that $V_i$ runs in time
\[
T_i = O\Big( (n_{i-1} \cdot \log 1/\eps_i)^\tau \Big) \cdot (T_{i-1} + n_k^{\beta} + 1),
\]
where $\beta$ is a sufficiently large constant and $n_k^\beta$ bounds the running time of answering each query $\Select_{T_{i-1},n_{i-1}}^{V_{i-1},\hfa_{i},\hfb_{i}}(C_{n_{i-1}}, \alpha,\eps_i)$ makes to $\hfa_{i}$ or $\hfb_{i}$, by running $F$ or $G$, respectively.

Expanding out the bound for $T_k$, we know that $V_{k+1}$ runs in time
\[
2^{O(k)} \cdot \mleft( \poly(n_0) \cdot \log n_k \mright)^{O(k \cdot \tau)} \cdot n_k^\beta \cdot \prod_{i=1}^{k+1} n_{i-1}^\tau.
\]
Since $n_k = n_{0}^{10^k}$ and $k \le O(\log n_0)$, the above can be bounded by $\poly(n_k)$. This also implies that $V$ runs in $\poly(n_k)$ time as well, which completes the analysis of the $\S_2\BPP$ verifier $V$.

\paragraph*{Derandomization of the $\S_2\BPP$ verifier $V$ into the desired $\S_2\P$ verifier $V_A$.} Finally, we use the underlying proof technique of $\S_2\BPP = \S_2\P$~\cite{Canetti96, RussellS98} to derandomize $V$ into a deterministic $\S_2\P$ verifier $V_A$ that outputs $z$.

By repeating $V$ $\poly(n_k)$ times and outputs the majority among all the outputs, we can obtain a new $\S_2\BPP$ verifier $\WT{V}$ such that
\begin{itemize}
	\item There exists $\omega \in \bs{n_{k+1}}$ such that for every $\pi \in \bs{n_{k+1}}$, we have
	\begin{equation}\label{eq:prob-WT-V}
		\Pr[\WT{V}(1^{n_k},\omega,\pi) = z] \ge  1 - 2^{-n_k} \qquad\text{and}\qquad \Pr[\WT{V}(1^{n_k},\pi,\omega) = z] \ge 1 - 2^{-n_k}.
	\end{equation}
\end{itemize}

Let $\ell = \poly(n_k)$ be an upper bound on the number of random coins used by $\WT{V}$. We also let $m := \poly(\ell, n_{k+1}) \le \poly(n_k)$ and use $\WT{V}(1^{n_k},\pi_1,\pi_2;r)$ to denote the output of $\WT{V}$ given randomness $r$. Now, we define $V_A$ as follows: It takes two vectors  $\vec{\pi}_1, \vec{\pi}_2 \in \bs{n_{k+1}} \times \mleft( \bs{\ell} \mright)^{m}$ as proofs. For $\vec{\pi}_1 = (\alpha,u_1,u_2,\dotsc,u_m)$ and $\vec{\pi}_2 = (\beta,v_1,v_2,\dotsc,v_m)$, $V_A$ outputs the majority of the multi-set
\[
\{ \WT{V}(1^{n_k},\alpha,\beta;u_i \oplus v_j) \}_{(i,j) \in [m]^2},
\]
where $u_i \oplus v_j$ denotes the bit-wise XOR of $u_i$ and $v_j$ (if no strings occur more than $m^2/2$ times in the set above, then $V_A$ simply outputs $\bot$).

We will show there exists $\vec{\omega} = (\gamma,r_1,\dotsc,r_m)$ such that for every $\vec{\pi} \in \bs{n_{k+1}} \times \mleft( \bs{\ell} \mright)^{m}$,
\[
\Pr[V_A(1^{n_k},\vec{\omega},\vec{\pi}) = z]\quad\text{and}\quad \Pr[V_A(1^{n_k},\vec{\pi},\vec{\omega}) = z].
\]

We first claim that there exist $r_1,\dotsc,r_m \in \bs{\ell}$ such that for every $u \in \bs{\ell}$ and for every $\pi \in \bs{n_{k+1}}$, it holds that (1) for at least a $2/3$ fraction of $i \in [m]$, we have $\WT{V}(1^{n_k},\omega,\pi;r_i \oplus u) = z$ and (2) for at least a $2/3$ fraction of $i \in [m]$, we have $\WT{V}(1^{n_k},\pi,\omega;r_i \oplus u) = z$.

To see this, for every fixed $u \in \bs{\ell}$  and $\pi \in \bs{n_{k+1}}$, by a simple Chernoff bound, the probability, over $m$ independently uniformly drawn $r_1,\dotsc,r_m$, that more than a $1/3$ fraction of $i \in [m]$ satisfies $\WT{V}(1^{n_k},\omega,\pi;r_i \oplus u) \ne z$ is at most $2^{-\Omega(m)}$, and the same probability upper bound holds for the corresponding case of $\WT{V}(1^{n_k},\pi,\omega;r_i \oplus u) \ne z$ as well. Our claim then just follows from a simple union bound over all $u \in \bs{\ell}$ and $\pi \in \bs{n_{k+1}}$.

Now, let $\gamma$ be the proof $\omega$ such that the condition~\eqref{eq:prob-WT-V} holds. We simply set $\vec{\omega} = (\gamma,r_1,\dotsc,r_m)$. From our choice of $\gamma$ and $r_1,\dotsc,r_m$, it then follows that for every $v_1,\dotsc,v_m \in \bs{\ell}$ and $\pi \in \bs{n_{k+1}}$, at least a $2/3$ fraction of $ \WT{V}(1^{n_k},\gamma,\pi;r_i \oplus v_j)$ equals $z$, and similarly for $ \WT{V}(1^{n_k},\pi,\gamma;r_i \oplus v_j)$. This completes the proof.

\paragraph*{Wrapping up.} Finally, we generalize $A$ and $V_A$ to work on all inputs $1^n$. On input $1^n$, $V_A$ calculates the largest $\ell$ such that $n^{(\ell)} \le n$, and also calculates the largest ${k'}$ such that $n_{k'}^{(\ell)} \le n$. If $n_{k'}^{(\ell)}\ne n$, then $V_A$ immediately outputs $\bot$ and halts. Otherwise, $V_A$ receives an advice bit indicating whether $k' = k^{(\ell)}$, where $k^{(\ell)}$ is the largest integer such that $f_{k^{(\ell)}}^{(\ell)} \not \in \Range(\GGM_{T_k^{(\ell)}}[C_{n_k^{(\ell)}}])$. If this is the case, then $V_A$ runs the verification procedure above; otherwise, it immediately outputs $\bot$ and halts. It is easy to see that $V_A$ runs in $\poly(n)$ time, and is an infinitely-often single-valued $\mathsf{F}\S_2\P$ algorithm solving the range avoidance problem of $\{C_n\}$.
\end{proof}

Moreover, observe that in the proof of~\autoref{lemma:Kor-prop-s2}, all considered input lengths (the $n^{(\ell)}_i$) are indeed powers of $2$. So we indeed have the following slightly stronger result.

\begin{corollary}\label{cor:S-2-algo-input-power-2}
	Let $\{C_n \colon \{0, 1\}^n \to \{0, 1\}^{2n}\}_{n \in \N}$ be a $\P$-uniform family of circuits. There is a single-valued $\mathsf{F}\S_2\P$ algorithm $A$ with one bit of advice such that for infinitely many $r \in \N$, letting $n = 2^r$, $A(1^n)$ outputs $y_n \in \bs{2n} \setminus \Range(C_n)$.
\end{corollary}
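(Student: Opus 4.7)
The plan is to observe that the proof of \autoref{theo:S-2-algo-for-range-avoidance} already establishes this stronger conclusion essentially for free: every input length on which the constructed algorithm is guaranteed to succeed is a power of $2$. So I would not redo the construction, but rather revisit the choice of input lengths in that proof and verify that the power-of-$2$ property is preserved throughout the hierarchy.

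First, I would trace through the hierarchy of input lengths used in the proof. By construction, $n^{(1)}$ is chosen to be a (sufficiently large) power of $2$, and $n^{(\ell)} = 2^{2^{n^{(\ell-1)}}}$ is manifestly a power of $2$ for every $\ell > 1$, so each starting length $n_0^{(\ell)} = n^{(\ell)}$ is a power of $2$. The recurrence $n_i^{(\ell)} = \bigl(n_{i-1}^{(\ell)}\bigr)^{10}$ preserves this property, since $(2^r)^{10} = 2^{10r}$. Hence every $n_i^{(\ell)}$ occurring in the win-win analysis, for $i \in \zeroUpto{t^{(\ell)}}$, is of the form $2^r$ for some $r \in \N$.

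Second, I would recall from the proof of \autoref{theo:S-2-algo-for-range-avoidance} that the algorithm $A$ is guaranteed to succeed on at least one length $n_{k^{(\ell)}}^{(\ell)}$ from each block $\{n_i^{(\ell)}\}_{i \in \zeroUpto{t^{(\ell)}}}$, and that the blocks indexed by distinct values of $\ell$ are pairwise disjoint thanks to the inequality $n_{t^{(\ell)}}^{(\ell)} < n_0^{(\ell+1)}$. Combining these two observations, we obtain infinitely many distinct powers of $2$ on which $A$ outputs a valid element of $\bs{2n}\setminus\Range(C_n)$, which is exactly what the corollary asserts.

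There is no substantive obstacle here: the corollary is a bookkeeping refinement of \autoref{theo:S-2-algo-for-range-avoidance}, and the same single-valued $\mathsf{F}\S_2\P$ algorithm $V_A$ with one bit of advice works verbatim. The only thing worth double-checking is that the advice scheme still makes sense when we restrict attention to powers of $2$: on input $1^n$ with $n = 2^r$, the verifier $V_A$ computes the largest $\ell$ with $n^{(\ell)} \le n$ and the largest $k'$ with $n_{k'}^{(\ell)} \le n$, rejects unless $n_{k'}^{(\ell)} = n$, and otherwise consults a single advice bit indicating whether $k' = k^{(\ell)}$; since this advice depends only on the input length, the restriction to power-of-$2$ input lengths causes no issue.
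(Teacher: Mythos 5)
Your proposal is correct and matches the paper's own justification, which likewise observes that $n^{(1)}$ is a power of $2$, that $n^{(\ell)} = 2^{2^{n^{(\ell-1)}}}$ and $n_i^{(\ell)} = (n_{i-1}^{(\ell)})^{10}$ preserve this property, and hence that every input length considered in the proof of \autoref{theo:S-2-algo-for-range-avoidance} is a power of $2$. Your additional check that the advice scheme is unaffected is a fine (if unnecessary) piece of diligence.
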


We need the following reduction from Korten which reduces solving range avoidance with one-bit stretch to solving range avoidance with doubling stretch.

\begin{lemma}[{\cite[Lemma~3]{Korten21}}]\label{lemma:stretch-1-Korten}
	Let $n \in \N$. There is a polynomial time algorithm $A$ and an $\FP^\NP$ algorithm $B$ such that the following hold:
	\begin{enumerate}
		\item Given a circuit $C \colon \bs{n} \to \bs{n+1}$, $A(C)$ outputs a circuit $D\colon \bs{n} \to \bs{2n}$.
		\item Given any $y \in \bs{2n} \setminus \Range(D)$, $B(C,y)$ outputs a string $z \in \bs{n+1} \setminus \Range(C)$.
	\end{enumerate}
\end{lemma}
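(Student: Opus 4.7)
My plan is to amplify the one-bit stretch of $C$ into a linear stretch by iterating $C$ on its own last $n$ output bits $n$ times. Concretely, I would define $D \colon \bs{n} \to \bs{2n}$ as follows: on input $x \in \bs{n}$, set $x_0 \coloneqq x$, and for each $i = 1, \ldots, n$, parse $C(x_{i-1}) = (a_i, x_i)$ with $a_i \in \bitset$ and $x_i \in \bs{n}$; then output $D(x) \coloneqq (a_1, a_2, \ldots, a_n, x_n)$. This circuit $D$ has size $O(n \cdot \lvert C \rvert)$ and can be produced from $C$ in polynomial time, which yields the algorithm $A$. By construction, $D(x) = (b_1, \ldots, b_n, v_n)$ exactly when iterating $C$ on $x$ for $n$ steps produces the bit sequence $b_1, \ldots, b_n$ and ends at $v_n$.

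For the $\FP^\NP$ algorithm $B(C,y)$, given $y = (b_1, \ldots, b_n, v_n) \in \bs{2n} \setminus \Range(D)$, I would consider for each $i \in \{0, 1, \ldots, n\}$ the ``backward'' set
\[
W_i \coloneqq \set{w \in \bs{n} : \text{iterating $C$ on $w$ for $n-i$ steps yields the sequence } (b_{i+1}, \ldots, b_n, v_n)}.
\]
Trivially $W_n = \{v_n\}$ is nonempty, and the assumption $y \notin \Range(D)$ is equivalent to $W_0 = \emptyset$. Let $i^\star$ be the smallest index with $W_{i^\star} \neq \emptyset$; then $1 \le i^\star \le n$ and $W_{i^\star - 1} = \emptyset$. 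The key structural claim is that for every $w \in W_{i^\star}$, the string $(b_{i^\star}, w) \in \bs{n+1}$ lies outside $\Range(C)$: otherwise, a preimage $u$ of $(b_{i^\star}, w)$ under $C$ would belong to $W_{i^\star - 1}$, contradicting emptiness.

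Thus $B(C, y)$ works as follows. First, nonemptiness of $W_i$ (even restricted to a fixed prefix on $w$) is an $\NP$ predicate, since an $\NP$ witness is just $w \in \bs{n}$, verifiable by iterating $C$ for $n-i$ steps and comparing the emitted bits against the target. Binary search with $O(\log n)$ adaptive $\NP$ queries locates $i^\star$. Then, using $n$ further prefix-search $\NP$ queries, $B$ recovers the lexicographically smallest $w^\star \in W_{i^\star}$. Finally, $B$ outputs $(b_{i^\star}, w^\star) \in \bs{n+1} \setminus \Range(C)$, giving an $\FP^\NP$ algorithm. The only point requiring care is verifying the structural claim above, which is a direct unwinding of the recursive structure of the $W_i$, so I do not anticipate any real obstacle.
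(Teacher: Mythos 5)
Your proposal is correct and is essentially the standard proof of Korten's Lemma~3, which this paper cites rather than reproves: the iterated circuit $D$ that feeds the last $n$ output bits of $C$ back into itself while emitting one bit per step, and the backward search for the first index at which the set $W_i$ becomes empty, are exactly Korten's construction. The only point you leave implicit is that binary search for $i^\star$ requires the monotonicity $W_{i-1}\neq\emptyset \Rightarrow W_{i}\neq\emptyset$, which follows by the same one-step argument as your structural claim (and a linear scan with $n$ $\NP$ queries would suffice for $\FP^\NP$ in any case).
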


The following corollary then follows by combining~\autoref{lemma:stretch-1-Korten} and~\autoref{theo:S-2-and-PNP}.

\begin{corollary}\label{cor:S-2-algo-input-power-2-stretch-1}
	Let $\{C_n \colon \{0, 1\}^n \to \{0, 1\}^{n+1}\}_{n \in \N}$ be a $\P$-uniform family of circuits. There is a single-valued $\mathsf{F}\S_2\P$ algorithm $A$ with one bit of advice such that for infinitely many $r \in \N$, letting $n = 2^r$, $A(1^n)$ outputs $y_n \in \bs{n+1} \setminus \Range(C_n)$.
\end{corollary}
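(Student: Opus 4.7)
The plan is to derive this corollary by chaining together three earlier results: Corollary 5.6 (which gives us a single-valued $\mathsf{F}\S_2\P$ algorithm with one-bit advice for $\Avoid$ on $\P$-uniform circuits with doubling stretch, at infinitely many power-of-two input lengths), Lemma 5.5 (Korten's stretch reduction), and Theorem 2.6 (closure of single-valued $\mathsf{F}\S_2\P$ under $\FP^\NP$ post-processing).

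First, I would apply the polynomial-time algorithm $A$ of Lemma 5.5 uniformly to the input family. Since $\{C_n \colon \bs{n}\to\bs{n+1}\}_{n\in\N}$ is $\P$-uniform and $A$ runs in polynomial time, the family $\{D_n \coloneqq A(C_n) \colon \bs{n}\to\bs{2n}\}_{n\in\N}$ is also $\P$-uniform. Applying Corollary 5.6 to $\{D_n\}$, we obtain a single-valued $\mathsf{F}\S_2\P$ algorithm $A'$ with one bit of advice such that for infinitely many $r\in\N$, setting $n=2^r$, $A'(1^n)$ outputs some canonical $y_n \in \bs{2n}\setminus \Range(D_n)$.

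Next, I would compose $A'$ with the $\FP^\NP$ post-processing algorithm $B$ of Lemma 5.5. Define the function $f(1^n) \coloneqq B(C_n, A'(1^n))$; by the guarantee of Lemma 5.5, whenever $A'(1^n)$ succeeds on a given input length and returns $y_n \in \bs{2n}\setminus\Range(D_n)$, we have $f(1^n) \in \bs{n+1}\setminus \Range(C_n)$. Since $C_n$ can be generated in polynomial time from $1^n$ (by $\P$-uniformity), $f$ is the composition of a single-valued $\mathsf{F}\S_2\P$ algorithm with an $\FP^\NP$ algorithm; by Theorem 2.6, $f$ itself admits a single-valued $\mathsf{F}\S_2\P$ algorithm. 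The one bit of advice carries through unchanged: on input lengths where $A'$ is guaranteed to output the canonical $y_n$, the composed algorithm deterministically produces $B(C_n, y_n) \in \bs{n+1}\setminus \Range(C_n)$; on the remaining lengths, the advice bit causes the algorithm to output $\bot$.

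I don't anticipate a substantive obstacle here, since each ingredient has been established. The only bookkeeping point to be careful about is verifying that Theorem 2.6 still applies in the presence of one bit of advice (which it does, since advice can simply be folded into the input of the composed verifier), and ensuring that the output-length requirement of single-valued algorithms (fixed output length as a function of $|x|$) is met, which it is because $|B(C_n, y_n)| = n+1$ depends only on $n$.
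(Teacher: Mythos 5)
Your proposal is correct and matches the paper's own (very terse) justification, which simply cites the combination of \autoref{lemma:stretch-1-Korten} and \autoref{theo:S-2-and-PNP} applied on top of \autoref{cor:S-2-algo-input-power-2}; your write-up fills in the same chain of steps in the natural way, including the observation that $\{A(C_n)\}$ remains $\P$-uniform and that the advice bit passes through the composition.
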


The following corollary follows from~\autoref{fact: range avoidance for TT implies circuit lower bounds} and~\autoref{cor:S-2-algo-input-power-2-stretch-1}. %

\begin{corollary}
	$\S_2\E/_1 \not\subset \SIZE[2^n/n]$.  %
\end{corollary}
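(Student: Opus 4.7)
The plan is to combine~\autoref{cor:S-2-algo-input-power-2-stretch-1} with~\autoref{fact: range avoidance for TT implies circuit lower bounds}, using the truth table generator $\TT$ as the $\P$-uniform family of stretch-$1$ circuits. Concretely, for each power of two $N = 2^r$ with $r$ large enough that $L_{r,2^r/r} < 2^r$, we build a circuit $C_N \colon \bs{N-1} \to \bs{N}$ that ignores its last $N - 1 - L_{r,2^r/r}$ input bits and applies $\TT_{r,2^r/r}$ to the first $L_{r,2^r/r}$ input bits; for $N$ that is not a power of two (or for which $r$ is too small), we let $C_N$ be any fixed $\P$-uniform stretch-$1$ circuit. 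Since $\TT_{r,2^r/r}$ is computable in $\poly(2^r) = \poly(N)$ time, the family $\{C_N\}_{N \in \N}$ is $\P$-uniform, and by construction $\Range(C_N) = \Range(\TT_{r,2^r/r})$ for every $N = 2^r$ with $r$ sufficiently large.

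Next, I would invoke~\autoref{cor:S-2-algo-input-power-2-stretch-1} on this family to obtain a single-valued $\mathsf{F}\S_2\P$ algorithm $A$ with one bit of advice such that for infinitely many $r \in \N$, the input $1^N$ (with $N = 2^r$) makes $A$ output some $y_N \in \bs{N} \setminus \Range(C_N)$. By our choice of $C_N$, this output satisfies $y_N \in \bs{2^r} \setminus \Range(\TT_{r,2^r/r})$ for infinitely many $r$, which is exactly the hypothesis of~\autoref{fact: range avoidance for TT implies circuit lower bounds} with $n \coloneqq r$, $s(n) = 2^n/n$, and $\alpha(n) = 1$. Applying that fact then yields $\S_2\E/_1 \not\subset \SIZE[2^n/n]$, as required.

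There is no substantive obstacle remaining; both ingredients have already been established in the preceding sections, and the main step of this corollary is the simple bookkeeping argument that casts $\TT_{r,2^r/r}$ (whose stretch is much larger than $1$) as an instance of range avoidance with stretch exactly $1$, so that~\autoref{cor:S-2-algo-input-power-2-stretch-1} applies verbatim. The only mild care needed is to ensure that the padding of $\TT_{r,2^r/r}$ is $\P$-uniform and that the ``bad'' input lengths (non-powers of two, or powers of two that are too small) can be filled in with an arbitrary $\P$-uniform circuit without affecting the infinitely-often guarantee, both of which are immediate.
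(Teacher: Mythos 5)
Your overall route is the same as the paper's (which gives no details for this corollary beyond citing \autoref{fact: range avoidance for TT implies circuit lower bounds} and \autoref{cor:S-2-algo-input-power-2-stretch-1}), but your bookkeeping has an off-by-one that makes the composition fail as written. \autoref{cor:S-2-algo-input-power-2-stretch-1} concerns a family indexed by the \emph{input} length of the circuits, $\{C_m\colon\bs{m}\to\bs{m+1}\}_{m\in\N}$, and its guarantee is that $A(1^m)$ succeeds for infinitely many $m$ of the form $m=2^r$. Your interesting circuits, the ones embedding $\TT_{r,2^r/r}$, sit at input length $N-1=2^r-1$, which is never a power of two; the input lengths $2^r$ at which the corollary actually promises success all carry the ``arbitrary fixed'' dummy circuits you inserted at the non-power-of-two values of $N$. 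So invoking the corollary verbatim yields non-outputs only of the dummy circuits and says nothing about $\TT$. This is not merely notational: with stretch exactly one, the output length $2^r$ of $\TT_{r,2^r/r}$ forces the input length to be $2^r-1$ if you pad only on the input side, so realigning with the powers of two genuinely requires an extra (small) idea.

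The repair is easy. Place the interesting instance at input length exactly $2^r$ by padding the \emph{output} as well as the input: define $C_{2^r}(x) \coloneqq \TT_{r,2^r/r}(x_{[0,L)}) \circ x_{2^r-1}$, where $L = L_{r,2^r/r} < 2^r$, so the appended bit ranges over $\bs{}$ independently of the $\TT$-part and hence $\Range(C_{2^r}) = \Range(\TT_{r,2^r/r}) \times \bs{}$. Then any $y\in\bs{2^r+1}\setminus\Range(C_{2^r})$ satisfies $y_{[0,2^r)}\in\bs{2^r}\setminus\Range(\TT_{r,2^r/r})$, and \autoref{cor:S-2-algo-input-power-2-stretch-1} applies at exactly the power-of-two input lengths it guarantees. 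The truncation to the first $2^r$ bits is trivial $\FP$ post-processing handled by \autoref{theo:S-2-and-PNP}, after which \autoref{fact: range avoidance for TT implies circuit lower bounds} with $s(n)=2^n/n$ and one advice bit gives $\S_2\E/_1 \not\subset \SIZE[2^n/n]$. (Alternatively, one could note that the power-of-two restriction in \autoref{theo:S-2-algo-for-range-avoidance} is an artifact of the chosen sequence $n_i^{(\ell)}$ and rerun the construction at lengths $2^r-1$, but the output-padding fix uses the stated corollary as a black box.)
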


Finally, we also note that by letting $C_n$ be a universal Turing machine mapping $n$ bits to $n+1$ bits in $\poly(n)$ time, we have the following strong lower bounds for $\S_2\E/_1$ against non-uniform time complexity classes with maximum advice.

\begin{corollary}
	For every $\alpha(n) \ge \omega(1)$ and any constant $k \ge 1$, $\S_2\E/_1 \not\subset \mathsf{TIME}[2^{kn}]/_{2^n - \alpha(n)}$.
\end{corollary}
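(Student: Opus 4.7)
The plan is to derive the corollary by invoking \autoref{cor:S-2-algo-input-power-2-stretch-1} on a padded universal-TM circuit family whose range captures, for each $n$, the length-$n$ truth table of every language in $\mathsf{TIME}[2^{kn}]/_{2^n - \alpha(n)}$. Fix a universal Turing machine $U$ so that any TM $M$ running in $2^{kn}$ time on $x \in \bs{n}$ with advice $a$ is simulated by $U$ on $x$ with advice $\langle M \rangle \circ a$ in time $2^{k'n}$ for some $k' = k + O(1)$. Set $\alpha'(n) := \alpha(n)/2 = \omega(1)$ (assuming without loss of generality that $\alpha$ is polynomial-time computable) and $m(n) := 2^n - \alpha'(n)$, and let $C_n \colon \bs{m(n)} \to \bs{2^n}$ be the $\P$-uniform family which on input $u$ outputs the vector $(U(u,x))_{x \in \bs{n}}$. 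Then $|C_n| = 2^n \cdot \poly(2^{k'n}) = \poly(m(n))$, and for every $L' \in \mathsf{TIME}[2^{kn}]/_{2^n - \alpha(n)}$ computed by some TM $M_{L'}$ with advice $\{a_n\}$, the concatenation $\langle M_{L'} \rangle \circ a_n$ fits into $m(n)$ bits for all large $n$ (using $\alpha(n) - \alpha'(n) = \alpha(n)/2 \to \infty$), whence $L'|_n \in \Range(C_n)$.

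To match the hypotheses of \autoref{cor:S-2-algo-input-power-2-stretch-1} (stretch $+1$ and input length a power of $2$), I first project $C_n$ to $C'_n \colon \bs{m(n)} \to \bs{m(n)+1}$ taking its first $m(n)+1$ output bits; any $y_0 \in \bs{m(n)+1} \setminus \Range(C'_n)$ then extends with zeros to $y_0 \circ 0^{\alpha'(n)-1} \in \bs{2^n} \setminus \Range(C_n)$. Next I pad the input length up to $2^n$ by defining $\WT{C}_n \colon \bs{2^n} \to \bs{2^n+1}$ via $\WT{C}_n(a \circ b) := C'_n(a) \circ b$ for $a \in \bs{m(n)}$ and $b \in \bs{\alpha'(n)}$. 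Writing $y = y_0 \circ y_1$ with $|y_0| = m(n)+1$, one checks immediately that $y \notin \Range(\WT{C}_n)$ iff $y_0 \notin \Range(C'_n)$. Viewing $\{\WT{C}_n\}$ as a $\P$-uniform family $\{D_{m'}\}_{m' \in \N}$ with $D_{2^n} := \WT{C}_n$ and trivial circuits at other input lengths, \autoref{cor:S-2-algo-input-power-2-stretch-1} yields a single-valued $\mathsf{F}\S_2\P$ algorithm $A$ with one advice bit such that, on input $1^{2^n}$, $A$ outputs a non-output of $\WT{C}_n$ for infinitely many $n$; extracting the first $m(n)+1$ bits and padding with zeros gives $\tilde{y}_n \in \bs{2^n} \setminus \Range(C_n)$.

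Finally, I define the language $L$ whose $n$-th slice is $\tilde{y}_n$ at the input lengths where $A$ succeeds and is $0^{2^n}$ otherwise. The one bit of advice for $L$ at input length $n$ is taken to be $A$'s advice bit at input length $2^n$, signalling whether $A$ succeeds there. The $\S_2\E/_1$ verifier on $x \in \bs{n}$ simulates $A(1^{2^n})$ using this advice bit, extracts $\tilde{y}_n$, and outputs its $x$-th coordinate in time $\poly(2^n) = 2^{O(n)}$. Since $L|_n \notin \Range(C_n)$ for infinitely many $n$ but every $L' \in \mathsf{TIME}[2^{kn}]/_{2^n - \alpha(n)}$ satisfies $L'|_n \in \Range(C_n)$ for all large $n$, we obtain $L \ne L'$. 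The main obstacle I foresee is purely bookkeeping: verifying $\P$-uniformity of $\{D_{m'}\}$ (which amounts to recovering $n$ and $\alpha'(n)$ from $m'$ in $\poly(m')$ time) and correctly tracking the advice bit through the stretch and padding reductions; the underlying content is a direct application of \autoref{cor:S-2-algo-input-power-2-stretch-1}.
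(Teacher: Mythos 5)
Your proposal is correct and matches the paper's approach: the paper proves this corollary in one sentence by applying its range-avoidance algorithm to a $\P$-uniform ``universal Turing machine'' truth-table generator, which is exactly your construction. You have simply filled in the bookkeeping (adjusting the stretch to $+1$, padding the input length to a power of two, and threading the advice bit) that the paper leaves implicit.
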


From~\autoref{rem:Kor-prop-s2-rel} and noting that the derandomization of $\S_2\BPP$ verifier $V$ to $\S_2\P$ verifier $A_V$ also relativizes, we can see that all the results above relativize as well.

\subsection{Infinitely Often Single-Valued \texorpdfstring{$\mathsf{F}\S_2\P$}{FS2P} Algorithm for Arbitrary Input Range Avoidance}\label{sec:S2E:range-avoid-algo}

\autoref{theo:S-2-algo-for-range-avoidance} and~\autoref{cor:S-2-algo-input-power-2-stretch-1} only give single-valued $\mathsf{F}\S_2\P$ algorithms for solving range avoidance for $\P$-uniform families of circuits. Applying Korten's reduction~\cite{Korten21}, we show that it can be strengthened into a single-valued infinitely-often $\mathsf{F}\S_2\P$ algorithm solving range avoidance given an arbitrary input circuit.

We need the following reduction from~\cite{Korten21}.

\begin{lemma}[{\cite[Theorem~7]{Korten21}}]\label{lemma:Korten-PNP}
	There is an $\FP^\NP$ algorithm $A_{\Kor}$ satisfying the following:
	\begin{enumerate}
		\item $A_{\Kor}$ takes an $s$-size circuit $C \colon \bs{n} \to \bs{n+1}$ and a truth table $f\in \bs{2^{m}}$ such that $2^m \ge s^3$ and $n \le s$ as input.
		
		\item If the circuit complexity of $f$ is at least $c_1 \cdot m \cdot s$ for a sufficiently large universal constant $c_1 \in \N$, then $A_{\Kor}(C,f)$ outputs a string $y \in \bs{n+1} \setminus \Range(C)$.
	\end{enumerate}
\end{lemma}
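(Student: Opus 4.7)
The plan is to chain Lemma~\ref{lemma:stretch-1-Korten} with Korten's reduction (Lemma~\ref{lemma: Korten's reduction}), using the hard truth table $f$ as a certificate that the GGM generator induced by the doubling-stretch version of $C$ misses $f$. First, I would apply the polynomial-time algorithm $A$ from Lemma~\ref{lemma:stretch-1-Korten} to turn $C\colon\bs{n}\to\bs{n+1}$ into a doubling-stretch circuit $D\colon\bs{n}\to\bs{2n}$ of size $|D|\le O(s)$. Setting $T \coloneqq 2^m$, the conditions $2^m\ge s^3 \ge n^3 \ge 4n$ ensure that the GGM tree $\GGM_T[D]\colon\bs{n}\to\bs{T}$ is well-defined.

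The crux will be to verify that $f\notin\Range(\GGM_T[D])$. By Lemma~\ref{lemma: GGMEval}, walking the GGM tree from the root down to the leaf containing the $i$-th output bit yields a circuit of size $O(|D|\cdot \log T) = O(s\cdot m)$ that, with $D$ and any fixed seed $x\in\bs{n}$ hardcoded, outputs $(\GGM_T[D](x))_i$ on input $i\in\bs{m}$. Hence, for every $x\in\bs{n}$, the length-$T$ string $\GGM_T[D](x)$, regarded as an $m$-variable Boolean function (recall $T=2^m$), has circuit complexity at most $\gamma\cdot s\cdot m$ for some universal constant $\gamma$. Choosing $c_1 > \gamma$, the assumption that $f$ has circuit complexity at least $c_1\cdot m\cdot s$ strictly exceeds this bound and thereby rules out $f\in\Range(\GGM_T[D])$.

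Once this non-range condition is established, I would invoke $\Kor(D,f)$ from Lemma~\ref{lemma: Korten's reduction} to obtain some $w\in\bs{2n}\setminus\Range(D)$ in $\FP^\NP$, and then apply the $\FP^\NP$ post-processor $B$ from Lemma~\ref{lemma:stretch-1-Korten} on $(C,w)$ to recover the desired $y\in\bs{n+1}\setminus\Range(C)$. The overall pipeline lies in $\FP^\NP$, since the initial circuit reduction is polynomial time and two consecutive $\FP^\NP$ stages compose into a single $\FP^\NP$ computation. The main obstacle I anticipate is isolating a genuinely linear size blowup $|D|=O(s)$ from Lemma~\ref{lemma:stretch-1-Korten}; a polynomial blowup $|D|=O(s^c)$ would degrade the hardness threshold to $c_1\cdot m\cdot s^c$ and break the clean linear form promised by the lemma. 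Beyond this, the only routine bookkeeping is checking that $T = 2^m$ satisfies the $T\ge 4n$ precondition for the GGM tree, which is immediate from $2^m\ge s^3\ge n^3$.
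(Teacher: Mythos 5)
First, a framing point: the paper never proves this lemma --- it is imported verbatim from Korten's paper (his Theorem~7), so there is no internal proof to compare against. Your pipeline --- amplify the stretch via \autoref{lemma:stretch-1-Korten}, certify $f \notin \Range(\GGM_T[D])$ by comparing the circuit complexity of $f$ against the size of the single-bit evaluator from \autoref{lemma: GGMEval}, run $\Kor(D,f)$ from \autoref{lemma: Korten's reduction}, and post-process with $B$ --- is exactly the intended argument, and the composition of a polynomial-time stage with two $\FP^\NP$ stages does stay in $\FP^\NP$.

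The one genuine gap is the size bound $|D| \le O(s)$, which you assert and then (rightly) flag as the weak point. \autoref{lemma:stretch-1-Korten} as stated gives no size bound at all, and the construction behind it is an $n$-fold composition of $C$ (apply $C$ to the current $n$-bit state, peel off one output bit, repeat $n$ times), so $|D| = \Theta(n s)$, not $O(s)$. Consequently the evaluator for a single bit of $\GGM_T[D](x)$ has size $\Theta(|D| \cdot \log(T/n)) = \Theta(n s m)$: each of the $\log(T/n)$ levels of the tree walk needs the \emph{entire} $n$-bit half of $D(v)$ at the current node, so the factor of $n$ cannot be amortized away. Your argument therefore establishes the lemma only with the threshold $c_1 \cdot m \cdot n \cdot s \le c_1 \cdot m \cdot s^2$ in place of $c_1 \cdot m \cdot s$. (A second, minor, bookkeeping issue: you should build the evaluator circuit directly from the tree-walk structure rather than generically converting the $\WT{O}(|D|\cdot\log T)$-time algorithm of \autoref{lemma: GGMEval} into a circuit, to avoid extra $\polylog$ factors eating into the same budget.) The quantitative loss is harmless for everything the paper does with the lemma: in its only application, $m = \lceil \log s^3\rceil$ and the supplied truth table has hardness $2^m/m \ge s^3/m$, which dwarfs $c_1 m s^2$ for large $s$. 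So your proof yields a version of the lemma that suffices for the paper, but not the statement with the $c_1 \cdot m \cdot s$ threshold as written; closing that gap requires either a tighter stretch-amplification or simply restating the lemma with the weaker (and still adequate) threshold.
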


\begin{theorem}
	There is a single-valued $\mathsf{F}\S_2\P$ algorithm $A$ with one bit of advice such that for infinitely many $s \in \N$, for all $s$-size circuits $C \colon \bs{n} \to \bs{n+1}$ where $n \le s$, $A(C)$ outputs $y_C \in \bs{n+1} \setminus \Range(C)$.
\end{theorem}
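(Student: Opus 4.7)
The plan is to reduce the arbitrary-input case to the $\P$-uniform case already handled by \autoref{cor:S-2-algo-input-power-2-stretch-1}, with Korten's $\FP^{\NP}$ reduction from \autoref{lemma:Korten-PNP} serving as glue. First I would construct a $\P$-uniform family $\{D_m \colon \bs{2^m} \to \bs{2^m+1}\}_m$ whose range-avoidance solutions are hard truth tables, for example by setting $D_m(x) := \TT_{m,\lfloor 2^m/m\rfloor}(x_{[0,L)}) \circ \bigoplus_{i=L}^{2^m-1} x_i$, where $L = L_{m,\lfloor 2^m/m\rfloor}<2^m$ for large $m$; any $y \in \bs{2^m+1}\setminus\Range(D_m)$ has the form $y' \circ b$ with $y' \in \bs{2^m}\setminus\Range(\TT_{m,\lfloor 2^m/m\rfloor})$, so its $2^m$-bit prefix has circuit complexity at least $2^m/m$. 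Applying \autoref{cor:S-2-algo-input-power-2-stretch-1} to $\{D_m\}$ yields a single-valued $\mathsf{F}\S_2\P/_1$ algorithm $A_{\TT}$ that, for infinitely many $m$ of the form $m = 2^r$, outputs on input $1^{2^m}$ a canonical truth table $f_m \in \bs{2^m}$ of circuit complexity at least $2^m/m$.

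Given an input circuit $C \colon \bs{n}\to\bs{n+1}$ of size $s$ with $n \le s$, the algorithm $A(C)$ then proceeds as follows: compute $m = m(s) := \lceil 4 \log s \rceil$, invoke $A_{\TT}(1^{2^m})$ to obtain $f_m$, and output $A_{\Kor}(C, f_m)$ via \autoref{lemma:Korten-PNP}. A routine check verifies the two hypotheses of \autoref{lemma:Korten-PNP} for all large enough $s$: one has $2^m \ge s^4 \ge s^3$, and $2^m/m \ge s^4/(4\log s) \ge c_1 \cdot m \cdot s$, where $c_1$ is the absolute constant from \autoref{lemma:Korten-PNP}. Hence whenever $A_{\TT}$ genuinely produces a hard $f_m$, the output $A_{\Kor}(C,f_m)$ is a valid non-range element of $C$. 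Since $A_{\Kor}$ is an $\FP^{\NP}$ procedure and $A_{\TT}$ is single-valued $\mathsf{F}\S_2\P/_1$, \autoref{theo:S-2-and-PNP} guarantees that the overall composition is a single-valued $\mathsf{F}\S_2\P/_1$ algorithm.

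For the advice bit and the infinitely-often guarantee: on each input length (which determines $s$ through the canonical encoding of circuits), the single advice bit is set to $1$ exactly when $m(s)$ is one of the good values on which $A_{\TT}$ is guaranteed to succeed and $s$ is large enough for the parameter inequalities above; otherwise $A$ simply outputs $\bot$. Because $A_{\TT}$ succeeds on infinitely many $m$ of the form $2^r$, and every sufficiently large such $m$ lies in the image of $s \mapsto m(s)$ (take any integer $s \in (2^{(m-1)/4},\, 2^{m/4}]$, which is nonempty for large $m$), infinitely many values of $s$ are good, which yields the desired conclusion. I expect the only real obstacle to be the bookkeeping between input lengths, circuit sizes, and the map $s \mapsto m(s)$: one must ensure the advice bit depends only on the input length, that the two parameter inequalities of \autoref{lemma:Korten-PNP} are met uniformly on the designated $s$'s, and that the ``good $m$'s'' pulled back through $m(\cdot)$ actually give infinitely many good $s$'s rather than clustering.
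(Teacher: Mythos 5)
Your proposal is correct and follows essentially the same route as the paper: obtain canonical hard truth tables from \autoref{cor:S-2-algo-input-power-2-stretch-1} applied to a (padded) truth-table generator, feed them into Korten's reduction (\autoref{lemma:Korten-PNP}) with $2^m = \poly(s)$, and invoke \autoref{theo:S-2-and-PNP} to absorb the $\FP^{\NP}$ post-processing. Your version is in fact slightly more careful than the paper's sketch about how $\TT_{m,\lfloor 2^m/m\rfloor}$ is padded into a one-bit-stretch $\P$-uniform family and about why infinitely many good $m$'s pull back to infinitely many good $s$'s.
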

\begin{proof}[Proof Sketch]
	By~\autoref{cor:S-2-algo-input-power-2-stretch-1}, there is a single-valued $\mathsf{F}\S_2\P$ algorithm $W$ with one bit of advice such that for infinitely many $n \in \N$, $W(1^{2^n})$ outputs a string $f_n \in \bs{2^n}$ with $\SIZE(f_n) \ge 2^n/n$.
	
	Now we construct our single-valued $\mathsf{F}\S_2\P$ algorithm $A$ with one bit of advice as follows: given an $s$-size circuit $C \colon \bs{n} \to \bs{n+1}$ with $n \le s$ as input; let $m = \lceil \log s^3 \rceil$ and $f_m = W(1^{2^m})$; output $A_{\Kor}(C,f_m)$. It follows from~\autoref{theo:S-2-and-PNP} that $A$ is a single-valued $\mathsf{F}\S_2\P$ algorithm with one bit of advice (the advice of $A$ is given to $W$).
\end{proof}

Finally, $\S_2\P \subseteq \ZPP^\NP$~\cite{Cai01a} implies that every single-valued $\mathsf{F}\S_2\P$ algorithm can also be implemented as a single-valued $\mathsf{F}\ZPP^\NP$ algorithm with polynomial overhead. Therefore, the above theorem also implies an infinitely often $\mathsf{F}\ZPP^\NP$ algorithm for range avoidance.

\begin{reminder}{\autoref{cor:ZPP-P-stretch-1}}
	There is a single-valued $\mathsf{F}\ZPP^{\NP}$ algorithm $A$ with one bit of advice such that for infinitely many $s \in \N$, for all $s$-size circuits $C \colon \bs{n} \to \bs{n+1}$ where $n \le s$, $A(C)$ outputs $y_C \in \bs{n+1} \setminus \Range(C)$. That is, for all those $s$, there is a string $y_C \in \bs{n+1} \setminus \Range(C)$ such that $A(C)$ either outputs $y_C$ or $\bot$, and the probability (over the inner randomness of $A$) that $A(C)$ outputs $y_C$ is at least $2/3$.
\end{reminder}

\section*{Acknowledgments}
\ifnum\Anonymity=0
Part of the work was done when all authors were participating in the Meta-Complexity program at the Simons Institute. Lijie Chen is supported by a Miller Research Fellowship. Shuichi Hirahara is supported by  JST, PRESTO Grant Number JPMJPR2024, Japan. Hanlin Ren received support from DIMACS through grant number CCF-1836666 from the National Science Foundation. We thank Oliver Korten, Zhenjian Lu, Igor C.~Oliveira, Rahul Santhanam, Roei Tell, and Ryan Williams for helpful discussions. We also want to thank Jiatu Li, Igor C.~Oliveira, and Roei Tell for comments on an early draft of the paper.
\else
Anonymous acknowledgements.
\fi

{\small \bibliography{main}}

\newpage
\appendix

\newpage
\listoffixmes

\end{document}